\DeclareFontFamily{U}{dice3d}{}
\DeclareFontShape{U}{dice3d}{m}{n}{<-> s*[4] dice3d}{}
\newcommand{\NashE}{\textsc{Nash}\xspace}
\newcommand{\NFCE}{\textsc{NFCE}\xspace}
\newcommand{\NFCCE}{\textsc{NFCCE}\xspace}
\newcommand{\EFCE}{\textsc{EFCE}\xspace}
\newcommand{\EFCCE}{\textsc{EFCCE}\xspace}
\newcommand{\AFCE}{\textsc{AFCE}\xspace}
\newcommand{\AFCCE}{\textsc{AFCCE}\xspace}
\newcommand{\AFCCCE}{\textsc{AF(C)CE}\xspace}
\newcommand{\ptime}{{\sc PTime}\xspace}
\newcommand{\ppad}{{\sc PPAD}\xspace}
\newcommand{\np}{{\sc NP}\xspace}
\newcommand{\NP}{\np}
\newcommand{\conp}{co{\sc NP}\xspace}
\newcommand{\pspace}{{\sc PSpace}\xspace}
\newcommand{\fixp}{{\sc FixP}\xspace}
\newcommand{\existreal}{\ensuremath{\exists\mathbb{R}}\xspace}
\newcommand{\apx}{{\sc APX}\xspace}
\newcommand{\exptime}{{\sc ExpTime}\xspace}
\newcommand{\Threshold}{\textsc{Threshold}\xspace}
\newcommand{\Any}{\textsc{Any}\xspace}
\newcommand{\GPhi}[0]{G_\Phi}
\newcommand{\Gphi}[0]{G_\varphi}
\newcommand{\astree}{\Gamma}
\newcommand{\univP}[1]{\ensuremath{\forall_{#1}}}
\newcommand{\assignP}[0]{\ensuremath{\eta}}
\newcommand{\formP}[0]{\ensuremath{f}}
\newcommand{\Puniv}[1]{Player~\univP{#1}\xspace}
\newcommand{\Passign}[0]{Player~\assignP{}\xspace}
\newcommand{\Pform}[0]{Player~\formP{}\xspace}
\newcommand{\Strategies}{\Sigma} 
\renewcommand{\phi}{\varphi}
\renewcommand{\epsilon}{\varepsilon}
\newcommand{\node}{v}
\newcommand{\Actions}{\mathcal{A}}
\newcommand{\ActionsP}[1]{\Actions_{#1}}
\newcommand{\actions}[1]{A(#1)}
\newcommand{\sw}{\omega}
\newcommand{\InfSet}[1]{\mathcal{I}_{#1}}
\newcommand{\bsigma}{{\boldsymbol{\sigma}}}
\newcommand{\CorrPlan}{\mu}
\newcommand{\profile}{\sigma}
\newcommand{\hist}[1]{\bm{h}(#1)}
\newcommand{\histP}[2]{h_{#2}(#1)}
\newcommand{\HistP}[1]{\mathcal{H}_{#1}}
\newcommand{\replaceH}[3]{\patternN{#3}[#2]}
\newcommand{\pattern}{\rho}
\newcommand{\patternN}[1]{\pattern(#1)}
\newcommand{\Relevant}{\mathcal{R}^{\mathsf{EFCE}}}
\newcommand{\correp}{\mu} 
\newcommand{\hexa}{\Xi^c}  
\newcommand{\polyCorr}{\hexa}
\newcommand{\Leaves}{\mathcal{L}}
\newcommand{\payoffP}[2]{u_{#2}(#1)}
\newcommand{\Reach}[1]{P_C(#1)}
\newcommand{\bestpayoff}[2]{\mathit{bp}^{#1}_{#2}}
\newcommand{\btau}{{\boldsymbol{\tau}}}
\newcommand{\bprofile}{\btau}  
\newcommand{\Node}{\mathcal{V}}
\newcommand{\payoff}{u}
\newcommand{\SigmaP}[1]{%
 \tikz[baseline=-1mm]{\node[inner sep=0pt,label={[label distance=-4.5pt]above:{\scriptsize $\rightharpoonup$}}] (char) {$\Sigma_{#1}$};}
}
\newcommand{\confPName}[2]{\mathds{1}_{#2}}
\newcommand{\confP}[3]{\mathds{1}_{#2}[#3]}
\newcommand{\confName}[1]{\mathds{1}_{#1}}
\newcommand{\conf}[2]{\mathds{1}_{#1}[#2]}
\newcommand{\confIn}[2]{\mathds{1}_{#1}[#2]}
\newcommand{\bh}{{\boldsymbol{h}}}
\newcommand{\ExpOmega}[2]{\mathbb{E}_{#1}(#2)}
\newcommand{\ExpBasic}[2]{\mathbb{E}_{#2}(U_{#1})}
\newcommand{\ExpNFCE}[3]{\mathbb{E}_{#2}(U_{#1} \mid #3)}
\newcommand{\ExpNFCEDeviation}[4]{\mathbb{E}_{#2}(U_{#1} \mid #3 \leadsto #4)}
\newcommand{\ExpNFCCE}[2]{\ExpBasic{#1}{#2}}
\newcommand{\ExpNFCCEDeviation}[3]{\mathbb{E}_{#2}(U_{#1} \mid \leadsto #3)}
\newcommand{\ExpEFCE}[4]{\mathbb{E}_{#2}(U_{#1} \mid #3, #4)}
\newcommand{\ExpEFCEDeviation}[5]{\mathbb{E}_{#2}(U_{#1} \mid (#3, #4) \leadsto #5)}
\newcommand{\ExpEFCCE}[3]{\mathbb{E}_{#2}(U_{#1} \mid #3)}
\newcommand{\ExpEFCCEDeviation}[4]{\mathbb{E}_{#2}(U_{#1} \mid #3 \leadsto #4)}
\newcommand{\proba}{\mathbb{P}}
\newcommand{\expect}{\mathbb{E}}
\newcommand{\CondEFCE}[5]{\proba_{#1}(#4,#5 \mid #2, #3)}
\newcommand{\CondEFCCE}[4]{\proba_{#1}(#3,#4 \mid #2)}
\newcommand{\CondNFCE}[4]{\proba_{#1}(#3,#4 \mid #2)}
\newcommand{\CondEFCEDeviation}[6]{\proba_{#1}(#5,#6 \mid (#2, #3) \leadsto #4)}
\newcommand{\CondEFCCEDeviation}[5]{\proba_{#1}(#4,#5 \mid #2 \leadsto #3)}
\newcommand{\CondNFCCEDeviation}[4]{\proba_{#1}(#3,#4 \mid \leadsto #2)}
\newcommand{\CondNFCEDeviation}[5]{\proba_{#1}(#4,#5 \mid #2 \leadsto #3)}
\newcommand{\RelevantH}{\mathcal{RH}}
\newcommand{\RelevantDEFCE}{\mathcal{RD}^{\mathsf{EFCE}}}
\newcommand{\RelevantEFCCE}{\mathcal{R}^{\mathsf{EFCCE}}}
\newcommand{\RelevantDEFCCE}{\mathcal{RD}^{\mathsf{EFCCE}}}
\newcommand{\RelevantNFCCE}{\mathcal{R}^{\mathsf{NFCCE}}}
\newcommand{\RelevantAFCE}{\mathcal{R}^{\mathsf{AFCE}}}
\newcommand{\RelevantAFCCE}{\mathcal{R}^{\mathsf{AFCCE}}}
\newcommand{\System}[1]{\mathit{Sys}(#1)}
\newcommand{\supp}{\mathit{supp}}
\newenvironment{proofclaim}{%
  \par\noindent{Proof of Claim.}\enspace\ignorespaces
}{%
  \hfill$\clubsuit$\par\addvspace{.6pc plus .2pc minus .1pc}
}
\NewDocumentCommand{\eqreftag}{o m}{%
  \ifx&#1&%
    \eqref{#2}%
  \else
    (EqZ($#1$))%
  \fi
}
\NewDocumentCommand{\eqrefCA}{o m}{%
  \ifx&#1&%
    \eqref{#2}%
  \else
    (C1($#1$))%
  \fi
}
\NewDocumentCommand{\eqrefCB}{o m}{%
  \ifx&#1&%
    \eqref{#2}%
  \else
    (C2($#1$))%
  \fi
}
\NewDocumentCommand{\eqrefCC}{o m}{%
  \ifx&#1&%
    \eqref{#2}%
  \else
    (C3($#1$))%
  \fi
}
\newcommand{\Xvar}[1]{p_{#1}}
\newcommand{\UvarA}[2]{\mathit{e}_{#1,#2}}
\newcommand{\Uvar}[1]{\mathit{e}_{#1}}
\newcommand{\UvarP}[1]{\mathit{e}^{#1}}
\newcommand{\Vvar}[3]{\mathit{d}^{#1,\neg #2}_{#3}}
\newcommand{\VvarA}[4]{\mathit{d}^{#1,\neg #2}_{#3,#4}}
\newcommand{\VIvar}[2]{\mathit{d}^{#1}_{#2}}
\newcommand{\VIvarA}[3]{\mathit{d}^{#1}_{(#2,#3)}}
\newcommand{\VEvar}[1]{\mathit{d}^i_{#1}}
\newcommand{\VAvarA}[3]{\mathit{d}^{\neg #2}_{(#1,#3)}}
\newcommand{\VACvar}[2]{\mathit{d}_{(#1,#2)}}
\newcommand{\DenomEFCE}{D_\CorrPlan(I,a)}
\newcommand{\DenomEFCCE}{D_\CorrPlan(I)}
\newcommand{\Hist}{\bm{\mathcal{H}}}
\newcommand{\replaceHI}[2]{\hist{#2}[#1]}
\newcommand{\replaceHEpsilon}[2]{\hist{#2}[\varepsilon]_{#1}}
\newcommand{\replaceHA}[3]{\hist{#3}[#1 \rightarrow #2]}
\newcommand{\replaceHAC}[2]{\hist{#2}[\cancel{#1}]}
\newcommand{\InfSets}{\mathcal{I}}
\newcommand{\action}{{a}}
\newcommand{\leaf}{\ell}
\newcommand{\distrib}{\mathbb{D}}
\newcommand{\Distrib}{\distrib_\bprofile}
\newcommand{\bprofileSet}{\mathbf{T}}  
\newcommand{\Rat}[0]{\mathbb{Q}}
\newcommand{\etr}{\exists\mathbb{R}}
\newcommand{\ExpP}[2]{\mathbb{E}_{#1}(#2)}
\newcommand{\Bproba}[2]{\mathbb{P}_{#1}^{#2}}
\newcommand{\bproba}[0]{\Bproba{\bprofile}{}}
\renewcommand{\astree}[0]{\ensuremath{\mathbb{B}^n \times T}\xspace}
\definecolor{blueopa}{RGB}{170, 201, 243}
\definecolor{ligreen}{RGB}{0,153,0}
\tikzset{choice/.style={circle, draw, minimum size = 12}}
\tikzset{guess/.style={rectangle,draw, inner sep = 4,minimum size=.55cm}}
\tikzset{check/.style = {diamond,draw, inner sep = 1,minimum size=.7cm}}
\tikzset{rand/.style = {regular polygon,regular polygon sides=3,draw, inner sep=0,minimum size =.8cm}}
\tikzset{form/.style = {regular polygon,regular polygon sides=5,draw, inner sep=0,minimum size =.65cm}}
\tikzset{end/.style= {}}
\newcommand{\gCompatibilityExample}[6]{

    \node[choice,anchor=center] (C#1) at (#2) {$x_#1$};
    
    \node[form,below left= .6cm and 1.3cm of C#1.center,anchor=center] (CFx#1) {$x_#1$};
    \node[form,below right= .6cm and 1.3cm of C#1.center,anchor=center] (CFnx#1) {$\overline{x_#1}$};

    \draw [-latex] (C#1) edge node[auto,swap] {\scriptsize{$x_#1$}} (CFx#1) ;
    \draw [-latex] (C#1) edge node[auto] {\scriptsize{$\overline{x_#1}$}} (CFnx#1);

    \node[end,below left= .8cm and .7cm of CFx#1.center, anchor=center] (GadUL) {$\scriptstyle{\bm{1}}$};
    \node[end,below right= .8cm and .7cm of CFx#1.center, anchor=center] (GadUR) {$\scriptstyle{\bm{0}}$};
    \draw[-latex] (CFx#1) edge node[auto,swap,inner sep=1pt] {$\scriptstyle{#3}$} (GadUL);
    \draw[-latex] (CFx#1) edge node[auto,inner sep=1pt] {$\scriptstyle{#4}$} (GadUR);

    \node[end,below left= .8cm and 0.7cm of CFnx#1.center, anchor=center] (GadUL) {$\scriptstyle{\bm{1}}$};
    \node[end,below right= .8cm and 0.7cm of CFnx#1.center, anchor=center] (GadUR) {$\scriptstyle{\bm{0}}$};
    \draw[-latex] (CFnx#1) edge node[auto,swap,inner sep=1pt] {$\scriptstyle{#5}$} (GadUL);
    \draw[-latex] (CFnx#1) edge node[auto,inner sep=1pt] {$\scriptstyle{#6}$} (GadUR);
    
    \draw[dashed,thick,opacity=0.1,fill=green] ([xshift=-2.6cm,yshift=-1.6cm]C#1.center) rectangle ([xshift=2.6cm,yshift=+.5cm]C#1.center);
    \node[above right= .2cm and 2.2cm of C#1.center,anchor=center,draw=none]{$C_{#1}$};
}
\newcommand{\gCompatibility}[2]{\gCompatibilityExample{#1}{#2}{t\not\ni\bar{x_#1}}{t\ni\bar{x_#1}}{t\not\ni x_#1}{t\ni x_#1}}
\newcommand{\gUncertainty}[2]{

    \node[choice,anchor=center] (Ux#1) at (#2) {$x_#1$};
    
    \node[guess,below left= .6cm and 1.3cm of Ux#1.center,anchor=center] (UGx#1) {$x_#1$};
    \node[guess,below right= .6cm and 1.3cm of Ux#1.center,anchor=center] (UGnx#1) {$\overline{x_#1}$};
    \draw [dashed] (UGx#1) edge node[above=-.08cm,pos=.3] {\scriptsize $\univP{#1}$} (UGnx#1);

    \draw [-latex] (Ux#1) edge node[auto,swap] {\scriptsize{$x_#1$}} (UGx#1) ;
    \draw [-latex] (Ux#1) edge node[auto] {\scriptsize{$\overline{x_1}$}} (UGnx#1);

    
    \node[end,below left= .8cm and .8cm of UGx#1.center, anchor=center] (GadL) {$\scriptstyle{\univP{#1}^+}$};
    \node[end,below = .8cm of UGx#1.center, anchor=center] (GadM) {$\scriptstyle{\bm{1}}$};
    \node[end,below right= .8cm and .8cm of UGx#1.center, anchor=center] (GadR) {$\scriptstyle{\univP{#1}^-}$};
    \draw[-latex] (UGx#1) edge node[auto,swap,inner sep=1pt] {$\scriptstyle{x_#1}$} (GadL);
    \draw[-latex] (UGx#1) edge node[auto,inner sep=1pt] {$\scriptstyle{\checkmark}$} (GadM);
    \draw[-latex] (UGx#1) edge node[auto,inner sep=1pt] {$\scriptstyle{\overline{x_#1}}$} (GadR);

    \node[end,below left= 0.8cm and 0.8cm of UGnx#1.center, anchor=center] (GadL) {$\scriptstyle{\univP{#1}^+}$};
    \node[end,below = .8cm of UGnx#1.center, anchor=center] (GadM) {$\scriptstyle{\bm{1}}$};
    \node[end,below right= 0.8cm and 0.8cm of UGnx#1.center, anchor=center] (GadR) {$\scriptstyle{\univP{#1}^-}$};
    \draw[-latex] (UGnx#1) edge node[auto,swap,inner sep=1pt] {$\scriptstyle{x_#1}$} (GadL);
    \draw[-latex] (UGnx#1) edge node[auto,inner sep=1pt] {$\scriptstyle{\checkmark}$} (GadM);
    \draw[-latex] (UGnx#1) edge node[auto,inner sep=1pt] {$\scriptstyle{\overline{x_#1}}$} (GadR);
    
    \draw[dashed,thick,opacity=0.1,fill=blue] ([xshift=-2.6cm,yshift=-1.6cm]Ux#1.center) rectangle ([xshift=2.6cm,yshift=+.5cm]Ux#1.center);
    \node[above right= .2cm and 2.2cm of Ux#1.center,anchor=center,draw=none]{$D_{#1}$};
}
\newcommand{\gKnowledge}[3]{

    \node[choice,anchor=center] (K#2x#1) at (#3) {$x_#1$};
    
    \node[guess,below left= .6cm and 1.3cm of K#2x#1.center,anchor=center] (K#2Gx#1) {$x_#1$};
    \node[guess,below right= .6cm and 1.3cm of K#2x#1.center,anchor=center] (K#2Gnx#1) {$\overline{x_#1}$};
    \draw [dashed] (K#2Gx#1) edge node[above=-.08cm,pos=.3] {\scriptsize $\univP{#2}$} (K#2Gnx#1);

    \draw [-latex] (K#2x#1) edge node[auto,swap] {\scriptsize{$x_#1$}} (K#2Gx#1) ;
    \draw [-latex] (K#2x#1) edge node[auto] {\scriptsize{$\overline{x_#1}$}} (K#2Gnx#1);

    
    \node[end,below left= .8cm and 0.7cm of K#2Gx#1.center, anchor=center] (GadUL) {$\scriptstyle{\bm{1}}$};
    \node[end,below right= .8cm and 0.7cm of K#2Gx#1.center, anchor=center] (GadUR) {$\scriptstyle{\bm{0}}$};
    \draw[-latex] (K#2Gx#1) edge node[auto,swap,inner sep=1pt] {\scriptsize{$x_#1$}} (GadUL);
    \draw[-latex] (K#2Gx#1) edge node[auto,inner sep=1pt] {\scriptsize{$\overline{x_#1}$}} (GadUR);

    \node[end,below left= .8cm and 0.7cm of K#2Gnx#1.center, anchor=center] (GadUL) {$\scriptstyle{\bm{0}}$};
    \node[end,below right= .8cm and 0.7cm of K#2Gnx#1.center, anchor=center] (GadUR) {$\scriptstyle{\bm{1}}$};
    \draw[-latex] (K#2Gnx#1) edge node[auto,swap,inner sep=1pt] {\scriptsize{$x_#1$}} (GadUL);
    \draw[-latex] (K#2Gnx#1) edge node[auto,inner sep=1pt] {\scriptsize{$\overline{x_#1}$}} (GadUR);
    
    \draw[dashed,thick,opacity=0.1,fill=red] ([xshift=-2.6cm,yshift=-1.6cm]K#2x#1.center) rectangle ([xshift=2.6cm,yshift=+.5cm]K#2x#1.center);
    \node[above right= .2cm and 2.2cm of K#2x#1.center,anchor=center,draw=none]{$_{#2}K_{#1}$};
}
\newcommand{\gKnowledgeReduced}[3]{

    \node[choice,anchor=center] (K#2x#1) at (#3) {$#1$};
    
    \node[guess,below left= .6cm and 1cm of K#2x#1.center,anchor=center] (K#2Gx#1) {$#1$};
    \node[guess,below right= .6cm and 1cm of K#2x#1.center,anchor=center] (K#2Gnx#1) {$\bar{#1}$};
    \draw [dashed] (K#2Gx#1) edge node[above=-.08cm,pos=.25] {\scriptsize $#2$} (K#2Gnx#1);

    \draw [-latex] (K#2x#1) edge node[auto,swap] {} (K#2Gx#1) ;
    \draw [-latex] (K#2x#1) edge node[auto] {} (K#2Gnx#1);

    
    \node[end,below left= .7cm and 0.4cm of K#2Gx#1.center, anchor=center] (GadUL) {$\scriptstyle{\bm{1}}$};
    \node[end,below right= .7cm and 0.4cm of K#2Gx#1.center, anchor=center] (GadUR) {$\scriptstyle{\bm{0}}$};
    \draw[-latex] (K#2Gx#1) edge node[auto,swap,inner sep=1pt] {} (GadUL);
    \draw[-latex] (K#2Gx#1) edge node[auto,inner sep=1pt] {} (GadUR);

    \node[end,below left= .7cm and 0.4cm of K#2Gnx#1.center, anchor=center] (GadUL) {$\scriptstyle{\bm{0}}$};
    \node[end,below right= .7cm and 0.4cm of K#2Gnx#1.center, anchor=center] (GadUR) {$\scriptstyle{\bm{1}}$};
    \draw[-latex] (K#2Gnx#1) edge node[auto,swap,inner sep=1pt] {} (GadUL);
    \draw[-latex] (K#2Gnx#1) edge node[auto,inner sep=1pt] {} (GadUR);
    
    \draw[dashed,thick,opacity=0.1,fill=red] ([xshift=-1.5cm,yshift=-1.5cm]K#2x#1.center) rectangle ([xshift=1.5cm,yshift=+.4cm]K#2x#1.center);

    \node[above right= .1cm and 1.2cm of K#2x#1.center,anchor=center,draw=none]{$_#2K_#1$};
}
\newcommand{\gCompatibilityReduced}[2]{

    \node[choice,anchor=center] (C#1) at (#2) {$#1$};
    
    \node[form,below left= .6cm and 1cm of C#1.center,anchor=center] (CFx#1) {$#1$};
    \node[form,below right= .6cm and 1cm of C#1.center,anchor=center] (CFnx#1) {$\bar{#1}$};

    \draw [-latex] (C#1) edge node[auto,swap] {} (CFx#1) ;
    \draw [-latex] (C#1) edge node[auto] {} (CFnx#1);

    \node[end,below left= .7cm and .4cm of CFx#1.center, anchor=center] (GadUL) {$\scriptstyle{\bm{1}}$};
    \node[end,below right= .7cm and .4cm of CFx#1.center, anchor=center] (GadUR) {$\scriptstyle{\bm{0}}$};
    \draw[-latex] (CFx#1) edge node[auto,swap,inner sep=1pt] {} (GadUL);
    \draw[-latex] (CFx#1) edge node[auto,inner sep=1pt] {} (GadUR);

    \node[end,below left= .7cm and 0.4cm of CFnx#1.center, anchor=center] (GadUL) {$\scriptstyle{\bm{1}}$};
    \node[end,below right= .7cm and 0.4cm of CFnx#1.center, anchor=center] (GadUR) {$\scriptstyle{\bm{0}}$};
    \draw[-latex] (CFnx#1) edge node[auto,swap,inner sep=1pt] {} (GadUL);
    \draw[-latex] (CFnx#1) edge node[auto,inner sep=1pt] {} (GadUR);
    
    \draw[dashed,thick,opacity=0.1,fill=green] ([xshift=-1.5cm,yshift=-1.5cm]C#1.center) rectangle ([xshift=1.5cm,yshift=+.4cm]C#1.center);

    \node[above right= .1cm and 1.2cm of C#1.center,anchor=center,draw=none]{$C_#1$};
}
\newcommand{\gUncertaintyReduced}[2]{

    \node[choice,anchor=center] (Ux#1) at (#2) {$#1$};
    
    \node[guess,below left= .6cm and 1cm of Ux#1.center,anchor=center] (UGx#1) {$#1$};
    \node[guess,below right= .6cm and 1cm of Ux#1.center,anchor=center] (UGnx#1) {$\bar{#1}$};
    \draw [dashed] (UGx#1) edge node[above=-.08cm,pos=.25] {$\scriptstyle{#1}$} (UGnx#1);

    \draw [-latex] (Ux#1) edge node[auto,swap] {} (UGx#1) ;
    \draw [-latex] (Ux#1) edge node[auto] {} (UGnx#1);

    
    \node[end,below left= .4cm and .6cm of UGx#1.center, anchor=center] (GadL) {};
    \node[end,below = .7cm of UGx#1.center, anchor=center] (GadM) {$\scriptstyle{\bm{1}}$};
    \node[end,below right= .4cm and .6cm of UGx#1.center, anchor=center] (GadR) {};
    \draw[-latex] (UGx#1) edge node[auto,swap,inner sep=1pt] {} (GadL);
    \draw[-latex] (UGx#1) edge node[auto,inner sep=1pt] {} (GadM);
    \draw[-latex] (UGx#1) edge node[auto,inner sep=1pt] {} (GadR);

    \node[end,below left= .4cm and .6cm of UGnx#1.center, anchor=center] (GadL) {};
    \node[end,below = .7cm of UGnx#1.center, anchor=center] (GadM) {$\scriptstyle{\bm{1}}$};
    \node[end,below right= .4cm and .6cm of UGnx#1.center, anchor=center] (GadR) {};
    \draw[-latex] (UGnx#1) edge node[auto,swap,inner sep=1pt] {} (GadL);
    \draw[-latex] (UGnx#1) edge node[auto,inner sep=1pt] {} (GadM);
    \draw[-latex] (UGnx#1) edge node[auto,inner sep=1pt] {} (GadR);
    
    \draw[dashed,thick,opacity=0.1,fill=blue] ([xshift=-1.5cm,yshift=-1.5cm]Ux#1.center) rectangle ([xshift=1.5cm,yshift=+.4cm]Ux#1.center);
    \node[above right= .1cm and 1.2cm of Ux#1.center,anchor=center,draw=none]{$D_#1$};
}
\theoremstyle{definition}
\newtheorem{theorem}{Theorem}
\newtheorem{remark}{Remark}
\newtheorem{proposition}{Proposition}
\declaretheorem[
  name=Claim,
  numberwithin=proposition,
  refname={Claim,Claims},
  Refname={Claim,Claims}
]{claim}
\title{On the Complexity of the Optimal Correlated Equilibria in Extensive-Form Games}
\author{Vincent Cheval}
\affiliation{%
    \institution{University of Oxford}
    \city{Oxford}
    \country{United Kingdom}
}
\author{Florian Horn}
\affiliation{%
    \institution{CNRS \& Universit\'e Paris-Cit\'e, IRIF}
    \city{Paris}
    \country{France}
}
\author{Soumyajit Paul}
\affiliation{%
    \institution{University of Liverpool}
    \city{Liverpool}
    \country{United Kingdom}
}
\author{Mahsa Shirmohammadi}
\affiliation{%
    \institution{CNRS \& Universit\'e Paris-Cit\'e, IRIF}
    \city{Paris}
    \country{France}
}
\begin{abstract}
A major open question in algorithmic game theory is whether normal-form correlated equilibria (NFCE) can be computed efficiently in succinct games such as extensive-form games~\cite{daskalakis2024efficient,peng2024fast,farina2023polynomial,huang2008computing,von2008extensive,papadimitriou2008computing}. Motivated by this question, we study the associated \textsc{Threshold} problem: deciding whether there exists a correlated equilibrium whose value exceeds a given threshold. 

We prove that this problem is \pspace-hard for \NFCE in multiplayer extensive-form games with perfect recall, even for  fixed thresholds. To contextualize this result, we also establish the complexity of the \Threshold problem for Nash equilibria in this setting, showing it is \existreal-complete. These results uncover a surprising complexity reversal: while optimal correlated equilibria are computationally simpler than optimal Nash in normal-form games, the opposite holds in extensive-form games, where computing optimal correlated equilibria is strictly harder unless \existreal is equal to \pspace. 

Building on this line of inquiry, we also address a related  question by von Stengel and Forges, who introduced the notions of extensive-form correlated equilibrium (\EFCE) and agent-form correlated equilibrium (AFCE). They asked  how difficult the \Threshold problem is for AFCE~\cite[Discussion and open problems]{von2008extensive}; we answer this question by proving that it is \np-hard, even in two-player games without chance nodes.

Complementing our hardness results, we establish tight complexity classifications for the \Threshold problem across several correlated equilibrium concepts---including \EFCE, \AFCE, normal-form coarse correlated equilibrium (\NFCCE), extensive-form  coarse correlated equilibria (\EFCCE), and agent-form coarse correlated equilibrium (\AFCCE). For each of these, we prove \np-completeness by providing matching \np upper bounds to the previously known hardness results. Finally, we also place the problem of computing arbitrary \NashE Equilibria in extensive-form games in \fixp, matching its complexity in normal-form games~\cite{etessami2010complexity}. Together, our results provide the most complete landscape to date for the complexity of optimal equilibrium computation in extensive-form games.
\end{abstract}
\begin{document}

\maketitle

\begin{framed}
Disclaimer: In a previous version of this technical report, we claimed that our \pspace-hardness proof for \NFCE could be extended to non-stochastic games, that is, to games without chance nodes. To support this claim, we proposed a generalization of the gadget introduced by~\citet{von2008extensive}, which was originally used to establish the \NP-hardness of \NFCE in two-player games without chance nodes. We subsequently discovered that this generalization is insufficient to support the claim. We have accordingly withdrawn the claim from the present version.
\end{framed}

\newpage

\setcounter{tocdepth}{2} 
\tableofcontents

\newpage

\section{Introduction}

Game theory provides a powerful framework for modeling and predicting the behavior of individuals or groups interacting in competitive or cooperative settings. Originally developed to understand strategic decision-making, it has found applications across diverse disciplines, including economics,  political science, and the social sciences, where it serves to analyze and design systems involving multiple rational agents. With the advent of artificial intelligence and machine learning, game-theoretic concepts have also gained prominence in computational contexts. For instance, Generative Adversarial Networks (GANs)~\cite{DBLP:conf/concur/Palamidessi020,DBLP:conf/nips/Liu0MCRYD20,DBLP:conf/nips/GoodfellowPMXWOCB14}, one of the more prominent frameworks for approaching generative artificial intelligence, model the interaction between a generator and a discriminator as a two-player game. In the realm of cybersecurity, game theory plays an important role to analyze network security~\cite{10.1145/3243734.3264421,10.1145/2382196.2382255} and protocols whose security relies on the rationality of the agents~\cite{10221879}, such as the Bitcoin Lightning Network~\cite{lightningNetwork} where monetary incentives are designed to enforce cooperation and prevent fraud.

A central concept in all these applications is that of equilibrium: a stable state in which no player has an incentive to deviate. Nash equilibrium has long been the dominant paradigm~\cite{BiloM16-Nash-ETR,daskalakis2013complexity,schoenebeck2012computational,etessami2010complexity,daskalakis2009complexity,conitzer2008new,chen2006settling,chen2006computing,fabrikant2004complexity}, especially suited for games involving independent and rational players. Correlated equilibrium, introduced by \citet{AUMANN197467}, has also attracted considerable attention as a broader coordination concept where players coordinate through a trusted mediator~\cite{daskalakis2024efficient,dagan2024external,peng2024fast,farina2023polynomial,zhang-2022-ptime-opt-eq,chen2022finding,DBLP:conf/sigecom/ZhangFCS22, DBLP:conf/aaai/FarinaBS20,farina2020polyPublicChance,Celli-et-al-opt-ex-ante:2019,Celli-et-al-learning-NFCCE-NEURIPS2019}. Correlated equilibrium  can be viewed as the outcome of a recommendation mechanism: a joint strategy is sampled from a publicly known distribution and each player privately receives an action recommendation with no incentive to deviate, assuming all other players adhere to  the given recommendations. As \citet{papadimitriou2008computing} put it, \emph{the complexity of equilibria is of fundamental importance in game theory}; in this spirit, we identify the following two questions as central to understanding the computational complexity of equilibrium concepts:

\begin{itemize}
    \item[(\Any)] Computing an equilibrium of the type of interest (and, in the case of pure equilibria, determining whether such an equilibrium exists),
    \item[(\Threshold)] Determining whether there exists an equilibrium that achieves a social welfare above a given threshold, and computing it. 
\end{itemize}

For normal-form games, where players simultaneously select a single action and payoffs are given in matrix form, these questions are well-understood. Computing \Any-correlated equilibrium can be done in polynomial time via linear programming, in stark contrast to the \ppad-completeness of computing \Any-Nash equilibrium~\cite{chen2006settling}. The contrast is even sharper for the \Threshold variant: while \Threshold-correlated equilibria can still be computed in polynomial time, the analogous problem for Nash equilibria was shown to be \np-complete~\cite{gilboa-zemel-89} for 2-player games and \existreal-complete~\cite{BiloM16-Nash-ETR} for games with  3 players and more.

Although originally defined for games in normal-form, Nash and correlated equilibrium have been extended to other representations which may be exponentially more succinct, including extensive-form, Bayesian~\cite{bayesian,950f7e7d-dc4b-32c4-b5ab-4d9f8f8c7849}, and convex games~\cite{daskalakis2024efficient,papadimitriou2008computing}. In this work, we focus on general-sum  extensive-form games with perfect recall, in which players never forget their own past actions or observations. These games model sequential interactions as trees whose nodes represent stochastic events or decision points of players, and whose leaves determine final payoffs. The general-sum setting allows for players’ payoffs to be arbitrarily related, capturing both conflicting and cooperative incentives. These games naturally capture a wide range of applications, such as security protocols~\citep{10221879} and sequential auctions~\citep{krishna2009auction}.

The landscape of complexity results for equilibrium computation in extensive-form games with perfect recall remained far from complete. For Nash equilibrium, known hardness results are inherited from the complexity of normal-form games: only the computation of \Any-$\epsilon$-\NashE equilibrium in extensive games was shown to be in \ppad~\cite{DBLP:conf/icalp/DaskalakisFP06} under some assumptions, leaving the exact complexity of exact \Any-\NashE and \Threshold-Nash open. The complexity class \fixp, introduced by \citet{etessami2010complexity}, contains problems which can be  described as fixed points of algebraic functions; the problem \Any-\NashE for 3 or more players was shown to be \fixp-complete in the same article. The \fixp-hardness extends to extensive-form games. Approximating \NashE and its refinements in extensive form games with $n$ players was shown to be \fixp-a complete, an approximate variant of \fixp, by \citet{ETESSAMI2021107}.

The most direct adaptation of correlated equilibrium to extensive-form games is the \emph{normal-form correlated equilibrium} (\NFCE for short), in which players receive recommendations for their entire strategies at the outset, before any moves are made. While this  solution concept allows for only a single round of communication between the mediator and the players~\cite{Celli-et-al-opt-ex-ante:2019} and satisfies important game-theoretic properties~\cite{Fuj23}, the complexity of computing any such equilibrium in extensive-form games remains a major open problem in the field~\cite{daskalakis2024efficient,peng2024fast,farina2023polynomial,huang2008computing,von2008extensive,papadimitriou2008computing}. For the \Threshold problem, \citet{von2008extensive} showed \np-hardness whereas the best known upper bound is \exptime which comes from the naive approach: explicitly constructing the game's payoff matrix, which can have size exponential in the size of the input game, and then applying the polynomial-time algorithm for \Threshold-\NFCE in normal form games.

A more expressive extension of correlated equilibrium, \emph{extensive-form correlated equilibrium} (EFCE for short), was introduced by \citet{von2008extensive} for extensive-form games with perfect recall. Unlike NFCE, where recommendations are issued once at the start, EFCE allows the mediator to provide a private recommendation at each decision point. This more naturally captures sequential correlation, as players can condition their actions on the game's unfolding and the mediator's ongoing guidance. While computing \Any-EFCE is known to be in polynomial time, computing \Threshold-EFCE Problem was only known to be \np-hard~\cite{von2008extensive}. Several tractable subclasses have been identified, including games without chance nodes~\cite{von2008extensive} and triangle-free games, which in particular contains games with public chance moves~\cite{farina2020polyPublicChance}. The complexity of this problem has also been studied through reduction to \emph{mediator-augmented games}, where the mediator has imperfect recall and the target solution concept is a Stackelberg equilibrium~\cite{DBLP:conf/sigecom/ZhangFCS22}.

Agent-form correlated equilibrium (AFCE), introduced by \citet{bayesian}, model situations where each action is chosen by an independent agent, even when different decision points are controlled by the same player. For such equilibria, the \Any problem is known to be in \ptime~\cite{von2008extensive,papadimitriou2008computing}. However, the complexity of the corresponding \Threshold problem remains less understood. In particular, while~\citet{von2008extensive} established \np-hardness of the \Threshold problem for \AFCE in two-player stochastic extensive-form games, the case without chance nodes was left as an open problem in their landmark paper. This stands in contrast to the situation for \EFCE, where the \Threshold problem admits a polynomial-time algorithm in games without chance moves.

Despite a surge of recent interest in correlated equilibria for extensive-form games, the precise complexity of the \Threshold problem for \EFCE, \AFCE and \NFCE had remained unresolved. This echoes the history of Nash equilibrium in normal-form games, where the complexity of the optimal value problem was long open until it was shown to be \existreal-complete~\cite{BiloM16-Nash-ETR}. In this paper, we settle the exact computational complexity of \Threshold problem for \EFCE and \AFCE, as well as for several other correlated equilibrium variants introduced below, with the exception of \NFCE, for which we establish a surprising lower bound.

We also answer a  question  from \citet[Discussion and open problems]{von2008extensive}: \emph{How difficult is the problem MAXPAY-AFCE for these games?}\footnote{This corresponds to the \Threshold problem for \AFCE for two-player games without chance nodes.} In~\Cref{thm:PureAFCEisNP-complete}, we show that this problem is \np-hard. Generally speaking, our result contribute to answering the broader question posed by \citet{papadimitriou2008computing} on extensive-form games: \emph{what kinds of super-succinct representations of correlated equilibria are needed?} Our results in~\Cref{thm:MaxCEisNP-complete} on \EFCE, \AFCE and other CE variants  are obtained by proposing a succinct representations  of optimal such equilibria. In contrast, the hardness in~\Cref{thm:MaxNFCEisPspace-hard} suggests that no such representation is likely to exist for optimal \NFCE. This separation provides new insight into the longstanding open question of computing \Any-\NFCE, and further highlights the growing importance of recent advances on computing correlated equilibria. Indeed, due to the intrinsic difficulty of computing optimal equilibria, several  algorithmic approaches have been proposed for computing optimal \EFCE in extensive-form games \emph{in practice}~\cite{DBLP:conf/sigecom/ZhangFCS22,pmlr-v139-farina21a,DBLP:conf/aaai/FarinaBS20}, and a general framework for understanding deviation concepts has been developed by \citet{Morrill2021EfficientDT, Morrill-et-all-AAAI-2021}. 

Several other refinements of correlated equilibrium have been proposed in the literature, which we will address in this paper. Coarse correlated equilibria (CCE), where players can only deviate before receiving a recommendation, were first introduced for normal-form games~\cite{Moulin-Vial-CCE-1979}, and later extended to extensive-form games under the name normal-form coarse correlated equilibria (NFCCE)~\cite{Celli-et-al-opt-ex-ante:2019}. The \Threshold problem for \NFCCE was shown to be \np-hard, while remaining solvable in polynomial time for games without chance moves. Subsequent work established \apx-hardness for this problem in games with chance nodes~\cite{Celli-et-al-learning-NFCCE-NEURIPS2019}.
Building on these developments, extensive-form coarse correlated equilibria (\EFCCE) were studied by \citet{DBLP:conf/aaai/FarinaBS20}, who showed that the \Threshold problem for \EFCCE is also \np-hard. The parameterized complexity of the optimal \EFCE, \EFCCE, and \NFCCE problems was further analyzed by \citet{DBLP:conf/sigecom/ZhangFCS22}. Other equilibrium concepts, which involve different mediator models, have been shown to admit polynomial-time solutions in general extensive-form games~\cite{zhang-2022-ptime-opt-eq}. The distinction between \NFCE, \EFCE and \AFCE equilibria and their coarse counterpart \NFCCE,  \EFCCE and \AFCCE lies in the timing where players decide whether to deviate: in coarse equilibria, the decision must be taken before receiving recommendations by the mediator whereas in standard equilibria, such decision can be taken after. This in particular implies that any equilibrium of a certain type is also a coarse equilibrium of that type, for instance, $\NFCE \subset \NFCCE$. The relationships between equilibrium concepts are as follows:
\begin{center}
    \begin{tabular}{ccccccccc}
        \NashE& $\subset$ &\NFCE & $\subset$ &\EFCE & $\subset$& \EFCCE& $\subset$& \NFCCE\\
        &&&&\rotatebox[origin=c]{-90}{$\subset$}&&\rotatebox[origin=c]{-90}{$\subset$}&&\\
        &&&&\AFCE & $\subset$ & \AFCCE &&
    \end{tabular}
\end{center}
\noindent where the inclusions $\NashE \subset \NFCE \subset \EFCE \subset \AFCE$ were shown in~\cite{von2008extensive} and $\EFCCE \subset \NFCCE$  in~\cite{DBLP:conf/aaai/FarinaBS20}\footnote{The inclusion $\NFCCE \subset \AFCE$ is claimed by \citet{Celli-et-al-opt-ex-ante:2019}; however, the relaxation provided by the \emph{agent} and \emph{coarse} extensions of \EFCE are orthogonal: it is possible to build a game showcasing the difference between normal and coarse equilibria wherein all \AFCE are \EFCE. For a concrete counterexample, consider a 2-player extensive-form game that simulates a concurrent setting: Player~1 and Player~2 move sequentially, but Player~2 does not observe Player 1’s action. Players chooses among actions $\{d,s,\ell\}$. Action $d$ ends the game immediately with payoff~$(2,2)$. When  players agree on~$\{s,\ell\}$ they get $0$, and when they disagree, the player who played $s$ receives 5 and the other receives 1. A uniform distribution over strategy profiles reflecting disagreement  constitutes a valid $\NFCCE$ with an expected payoff of 3, which is not an $\AFCE$.}.

\newcommand{\cross}{\ding{55}}

\begin{figure}[t]
    \begin{center}
\scalebox{0.8}{ 
        \begin{tblr}{
        colspec={|Q[c,wd=1.2cm]|Q[c,wd=1.cm]|Q[c,wd=1.cm]|Q[c,wd=1.cm]|Q[c,wd=1.cm]|Q[c,wd=1.cm] | Q[c,wd=1.cm]|Q[c,wd=1.cm]|Q[c,wd=1.4cm]|},
        rowspec={|Q[m,ht=.1cm]|Q[m,ht=.1cm]|Q[m,ht=.1cm]||Q[m,ht=.8cm]|Q[m,ht=.8cm]|Q[m,ht=.3cm]|Q[m,ht=.5cm]|Q[m,ht=.5cm]|Q[m,ht=.5cm]|Q[m,ht=.5cm]|Q[m,ht=.8cm]|},
        vline{7} = {9-10}{dashed}
        }
            & \SetCell[c=4]{}\Any & & & & \SetCell[c=4]{} \Threshold & & & \\
            \includegraphics[scale=.1]{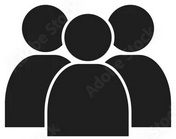} & \SetCell[c=2]{} 2 & & \SetCell[c=2]{} $n$ && \SetCell[c=2]{} 2 & & \SetCell[c=2]{} $n$ &\\
            \Pisymbol{dice3d}{102} & \cross & \checkmark & \cross & \checkmark & \cross & \checkmark & \cross & \checkmark\\
            \NashE & \SetCell[c=2]{}{\ppad-complete \\ {[2,3]}} & & \SetCell[c=2]{}{\fixp-hard~[7]\\ \textcolor{ligreen}{\fixp-complete} \\  \textcolor{ligreen}{\Cref{thm:AnyNashisFixP-complete}}} & & \SetCell[c=2]{}{\np-complete \\ {[1]}} & & \SetCell[c=2]{} {\existreal-hard~[11]~\\ \textcolor{ligreen}{\existreal-complete} \\ \textcolor{ligreen}{\Cref{thm:MaxNashisER-complete}}}\\
            \NFCE & \SetCell[c=2]{}{\ppad} & & \SetCell[c=2]{}{\textcolor{ligreen}{\fixp}\\ \textcolor{ligreen}{\Cref{thm:AnyNashisFixP-complete} 
            }} & & \SetCell[c=3]{}{\np-hard \\ {[5]}} & & & \SetCell[c=1]{}{\textcolor{ligreen}{\pspace-hard}\\ \textcolor{ligreen}{\Cref{thm:MaxNFCEisPspace-hard}}}\\
            \EFCE & \SetCell[c=4,r=5]{}{\ptime \\ {[4,6,8]}}& & & & \SetCell[r=3]{}{\ptime \\ {[5,9,10]}}  & \SetCell[r=5,c=3]{} {\np-hard \\ {[5,9,10]}\\ \textcolor{ligreen}{\np-complete} \\ \textcolor{ligreen}{\Cref{thm:MaxCEisNP-complete}}} & h & i\\
            \EFCCE & b & c & d & e & f & g & h & i\\
            \NFCCE & b & c & d & e & f & g & h & i\\
            \AFCE & b & c & d & e & \SetCell[r=2]{}{\textcolor{ligreen}{($*$)}} & g & h & i\\
            \AFCCE & b & c & d & e & & g & h & i\\
            \cline[dashed]{6-6}
            {pure \\ \AFCCCE} & \SetCell[c=8]{}{\textcolor{ligreen}{\np-complete} \\ \textcolor{ligreen}{\Cref{thm:PureAFCEisNP-complete}}}\\
        \end{tblr}
        }
    \end{center}
    \vspace{0.5em}
    \begin{center}
    \begin{minipage}[t]{.31\textwidth}
        \raggedright\scriptsize
        {[1]}: \citet{gilboa-zemel-89}\\
        {[2]}: \citet{KollerMV96}\\
        {[3]}: \citet{chen2006settling}\\
        {[4]}: \citet{papadimitriou2008computing}\\
    \end{minipage}
    \hfill
        \begin{minipage}[t]{.31\textwidth}
        \raggedright\scriptsize
        {[5]}: \citet{von2008extensive}\\
        {[6]}: \citet{huang2008computing}\\
        {[7]}: \citet{etessami2010complexity}\\
        {[8]}: \citet{DBLP:journals/geb/JiangL15}\\
    \end{minipage}
    \hfill
    \begin{minipage}[t]{.31\textwidth}
        \raggedright\scriptsize
        {[9]}: \citet{Celli-et-al-opt-ex-ante:2019}\\
        {[10]}: \citet{DBLP:conf/aaai/FarinaBS20}\\
        {[11]}: \citet{BiloM16-Nash-ETR}\\
        {\textcolor{ligreen}{($*$)}}: \Cref{thm:MaxCEisNP-complete} (membership)\\
        and \Cref{thm:PureAFCEisNP-complete} (hardness)
    \end{minipage}
\end{center}
\vspace{-0.3cm}
    \caption{Overview of known complexity results and our contributions (highlighted in \textcolor{ligreen}{green})}
    \vspace{-0.3cm}
\label{fig:summary-contribution}
\end{figure}

\paragraph{Our contributions.} The first part of our contributions is motivated by the central open question concerning the computational complexity of computing \NFCE in extensive-form games. While \citet{von2008extensive} previously showed that the \Threshold problem for \NFCE is \np-hard for extensive-form games, whether such equilibria admit succinct representations remained unknown~\cite{huang2008computing}. We establish the following in \Cref{sec:MaxNFCEisPSpace-hard}:

\begin{restatable}{theorem}{NFCEPSPACEh}
\label{thm:MaxNFCEisPspace-hard}
    The \Threshold problem for normal-form correlated equilibria (NFCE) in multiplayer extensive-form games with the number of players given as input is \pspace-hard. 
\end{restatable}

Interestingly, the proof of \Cref{thm:MaxNFCEisPspace-hard} relies on the construction of a polynomial-size extensive-form game in which any optimal \NFCE requires a correlation plan with exponentially large support, thereby suggesting that optimal \NFCE might not be succinctly representable. In comparison, we also consider the upper bound for the \Threshold problem for Nash equilibrium and establish the following in \Cref{sec:Nash}:

\begin{restatable}{theorem}{NashExistsRCom}
\label{thm:MaxNashisER-complete}
    The \Threshold problem for Nash equilibria in multiplayer extensive-form games with chance nodes and the number of players given as input is \existreal-complete.
\end{restatable} 

The combination of \Cref{thm:MaxNFCEisPspace-hard,thm:MaxNashisER-complete} reveals a surprising complexity reversal. Although optimal correlated equilibria are easier to compute than optimal Nash in normal-form games, the opposite is true in extensive-form games, where computing optimal correlated equilibria is harder, challenging the longstanding expectation that optimal Nash equilibria are inherently more complex~\cite{papadimitriou2008computing,gilboa-zemel-89}.

The second part of our contributions establishes  complexity upper bounds for the \Threshold problem across several equilibrium concepts, thereby yielding exact complexity for all equilibria but \NFCE. In particular, we introduce a succinct representation of optimal \EFCE  in multiplayer stochastic extensive-form games. As discussed earlier, our constructions add to the understanding of the representational complexity, also known as strategy complexity~\cite{KMSHT25,shirmohammadi2023beginner,KieferMSTW20}, of optimal correlated equilibria in extensive-form games, furthering the discussion initiated by~\citet{papadimitriou2008computing}. Combined with the hardness result of \citet{von2008extensive}, our matching upper bound leads to \np-completeness.

In \Cref{sec:Overview EFCE}, we derive corresponding tight complexity classifications for all correlated equilibria except \NFCE, as stated in the following theorem:

\begin{restatable}{theorem}{CorrelatedNPCom}
\label{thm:MaxCEisNP-complete}
    The \Threshold problem for \NFCCE, \EFCE, \EFCCE, \AFCE and \AFCCE in multiplayer extensive-form games with chance nodes and the number of players given as input are \np-complete.
\end{restatable}

Theorem~\ref{thm:MaxCEisNP-complete} provides a direct proof that the Threshold problem for EFCE belongs to NP; an alternative indirect proof can be obtained by combining the mediator-augmented-game construction of~\cite{DBLP:conf/sigecom/ZhangFCS22} with some proof techniques from ~\cite{DBLP:conf/icml/ZhangFS23}. We discuss the relationship between these two approaches in \Cref{sec:Overview EFCE}.

In \Cref{sec-AFCENPhard}, to complete the complexity landscape of the \Threshold problem for correlated equilibria, we show \np-hardness of the \Threshold problem in this setting, as well as for the \Any problem for pure equilibria for \AFCE and \AFCCE, which was unresolved. 

\begin{restatable}{theorem}{AFCEhard}
\label{thm:PureAFCEisNP-complete}
    The \Threshold problems for \AFCE and \AFCCE in  extensive-form games are \np-complete, even for two-player games without chance nodes. Moreover, the \Any problems for pure \AFCE and \AFCCE in this setting are also \np-complete.
\end{restatable}

We also show that the \Any problem for \Any-\NashE in extensive-form games belongs to \fixp, matching the lower bound derived from the normal-form case for 3 or more players (\Cref{sec:Nash}).

\begin{restatable}{theorem}{AnyNashisFixPcomplete}
\label{thm:AnyNashisFixP-complete}
    The \Any-\NashE problem in extensive-form games is \fixp-complete.
\end{restatable}

Our paper therefore completes the complexity landscape of the \Threshold and \Any problem for all equilibria but \NFCE, while improving the known lower bound of the \Threshold problem for \NFCE. A summary of the known results for the \Threshold and \Any problems across different equilibrium concepts, along with our contributions, is presented in \Cref{fig:summary-contribution}.

\paragraph{More related work.}
Another line of research has focused on the computation of approximate equilibria. The computation of such approximate $\epsilon$-EFCE and $\epsilon$-optimal EFCE has been studied in~\cite{dagan2024external,peng2024fast}, particularly in two-player stochastic games with perfect information over directed-acyclic graphs~\cite{Zhang-et-al:approx-opt-CE}. A complementary body of work considers learning algorithms for correlated equilibria, for which several lower bound results have been established~\cite{peng2024fast,dagan2024external,pmlr-v247-peng24a}.

\paragraph{Paper Outline.}
The main body of the paper provides an overview of our main contributions, highlighting the key ideas and results. Full technical details including formal constructions, proofs of hardness, and algorithmic results are deferred to Appendix for clarity and completeness.
\section{Extensive-Form Games, Correlations and Equilibria}
\label{sec:overview-extensive form games}

We study \emph{multi-player stochastic extensive-form games with imperfect information}. Such a game is played on a rooted, directed tree, whose nodes are controlled either by \emph{chance} or by one of the participating players. The game involves $n$ players, where $n \in \mathbb{N}$ is an input parameter, and each player is identified by a unique integer in~$\{1, \ldots, n\}$. The game starts at the root node and proceeds in turns: at a controlled node~$\node$, the player selects an action~$a$ from a set~$\Actions$ of available actions, leading deterministically to a unique successor~$\node'$, denoted by $\node \xrightarrow{a} \node'$. At a chance node, the successor~$\node'$ is chosen randomly according to a fixed probability distribution, denoted by $\node \xrightarrow{x} \node'$ with $x \in \mathbb{Q}$. The leaf nodes in the game tree are associated with a tuple in~$\mathbb{Q}^n$, referred to as the \emph{payoff} or \emph{utility vector}, which specifies the payoff received by each player when the game terminates at that leaf. We  define a linear function $\sw : \mathbb{R}^n \rightarrow \mathbb{R}$, named the \emph{objective function} of the game, which maps the tuple of expected payoff of players to a collective objective value. The objective function is typically used to capture criteria such as social welfare, which is defined as the expected sum of players' reward. For algorithmic purposes, we assume all probabilities and payoffs are rational numbers encoded in binary.

\paragraph{Information Sets and Perfect Recall.}
To model the players' \emph{imperfect information}, the   nodes controlled by each player~$i$ are partitioned into non-empty sets called \emph{information sets}. Write $\InfSet{i}$ for the collection of information sets for player~$i$, where each information set $I \in \InfSet{i}$ consists of nodes that are indistinguishable to player~$i$ during the course of the game. This indistinguishability constraint requires that all nodes within an information set have  the same available set of actions; write $\actions{I}$ for the actions available at~$I$.

\begin{wrapfigure}[8]{r}{0.41\textwidth}
\vspace{-0.6cm}
\tikzset{
triangle/.style = {regular polygon,regular polygon sides=3,draw,inner sep = 2},
circ/.style = {circle,fill=cyan!15,draw,inner sep = 1},
term/.style = {circle,draw,inner sep = 1.5,fill=black},
sq/.style = {rectangle,fill=gray!20, draw, inner sep = 4}
}

\begin{tikzpicture}[scale=0.80,label distance=0pt]

\tikzstyle{level 1}=[level distance=13mm,sibling distance=35mm]
\tikzstyle{level 2}=[level distance=10mm,sibling distance=18mm]
\tikzstyle{level 3}=[level distance=10mm,sibling distance=9mm]

\begin{scope}[->, >=stealth]
 \node(0)[triangle]{}
    child{  
    node(00)[circ,draw=black]{\scriptsize $S$}
        child{
        node(000)[sq,inner sep=3pt]{\scriptsize $N_1$}
            child{
            node(0000)[term,label=below:{\scriptsize $(4,10)$}]{} 
            edge from parent node[auto,swap,inner sep=0pt]{\scriptsize $A_E$} 
            }
            child{
            node(0001)[term,label=below:{\scriptsize $(0,6)$}]{}
            edge from parent node[auto,inner sep=0pt]{\scriptsize $R_E$}                
            }
        edge from parent node[auto,swap]{{\scriptsize $E_S$} }                
        }
        child{
        node(001)[sq,inner sep=3pt]{\scriptsize $N_2$}
            child{
            node(0010)[term,label=below:{\scriptsize $(4,10)$}]{} 
            edge from parent node[auto,swap,inner sep=0pt]{\scriptsize $A_G$} 
            }
            child{
            node(0011)[term,label=below:{\scriptsize $(0,6)$}]{}
            edge from parent node[auto,inner sep=0pt]{\scriptsize $R_G$}                
            }
        edge from parent node[auto]{{\scriptsize $G_S$} }
        }
    edge from parent node[auto,swap]{\scriptsize $\frac{1}{2}$} 
    }
    child{
    node(01)[circ]{\scriptsize $W$}
        child{
        node(010)[sq,inner sep=3pt]{\scriptsize $N_3$}
            child{
            node(0100)[term,label=below:{\scriptsize $(6,0)$}]{}
            edge from parent node[auto,swap,inner sep=0pt] {\scriptsize $A_E$} 
            }
            child{
            node(0101)[term,label=below:{\scriptsize $(0,6)$}]{}
            edge from parent node[auto,inner sep=0pt]{\scriptsize $R_E$}                
            }
        edge from parent node[auto,swap]{\scriptsize $E_W$} 
        }
        child{
        node(011)[sq,inner sep=3pt]{\scriptsize $N_4$}
            child{
            node(0110)[term,label=below:{\scriptsize $(6,0)$}]{} 
            edge from parent node[auto,swap,inner sep=0pt]{\scriptsize $A_G$} 
            }
            child{
            node(0111)[term,label=below:{\scriptsize $(0,6)$}]{}
            edge from parent node[auto,inner sep=0pt]{\scriptsize $R_G$}                
            } 
        edge from parent node[auto]{\scriptsize $G_W$} 
        }
    edge from parent node[auto]{~\scriptsize $\frac{1}{2}$}
    };

\end{scope}
\draw [dashed,thick,blue,out=22,in=158] (000) to (010);
\draw [dashed,red,thick,out=22,in=158] (001) to (011);
\end{tikzpicture}
\end{wrapfigure}

In this paper, we assume  \emph{perfect recall} meaning that all players remember all of their past actions. As information sets group together nodes that are indistinguishable to the player who controls them, accordingly, all nodes within the same information set must be reached with a unique action  sequence from that player's perspective.

\begin{example}
\label{runningexample}
    Consider the job signaling game on the right, between a job applicant and a company, described by \citet{von2008extensive}. The game starts at a chance node, which determines the applicant's type, either strong~($S$) or weak~($W$), with equal probability. The applicant, represented by circular nodes, sends either an eager cover letter  or a generic one, through signals $E_i$ and $G_i$ respectively for type $i$. The applicant knows her own type, hence the information sets of the applicant are singleton $I_S = \{S\}$ and $I_W = \{W\}$. The company sees only the type of letter,  and not the applicant's true quality. Hence the information sets of the company are $I_E = \{ N_1, N_3\}$ (in blue) representing the eager letter and $I_G = \{ N_2, N_4\}$ (in red) representing the generic letter.  In the figure, nodes in the same information set are connected by dotted lines. The company must decide whether to accept ($A_E$, $A_G$) or reject ($R_E$, $R_G$) based on that signal.

    The same game with the chance node at root  replaced by a company's node $N_0$ and two actions, say $N_0 \xrightarrow{a} S$ and $N_0 \xrightarrow{b} W$, would not satisfy the perfect recall assumption. Indeed, in the blue information set, reaching the node $N_1$ requires the company to play $a$ at the root node $N_0$ whereas reaching $N_3$ requires to play $b$. Hence, the nodes of the same information set do not have a unique action sequence from the company's perspective.
\end{example}

\paragraph{Correlated Equilibria: Normal-form and Extensive-Form.}
A (pure) \emph{strategy} for  player $i \in \{1,\ldots,n\}$ is a function $\sigma_i$ that assigns to each information set $I \in \InfSet{i}$ of the player, one of the actions available at $I$. A (pure) \emph{strategy profile} is a tuple  $\bsigma = (\sigma_1, \ldots, \sigma_n)$, where each $\sigma_i$ is a strategy of player $i$. Let $\Sigma_i$ denote the set of strategies for player~$i$, and let $\Sigma = \Sigma_1 \times \cdots \times \Sigma_n$ be the set of all strategy profiles. A \emph{correlated plan} $\CorrPlan$ is a distribution over~$\Sigma$.

We investigate several types of equilibria in extensive-form games, for instance \emph{normal-form correlated equilibria} (NFCE) and \emph{extensive-form correlated equilibria} (EFCE). Both rely on a \emph{mediator} that provides recommendations through a publicly known correlated plan. At the game’s start, the mediator  draws a strategy profile from the correlated plan and gives each player private action recommendations. A correlated plan is an equilibrium if no player has incentive to deviate, assuming all other player adhere to the given recommendations. The primary distinction between these equilibria is in the timing of the mediator's recommendations. In NFCE, after drawing a pure strategy profile $\bsigma$, the mediator privately  communicates each player’s full strategy $\sigma_i$ before the game begins. This means that each player is informed of their entire set of  recommendations  before committing to the equilibrium. In contrast, in EFCE, the mediator reveals the recommended action to each player only when they reach a relevant information set during the game, allowing players to condition their actions on observed play and the recommendation received. As a result, players update their beliefs and evaluate the expected utility of following the recommendation based on the current information.

While our focus in the main body of the paper is primarily on NFCE and EFCE, we also study  other equilibria, such as agent-form correlated equilibrium (AFCE) which is similar to EFCE except that players are only allowed to deviate at one single information set and then must follow the recommendation after deviating. The coarse variants of each of these equilibria require player to commit to recommendation \emph{before} receiving them.

\begin{example}
    In the job signaling game from Example~\ref{runningexample}, any NFCE results in the company always rejecting the applicant. For instance, consider the pure strategy profile $\bsigma = (\{I_S \mapsto G_S, I_W \mapsto G_W \},\{ I_E \mapsto R_E, I_G \mapsto R_G\})$, denoted more succinctly as $(G_S,G_W,R_E,R_G)$. This is an NFCE in which the company's payoff is $6$, while the applicant receives $0$. In fact, \citet{von2008extensive} show that any \NFCE for this game is also a Nash equilibrium where the company always plays $\{ I_E \mapsto R_E, I_G \mapsto R_G\}$, meaning that it systematically rejects the applicant. The proof proceeds by introducing a variable for each possible pure strategy profile in the support of the correlation plan, write out all the incentive constraints and then solving the resulting system of linear equations. 

    For example, the expected payoff of the company is 5 when always accepting the applicant, that is, when playing $\{ I_E \mapsto A_E, I_G \mapsto A_G\}$. Such strategy is dominated by $\{ I_E \mapsto R_E, I_G \mapsto R_G\}$ with an expected payoff of $6$ for the company, and thus is never played. 

    Now consider a case where the applicant receives from the mediator the recommendation to play the strategy $\{I_S \mapsto E_S, I_W \mapsto E_W \}$. Let $p$ and $p'$ be the probabilities of the pure strategy profiles $(E_S,E_W,A_E,R_G)$ and $(E_S,E_W,R_E,A_G)$ in the correlation plan respectively. The expected payoff of the applicant when following the recommendation is thus $5p$. However, if the applicant were to deviate and play $\{I_S \mapsto G_S, I_W \mapsto G_W \}$ instead, the expected payoff becomes $5p'$. Hence for the correlation plan to be a \NFCE, we must have $p \geq p'$. Applying this kind of reasoning to all possible deviations ultimately forces the company to adopt $\{ I_E \mapsto R_E, I_G \mapsto R_G\}$ as its only viable strategy in any \NFCE.

    This example also shows that an \EFCE can yield better social welfare than an \NFCE. Take the correlation plan $\CorrPlan$ defined as a uniform distribution over profiles $\{(E_S,E_W,A_E,R_G),~(G_S,E_W,R_E,A_G), (E_S,\allowbreak G_W,A_E,R_G),~(G_S,G_W,R_E,A_G)\}$. To show that $\CorrPlan$ is an \EFCE, we need to consider  different scenarios. When the chance node selects a strong applicant, the applicant has no incentive to deviate as the correlation plan ensures that he will be accepted by the company. If instead a weak applicant is selected  and he is recommended to play $E_W$ then he knows that with equal probability, the mediator has drawn either $(E_S,E_W,A_E,R_G)$ or $(G_S,E_W,R_E,A_G)$. Its expected payoff is $\frac{1}{2}\cdot 0 + \frac{1}{2}\cdot 6 = 3$. The applicant has no incentive to deviate as playing $G_W$ would also yield $3$ as expected payoff. With a similar reasoning, one can show that the company also does not have incentive to deviate. Under $\CorrPlan$, the expected payoff of the applicant is $3.5$ and the expected payoff of the company is $6.5$, yielding a social welfare of $10$, nearly double that of any \NFCE.
\end{example} 
\section{The \Threshold problem for \NFCE is \pspace-hard}
\label{sec:MaxNFCEisPSpace-hard}

We begin with a simpler reduction which shows the \conp-hardness of the \Threshold problem for \NFCE. Our proof is a reduction from the \textsc{Tautology} problem, which asks whether a given propositional formula in disjunctive normal form (DNF) is satisfied by every Boolean assignment of its variables~\cite{GareyJohnson79}. To this end, given a DNF formula $\phi$, we construct a polynomial-size extensive-form game $\Gphi$ such that there exists a correlated equilibrium in $\Gphi$ with expected objective value\footnote{Our reduction uses an arbitrary objective function for clarity. However, it can be adapted to social welfare by introducing an additional dummy player with an appropriate payoff: for example, at a leaf where all players receive 0 and the objective value is 1, the dummy would receive 1 instead .} of $1$ if and only if~$\phi$ is a tautology. 

Let us illustrate this reduction with a concrete example. Consider the DNF formula $\phi= (\, x_1 \, \wedge \, x_2\, ) \vee (\, \overline{x_1} \, \wedge \, \overline{x_2} \, \wedge \, x_3) \vee (\, \overline{x_2} \, \wedge \, \overline{x_3})$. Our game~$\Gphi$ encodes the formula-checking process as a multi-player interaction. It involves five players: a single \emph{assignment} player (Player~$\assignP$), a single \emph{formula} player (Player~$\formP$), and $3$ \emph{universal variable} players (a Player~$\univP{i}$ for each variable $x_i$). 

Intuitively, in each play, the assignment player chooses an assignment for the variables, the formula player checks that this assignment satisfies $\phi$, and the variable players ensure that all possible assignments are chosen. Each play of the game performs only a  partial random spot-check, which may be satisfied even if $\varphi$ is not a tautology. However, the only way to construct an \NFCE that passes all these spot-checks and achieves an expected social welfare of~$1$ is to have a uniform distribution over all possible assignments, which must all satisfy $\phi$. See \Cref{fig:MaxNFCEisCoNP-hard} for a schematic view of the constructed game $\Gphi$, where the leaf outcomes are labeled as follows: 
\begin{itemize}
    \item $\bm{0}$: all players receive $0$ and the objective value is $0$;
    \item $\bm{1}$: all players receive $0$ and the objective value is $1$;
    \item $\forall_i^+$: player $\univP{i}$ receives $+1$, all other players receive $0$, and the objective function is $0$;
    \item $\forall_i^-$: player $\univP{i}$ receives $-1$, all other players receive $0$, and the objective function is $0$.
\end{itemize}

\begin{figure}[ht]
\begin{center}
\scalebox{0.8}{
\begin{tikzpicture}
\tikzset{choice/.style={circle, draw, minimum size = 12}}
\tikzset{guess/.style={rectangle,draw, inner sep = 4,minimum size=.55cm}}
\tikzset{check/.style = {diamond,draw, inner sep = 1,minimum size=.7cm}}
\tikzset{rand/.style = {regular polygon,regular polygon sides=3,draw, inner sep=0,minimum size =.8cm}}
\tikzset{form/.style = {regular polygon,regular polygon sides=5,draw, inner sep=0,minimum size =.65cm}}
\tikzset{end/.style={rectangle}}





\gUncertainty{1}{0,0}
\gKnowledge{1}{2}{0,-2.4}
\gKnowledge{1}{3}{0,-4.8}
\gCompatibilityExample{1}{0,-7.2}{t_1,t_3}{t_2}{t_2,t_3}{t_1}

\gUncertainty{2}{5.6,-2.4}
\gKnowledge{2}{3}{5.6,-4.8}
\gCompatibilityExample{2}{5.6,-7.2}{t_1}{t_2,t_3}{t_2,t_3}{t_1}

\gUncertainty{3}{11.1,-4.8}
\gCompatibilityExample{3}{11.1,-7.2}{t_1,t_2}{t_3}{t_1,t_3}{t_2}

\node[choice,draw = none,above = .7 of Ux1] (x1){};
\draw[dashed] (x1) edge node[left, pos=.4] {\scriptsize{$\assignP$}} node[right, pos=.4] {\scriptsize{$A_1$}} (Ux1) {};
\draw[dashed] (Ux1) -- (K2x1) -- (K3x1) -- (C1) {};
\node[choice,draw = none,above = .7 of Ux2] (x2){};
\draw[dashed] (x2) edge node[left,pos=.4] {\scriptsize{$\assignP$}} node[right, pos=.4] {\scriptsize{$A_2$}} (Ux2) {};
\draw[dashed] (Ux2) -- (K3x2) -- (C2) {};
\node[choice,draw = none,above = .7 of Ux3] (x3){};
\draw[dashed] (x3) edge node[left,pos=.4] {\scriptsize{$\assignP$}} node[right, pos=.4] {\scriptsize{$A_3$}} (Ux3){};
\draw[dashed] (Ux3) -- (C3) {};

\node[right=2.1 of CFnx3.center,anchor=center] (CFx4) {};
\draw[dashed] (CFx1) -- (CFnx1) -- (CFx2) -- (CFnx2) -- (CFx3) -- (CFnx3) edge node[above, pos=.8] {\scriptsize{$f$}} (CFx4){};

\node[rand,above = 2.5cm of Ux2] (R){};



\draw[-latex] (R) -- ([xshift=+1cm,yshift=.65cm]Ux1.center) -- (Ux1);

\draw[-latex] (R) -- ([xshift=+2.7cm,yshift=.55cm]Ux1.center) -- ([xshift=+2.7cm,yshift=.65cm]K2x1.center) -- ([xshift=+1cm,yshift=.65cm]K2x1.center) -- (K2x1);

\draw[-latex] (R) -- ([xshift=+2.8cm,yshift=.45cm]Ux1.center) -- ([xshift=+2.8cm,yshift=.65cm]K3x1.center) -- ([xshift=+1cm,yshift=.65cm]K3x1.center) -- (K3x1);

\draw[-latex] (R) -- ([xshift=+2.9cm,yshift=.35cm]Ux1.center) -- ([xshift=+2.9cm,yshift=.65cm]C1.center) -- ([xshift=+1cm,yshift=.65cm]C1.center) -- (C1);

\draw[-latex] (R) -- ([xshift=+2.6cm,yshift=.65cm]Ux2.center) -- ([xshift=+1cm,yshift=.65cm]Ux2.center) -- (Ux2);

\draw[-latex] (R) -- ([xshift=+2.7cm,yshift=.75cm]Ux2.center) -- ([xshift=+2.7cm,yshift=.65cm]K3x2.center) -- ([xshift=+1cm,yshift=.65cm]K3x2.center) -- (K3x2);

\draw[-latex] (R) -- ([xshift=+2.8cm,yshift=.85cm]Ux2.center) -- ([xshift=+2.8cm,yshift=.65cm]C2.center) -- ([xshift=+1cm,yshift=.65cm]C2.center) -- (C2);

\draw[-latex] (R) -- ([xshift=2.6cm,yshift=.65cm]Ux3.center) -- ([xshift=1cm,yshift=.65cm]Ux3.center) -- (Ux3);

\draw[-latex] (R) -- ([xshift=2.7cm,yshift=.75cm]Ux3.center) -- ([xshift=2.7cm,yshift=.65cm]C3.center) -- ([xshift=1cm,yshift=.65cm]C3.center) -- (C3);



\end{tikzpicture}
}
\end{center}
\caption{The game~$\Gphi$ for $\phi = \protect\underbrace{(\, x_1 \, \wedge \, x_2\, )}_{\text{term } t_1} \bigvee \protect\underbrace{(\, \overline{x_1} \, \wedge \, \overline{x_2} \, \wedge \, x_3)}_{\text{term } t_2} \bigvee \protect\underbrace{(\, \overline{x_2} \, \wedge \, \overline{x_3}\, )}_{\text{term } t_3}$.} 
\label{fig:MaxNFCEisCoNP-hard}
\end{figure}

The game $\Gphi$ consists of a collection of independent gadgets, exactly one of which is selected uniformly at random in each play. In every gadget, the assignment player moves first by choosing the value of a single variable, then another player makes a move and the game ends with one of the four payoffs described above. The players have only limited information:
\begin{itemize}
    \item the assignment player only knows which variable they assign: they have an information set $A_i$ for each variable $x_i$; a strategy for that player is an assignment $\ell_1, \ldots, \ell_n$ of $x_1, \ldots, x_n$;
    \item the formula player has a single information set $C$; a strategy for that player is a term of $\phi$.
    \item the variable players know in which gadget they play, but not which literal was chosen by the assignment player: Player~$\univP{i}$ has an information set $_iK_j$ for each $j < i$ and an information set $D_i$; a strategy for that player is a partial assignment $\ell_1, \ldots, \ell_{i-1}$ plus either $x_i$, $\overline{x_i}$, or $\checkmark$.
\end{itemize}

\begin{wrapfigure}[8]{R}{0.35\textwidth}
\vspace{-.2cm}
\scalebox{.8}{
\begin{tikzpicture}
    \gCompatibility{i}{0,0}
    \node[overlay,left=1.8 of CFxi.center,anchor=center] (CFx1) {};
    \node[overlay,right=1.8 of CFnxi.center,anchor=center] (CFx4) {};
    \draw[dashed] (CFx1) -- (CFxi) -- (CFnxi) {};
    \draw[dashed] (CFnxi) edge node [above,pos=.9] {$\scriptsize{f}$} (CFx4){};
    \node[overlay,above=.7 of Ci] (Cj){};
    \draw[dashed] (Cj) edge node [left,pos=.4] {$\scriptsize{\assignP}$} node [right,pos=.4] {$\scriptsize{A_i}$}(Ci);
    \draw[dashed] ([xshift=+2.6cm,yshift=.65cm]Ci.center) -- ([xshift=+1.8cm,yshift=.65cm]Ci.center);
    \draw[-latex] ([xshift=+1.8cm,yshift=.65cm]Ci.center) -- ([xshift=+1cm,yshift=.65cm]Ci.center) -- (Ci);
\end{tikzpicture}
}
\caption*{Compatibility gadget $C_i$.}
\end{wrapfigure}
\textbf{(1)~Compatibility gadget $\bm{C_i}$.} For each variable $x_i$, there is a Compatibility gadget $C_i$. In this gadget, 
the formula player must choose a term of $\phi$. If that term is compatible with the literal chosen by the assignment player (\emph{i.e}, the opposite literal does not appear in the term), the objective value is $1$; otherwise, it is $0$. 
As the mediator does not know which gadget will be chosen randomly and the formula player cannot distinguish between them, the only way to guarantee an objective value of $1$ is to recommend a term compatible with \emph{all} the literals recommended to the assignment player. Such a term exists only if that assignment satisfies $\phi$. In turn, any \NFCE with expected objective value $1$ only uses assignments that satisfy $\phi$.

\vspace{0.5cm}

\begin{wrapfigure}[8]{L}{0.35\textwidth}
\vspace{.1cm}
\scalebox{.8}{
\begin{tikzpicture}
    \gKnowledge{j}{i}{0,0}
        \node[overlay,above=.7 of Kixj] (Kl){};
        \draw[dashed] (Kl) edge node [left,pos=.4] {$\scriptsize{\assignP}$} node [right,pos=.4] {$\scriptsize{A_j}$}(Kixj);
        \node[overlay,below=1.5 of Kixj] (Km){};
        \draw[overlay,dashed] (Km) -- (Kixj){};
        \node[left=2.5cm of Kixj]{};
        \node[right=2.5cm of Kixj]{};
        \draw[dashed] ([xshift=+2.6cm,yshift=.65cm]Kixj.center) -- ([xshift=+1.8cm,yshift=.65cm]Kixj.center);
        \draw[-latex] ([xshift=+1.8cm,yshift=.65cm]Kixj.center) -- ([xshift=+1cm,yshift=.65cm]Kixj.center) -- (Kixj);
\end{tikzpicture}
}
\caption*{Knowledge gadget $_iK_j$}
\end{wrapfigure}
\textbf{(2)~Knowledge gadget $\bm{_iK_j}$.} For each Player~$\univP{i}$ and each variable $x_j$ such that $j < i$, there is a Knowledge gadget $_iK_j$. In this gadget, Player~$\univP{i}$ must choose a literal for variable $x_j$. If they choose the same literal as the assignment player, the objective value is $1$; otherwise, it is $0$. As a consequence, the only way to build an \NFCE with expected objective value $1$ is to only have profiles with matching values for the assignment player and the universal players in all these gadgets. Note that Player~$\univP{i}$ receives recommendation specifying how to play in their $i-1$ knowledge gadgets at the start of the play, regardless of whether they visit that specific gadget or not.

\begin{wrapfigure}[8]{R}{0.35\textwidth}
\vspace{-.3cm}
\scalebox{.8}{
\begin{tikzpicture}
    \gUncertainty{i}{0,0}
    \node[overlay,above=.7 of Uxi] (Uj){};
    \draw[dashed] (Uj) edge node [left,pos=.4] {$\scriptsize{\assignP}$} node [right,pos=.4] {$\scriptsize{A_i}$}(Uxi);
    \node[overlay,below=1.5 of Uxi] (Uk){};
    \draw[overlay,dashed] (Uxi) -- (Uk){};
    \node[left=2.5cm of Uxi]{};
    \node[right=2.5cm of Uxi]{};
    \draw[dashed] ([xshift=+2.6cm,yshift=.65cm]Uxi.center) -- ([xshift=+1.8cm,yshift=.65cm]Uxi.center);
    \draw[-latex] ([xshift=+1.8cm,yshift=.65cm]Uxi.center) -- ([xshift=+1cm,yshift=.65cm]Uxi.center) -- (Uxi);
\end{tikzpicture}
}
\caption*{Doubt gadget $\bm{D_i}$}
\end{wrapfigure}
\textbf{(3)~Doubt gadget $\bm{D_i}$.} For each variable $x_i$, there is a Doubt gadget $D_i$. In this gadget, 
the corresponding Player~$\univP{i}$ may either accept a payoff of $0$ by playing $\checkmark$ (in which case the objective value is $1$), or they can make a fair bet on the value chosen for $x_i$ (in which case, the objective value is $0$). Therefore, in order to build an \NFCE with expected objective value $1$, player~$\univP{i}$ needs to always play $\checkmark$. 
As Player~$\univP{i}$ needs to know the values of $x_1,\ldots,x_{i-1}$ to handle the Knowledge gadgets, the \NFCE constraints imply that all possible assignments have to be selected with probability $2^{-n}$.

The Doubt gadgets, which force the mediator to weight each assignment equally, build on similar ideas to the \emph{generalized matching pennies} in~\cite{GoldbergPapadimitriou2006}. Their interaction with both the Knowledge gadgets and the initial chance vertex allow us to enforce a uniform distribution over an exponential number of assignments.

\textit{From co-\np-hardness to \pspace-hardness.}
We now extend our construction to show that the \textsc{Threshold} problem for \NFCE is \pspace-hard. The proof is via a reduction from the Quantified Boolean Formula problem, which asks whether a fully quantified Boolean formula is true.

Let $\Phi$ be a quantified Boolean formula over $n$ variables of the form $Q_1 x_1 \, Q_2 x_2 \, \cdots, \, Q_n x_n \, \varphi(x_1,x_2,\allowbreak\cdots, x_n)$ where  $Q_i \in \{\forall, \exists\}$  and $\varphi$ is in DNF. We construct a polynomial-size extensive-form game $G_\Phi$ from $\Phi$ such that $G_\Phi$ has a \NFCE with expected objective value 1 if and only if $\Phi$ is true.

The structure of $G_\Phi$ builds on the \conp case by using the same gadgets, with the following key modifications: only universally quantified variables are assigned a variable player and a Doubt gadget. However, just like in the \conp case, Player~$\univP{i}$ has a Knowledge gadget for each previously quantified variable, existential or universal. See \Cref{fig:MaxNFCEisPSpace-hard} for a schematic view of the constructed game $\GPhi$.

\begin{figure}[ht]
\begin{center}{
\scalebox{0.8}{
\begin{tikzpicture}
    \tikzset{choice/.style={circle, draw, minimum size = 8}}
    \tikzset{guess/.style={rectangle,draw, inner sep = 4,minimum size=.55cm}}
    \tikzset{check/.style = {diamond,draw, inner sep = 1,minimum size=.7cm}}
    \tikzset{rand/.style = {regular polygon,regular polygon sides=3,draw, inner sep=0,minimum size =.8cm}}
    \tikzset{form/.style = {regular polygon,regular polygon sides=5,draw, inner sep=0,minimum size =.65cm}}
    \tikzset{end/.style={rectangle}}





\gUncertaintyReduced{1}{0,0}
\gKnowledgeReduced{1}{3}{0,-2.1}
\gKnowledgeReduced{1}{5}{0,-4.2}
\gCompatibilityReduced{1}{0,-6.3}

\gKnowledgeReduced{2}{3}{3.4,-2.1}
\gKnowledgeReduced{2}{5}{3.4,-4.2}
\gCompatibilityReduced{2}{3.4,-6.3}

\gUncertaintyReduced{3}{6.7,-2.1}
\gKnowledgeReduced{3}{5}{6.7,-4.2}
\gCompatibilityReduced{3}{6.7,-6.3}

\gKnowledgeReduced{4}{5}{10.0,-4.2}
\gCompatibilityReduced{4}{10.0,-6.3}

\gUncertaintyReduced{5}{13.2,-4.2}
\gCompatibilityReduced{5}{13.2,-6.3}

\node[choice,draw = none,above = .7 of Ux1] (x1){};
\draw[dashed] (x1) edge node[right, pos=.4] {\scriptsize{$\assignP$}} node[left, pos=.4] {\scriptsize{$A_1$}} (Ux1) {};
\draw[dashed] (Ux1) -- (K3x1) -- (K5x1) -- (C1) {};

\node[choice,draw = none,above = .7 of K3x2] (x2){};
\draw[dashed] (x2) edge node[right, pos=.4] {\scriptsize{$\assignP$}} node[left, pos=.4] {\scriptsize{$A_2$}} (K3x2) {};
\draw[dashed] (K3x2) -- (K5x2) -- (C2) {};

\node[choice,draw = none,above = .7 of Ux3] (x3){};
\draw[dashed] (x3) edge node[right, pos=.4] {\scriptsize{$\assignP$}} node[left, pos=.4] {\scriptsize{$A_3$}} (Ux3){};
\draw[dashed] (Ux3) -- (K5x3) -- (C3) {};

\node[choice,draw = none,above = .7 of K5x4] (x4){};
\draw[dashed] (x4) edge node[right, pos=.4] {\scriptsize{$\assignP$}} node[left, pos=.4] {\scriptsize{$A_4$}} (K5x4) {};
\draw[dashed] (K5x4) -- (C4) {};

\node[choice,draw = none,above = .7 of Ux5] (x5){};
\draw[dashed] (x5) edge node[right, pos=.4] {\scriptsize{$\assignP$}} node[left, pos=.4] {\scriptsize{$A_5$}} (Ux5){};
\draw[dashed] (Ux5) -- (C5) {};

\node[form,draw=none,right=1.6 of CFnx5.center,anchor=center] (CFx6) {};
\draw[dashed] (CFx1) -- (CFnx1) -- (CFx2) -- (CFnx2) -- (CFx3) -- (CFnx3) -- (CFx4) -- (CFnx4) -- (CFx5) -- (CFnx5) edge node[above, pos=.7] {\scriptsize{$f$}} (CFx6){};

\node[rand,above = 2.1cm of Ux3] (R){};

\draw[-latex] (R) -- ([xshift=.8cm,yshift=.5cm]Ux1.center) -- (Ux1);

\draw[-latex] (R) -- ([xshift=-5.1cm,yshift=-.15cm]R.center) -- ([xshift=1.6cm,yshift=.5cm]K3x1.center) -- ([xshift=.8cm,yshift=.5cm]K3x1.center) -- (K3x1);

\draw[-latex] (R) -- ([xshift=-5.0cm,yshift=-.25cm]R.center) -- ([xshift=1.7cm,yshift=.5cm]K5x1.center) -- ([xshift=.8cm,yshift=.5cm]K5x1.center) -- (K5x1);

\draw[-latex] (R) -- ([xshift=-4.9cm,yshift=-.35cm]R.center) -- ([xshift=1.8cm,yshift=.5cm]C1.center) -- ([xshift=.8cm,yshift=.5cm]C1.center) -- (C1);

\draw[-latex] (R) -- ([xshift=1.5cm,yshift=.5cm]K3x2.center) -- ([xshift=.8cm,yshift=.5cm]K3x2.center) -- (K3x2);

\draw[-latex] (R) -- ([xshift=1.6cm,yshift=.4cm]K3x2.center) -- ([xshift=1.6cm,yshift=.5cm]K5x2.center) -- ([xshift=.8cm,yshift=.5cm]K5x2.center) -- (K5x2);

\draw[-latex] (R) -- ([xshift=1.7cm,yshift=.3cm]K3x2.center) -- ([xshift=1.7cm,yshift=.5cm]C2.center) -- ([xshift=.8cm,yshift=.5cm]C2.center) -- (C2);

\draw[-latex] (R) -- ([xshift=1.5cm,yshift=.5cm]Ux3.center) -- ([xshift=.8cm,yshift=.5cm]Ux3.center) -- (Ux3);

\draw[-latex] (R) -- ([xshift=1.6cm,yshift=-2.05cm]R.center) -- ([xshift=1.6cm,yshift=.5cm]K5x3.center) -- ([xshift=.8cm,yshift=.5cm]K5x3.center) -- (K5x3);

\draw[-latex] (R) -- ([xshift=+1.7cm,yshift=-1.95cm]R.center) -- ([xshift=1.7cm,yshift=.5cm]C3.center) -- ([xshift=.8cm,yshift=.5cm]C3.center) -- (C3);

\draw[-latex] (R) -- ([xshift=1.5cm,yshift=.5cm]K5x4.center) -- ([xshift=.8cm,yshift=.5cm]K5x4.center) -- (K5x4);

\draw[-latex] (R) -- ([xshift=1.6cm,yshift=.6cm]K5x4.center) -- ([xshift=1.6cm,yshift=.5cm]C4.center) -- ([xshift=.8cm,yshift=.5cm]C4.center) -- (C4);


\draw[-latex] (R) -- ([xshift=1.5cm,yshift=.5cm]Ux5.center) -- ([xshift=.8cm,yshift=.5cm]Ux5.center) -- (Ux5);

\draw[-latex] (R) -- ([xshift=1.6cm,yshift=.6cm]Ux5.center) -- ([xshift=1.6cm,yshift=.5cm]C5.center) -- ([xshift=.8cm,yshift=.5cm]C5.center) -- (C5);

\end{tikzpicture}
}


\caption{The game~$\GPhi$ for $\Phi$ of the form $\forall x_1\, \exists x_2\, \forall x_3\, \exists x_4\, \forall x_5\ \phi(x_1,x_2,x_3,x_4,x_5)$.} 

}
\label{fig:MaxNFCEisPSpace-hard}
\end{center}
\vspace{-0.5cm}
\end{figure}

Instead of guaranteeing that all assignments of $x_1,\ldots,x_n$ are present in an \NFCE, this restriction means that the assignments must cover a \emph{proof} of $\Phi$, \emph{i.e.} a non-empty set $P$ of assignments satisfying $\phi$ such that, if $x_i$ is a universal variable and $\ell_1,\ldots,\ell_i$ is a prefix of an  assignment in $P$, $\ell_1,\ldots,\overline{\ell_i}$ is also a prefix of an assignment in $P$.

\NFCEPSPACEh*
\section{The \textsc{Threshold} Problem for \EFCE is in \np}
\label{sec:Overview EFCE}

In normal-form games, the \textsc{Threshold} problem for correlated equilibria reduces to solving a linear system over the rationals, with a variable for each pure strategy profile and incentive constraints encoded as linear inequalities. Since the number of variables and constraints is polynomial in the game size, the problem admits a polynomial-time solution. This approach, however, does not extend to extensive-form games, where the number of pure strategy profiles grows exponentially with the number of information sets. To address this, previous works~\cite{farina2020polyPublicChance,DBLP:conf/aaai/FarinaBS20,KOLLER1996247} have focused on developing more compact representations of correlation plans using the notion of histories, in particular, those known as relevant histories.

More concretely, \citet{von2008extensive} showed that to verify whether a correlation plan~$\mu$  is an EFCE in 2-player extensive-form games without chance nodes, it suffices to consider the probabilities that $\mu$ assigns to certain combinations of  player histories---those that are relevant. A history, also known as a sequence in the literature, is  commonly understood as a sequence of information sets and actions that describe the path a player takes through the game tree. It can be viewed as a partial pure strategy, encapsulating the player's decisions within a specific information context. The relevant histories, in comparison, represent joint behaviors of the players along the game tree and are meant to provide enough information to encode both the correlation plan and the EFCE constraints. In their  definition, relevant histories are defined as  a subset of pairs  $(h_1,h_2)$, with $h_1$ and~$h_2$ representing the respective histories of players $1$ and~$2$, such that the last information sets in these histories, say~$I_1$ and~$I_2$, lie along the same path from the root to some node in the game tree.

Given a correlation plan~$\CorrPlan$, the probability $p_{(h_1,h_2)}$ of reaching a pair of histories $(h_1,h_2)$ is intuitively the probability of reaching any node whose path prefixes for the two players are exactly $h_1$ and $h_2$, respectively. Given such correlation plans~$\mu$, let  $\boldsymbol{p}_{\mu}$ denote the tuple of values $(p_{(h_1,h_2)})$, where $(h_1,h_2)$ ranges over all relevant histories. The set of all $\boldsymbol{p}_{\mu}$  can be captured by a system $\mathcal{S}$ consisting of polynomially many linear equalities~\cite{von2008extensive}. Moreover, they also showed that relevant histories can fully characterize the incentive constraints of EFCE using polynomially many inequalities. Solving these linear constraints therefore yields a polynomial time procedure for solving the \textsc{Threshold} problem for 2-player game without chance node.

It was shown by \citet{farina2020polyPublicChance} that the same definition of relevant histories yields polynomial-time algorithms for computing optimal EFCE in triangle-free games—a class that, in particular, includes games with public chance moves. Subsequent work by \citet{DBLP:conf/aaai/FarinaBS20} have used a geometric interpretation by viewing correlation plans and solutions to the system~$\mathcal{S}$ as points in polytopes within $\mathbb{R}^{|\Sigma|}$ and $\mathbb{R}^{|H_1| \times |H_2|}$, where $H_i$ denotes the set of histories of player~$i$.

\medskip  

Extending this approach to more general settings faces two major challenges:
\begin{itemize}
    \item First, in the presence of chance nodes, the solutions to the system~$\mathcal{S}$ no longer necessarily correspond to a valid correlation plan,  breaking the soundness of the proposed formulation, as shown by~\citet[Figure 3]{von2008extensive}. However, in the case of triangle-free games, the definition of relevant histories remain sound as the solutions to the system~$\mathcal{S}$ still correspond to valid correlation plans~\cite{farina2020polyPublicChance}.
    \item Second, even without chance nodes, natural extensions of the notion of relevant histories to $n$-player games can result in an exponentially large set of such histories.  For instance, the extension proposed in~\cite{DBLP:conf/aaai/FarinaBS20} defines relevant histories as $n$-tuples $(h_1,\ldots,h_n)$, requiring that for all  $i,j\in \{1,\ldots,n\}$ with $i\neq j$, the pair $(h_i,h_j)$ is relevant. Unfortunately, this definition may lead to an exponential number of relevant histories.
\end{itemize}
 
To overcome the second limitation, in \EFCE and the related solution concepts, including  \AFCE and coarse concepts, we introduce a new family of relevant histories tailored to the  equilibrium concepts under consideration. These histories are sufficient to compute both the expected payoffs under the equilibrium and the best deviation payoffs, thereby enabling the expression of incentive constraints.
For each of these settings, we define two types of histories: \emph{honest} and \emph{deviation} histories. Honest histories are common across all concepts, while deviation histories are carefully adapted to the specific structure of each equilibrium notion.

Below, we provide an intuitive overview of these notions in the context of EFCE, which bears similarity to~\cite[Definition 3.2]{DBLP:conf/sigecom/ZhangFCS22}, and present a summary of the corresponding notions for other concepts in~\Cref{fig:deviatehonest}. A key difference is that we define honest and deviation histories only for the leaves which allows us to extend the notion across different solution concepts.

In \EFCE, to compute  expected average payoffs when all players follow the recommendation, it suffices to consider the honest histories $\hist{v}=(\histP{v}{1},\ldots,\histP{v}{n})$ for each leaf $v$, where $\histP{v}{i}$ is the history that player $i$ must follow to reach $v$. To evaluate incentive constraints, it suffices to consider \emph{deviation histories}, which represent how a player~$i$ might deviate at an information set~$I$ after being recommended to play action~$a$. These histories, denoted by~$\replaceH{I}{a}{v}$, are constructed as follows:
\[
\replaceH{I}{a}{v} = (\histP{v}{1}, \ldots, \histP{v}{i-1}, \histP{I}{i}\cdot(I,a), \histP{v}{i+1}, \ldots \histP{v}{n})
\]

\begin{wrapfigure}[10]{r}{0.40\textwidth}
\vspace{-.4cm}

 \centering

\tikzset{
triangle/.style = {regular polygon,regular polygon sides=3,draw,inner sep = 2},
circ/.style = {circle,fill=cyan!15,draw,inner sep = 0},
term/.style = {circle,draw,inner sep = 1.5,fill=black},
sq/.style = {circle,fill=gray!20, draw, inner sep = 0},
sq1/.style = {circle,fill=red!20, draw, inner sep = 0}
}

\begin{tikzpicture}[scale=0.90]
\tikzstyle{level 1}=[level distance=9mm,sibling distance = 22mm]
\tikzstyle{level 2}=[level distance=7mm,sibling distance=12mm]
\tikzstyle{level 3}=[level distance=7mm,sibling distance=9mm]
\tikzstyle{level 4}=[level distance=7mm,sibling distance=7mm]


\begin{scope}[->, >=stealth]
\node (0) [triangle] {}
child {
  node (00) [circ] {\scriptsize $L_1$}
  child {
    node (000) [sq] {\scriptsize $L_2$}
    child {
      node (0000) [circ] {\scriptsize $L_3$}
          child {
          node (00000) [sq1] {\scriptsize $L_4$}
            child {
            node (000000) [term, label=below:{\scriptsize $\ell_1$}] {}
            edge from parent node [auto,inner sep = 1pt,swap,pos=0.6] {\scriptsize $e$}
            }
            child {
            node (000001) [term, label=below:{\scriptsize $\ell_2$}] {}
            edge from parent node [auto,inner sep = 1pt,pos=0.6] {\scriptsize $\bar{e}$}
            }
          edge from parent node [auto,inner sep = 1pt,swap,pos=0.65] {\scriptsize $c$}
          }
          child {
          node (00001) [term, label=below:{\scriptsize $\ell_3$}] {}
          edge from parent node [auto,inner sep = 1pt] {\scriptsize $\bar{c}$}
          }
      edge from parent node [auto,inner sep = 1pt,swap,pos=0.65] {\scriptsize $b$}
    }
    child {
      node (0001) [term, label=below:{\scriptsize $\ell_4$}] {}
      edge from parent node [auto,inner sep = 1pt] {\scriptsize $\bar{b}$}
      }
    edge from parent node [auto,inner sep = 1pt,swap,pos=0.65] {\scriptsize $a$}
  }
  child {
    node (001) [term, label=below:{\scriptsize $\ell_5$}] {}
    edge from parent node [auto,inner sep = 1pt,pos=0.8] {\scriptsize $\bar{a}$} 
  }
  edge from parent node [auto,inner sep=1pt,swap] {$\frac{1}{3}$}
}
child {
  node (01) [circ] {\scriptsize $R_1$}
  child {
     node (010) [term, label=below:{\scriptsize $r_5$}] {}
    edge from parent node [auto,inner sep = 1pt,swap,pos=0.8] {\scriptsize $a$}
  }
  child {
    node (011) [sq] {\scriptsize $R_2$}
    child {
      node (0110) [term, label=below:{\scriptsize $r_4$}] {}
      edge from parent node [auto,inner sep = 1pt,swap] {\scriptsize $b$}
    }
    child {
      node (0111) [circ] {\scriptsize $R_3$}
        child{
          node (01110) [term, label=below:{\scriptsize $r_3$}] {}
          edge from parent node [auto,inner sep = 1pt,swap] {\scriptsize $d$}
            }
        child {
          node (01111) [sq1] {\scriptsize $R_4$}
              child{
              node (011110) [term, label=below:{\scriptsize $r_2$}] {}
              edge from parent node [auto,inner sep = 1pt,swap,pos=0.6] {\scriptsize $e$}
                }
              child {
              node (011111) [term, label=below:{\scriptsize $r_{1}$}] {}
              edge from parent node [auto,inner sep = 1pt,pos=0.6] {\scriptsize $\bar{e}$}
              }
          edge from parent node [auto,inner sep = 1pt,pos=0.65] {\scriptsize $\bar{d}$}
          }
      edge from parent node [auto,inner sep = 1pt,pos=0.65] {\scriptsize $\bar{b}$}
      }
    edge from parent node [auto,inner sep = 1pt,pos=0.65] {\scriptsize $\bar{a}$} 
  }
  edge from parent node [auto,inner sep=1pt,] {$\frac{2}{3}$}
}
;
\end{scope}

\draw [dashed, thick, red, in=150,out=30] (00) to (01);
\draw [dashed, thick, blue, in=150,out=30] (000) to (011);
\draw [dashed, thick, brown, in=150,out=30] (00000) to (01111);
\end{tikzpicture}
\end{wrapfigure}

\noindent where  $\histP{I}{i}$ is the history of player~$i$ leading to any node in the information set~$I$.

Consider the 3-player game shown on the right, where players 1, 2, and 3 correspond to the blue, gray, and red nodes, respectively. The chance node transitions to $L_1$ or $R_1$ with probabilities $\frac{1}{3}$ and $\frac{2}{3}$, respectively. We denote the information sets by $I_1 = \{L_1,R_1\}$, $I_2 = \{L_2,R_2\}$ and $I_4 = \{L_4,R_4\}$ as indicated in the game tree. Since nodes $L_3$ and $R_3$ belong to singleton information sets, we refer to their respective information sets simply as $L_3$ and $R_3$. Each leaf node is labeled with its utility vector, denoted $\ell_1$, \ldots, $\ell_5$ on the left subtree and $r_1$, \ldots, $r_{5}$ on the right subtree. For simplicity, we identify each leaf by its utility vector. As described above, the honest histories of the leaves correspond precisely to their paths in the game tree. For example:
\[
\hist{\ell_1} = ((I_1,a)\cdot(L_3,c),(I_2,b),(I_4,e))\qquad \hist{\ell_4} = ((I_1,a),(I_2,\bar{b}),\varepsilon)\qquad
\hist{r_4} = ((I_1,\bar{a}),(I_2,b),\varepsilon)
\]

To capture all possible deviations of player 1 after reaching $I_1$ and being recommended~$a$, we consider the deviation histories of the leaves that can be reached by instead playing $\bar{a}$. These are:
\[
\begin{array}{c}
\replaceH{I_1}{a}{\ell_5} = ((I_1,a),\varepsilon,\varepsilon) \qquad \replaceH{I_1}{a}{r_4} = ((I_1,a),(I_2,b),\varepsilon) \qquad \replaceH{I_1}{a}{r_3} = ((I_1,a),(I_2,\bar{b}),\varepsilon)\\[1mm]
\replaceH{I_1}{a}{r_2} = ((I_1,a),(I_2,\bar{b}),(I_3,e)) \qquad \replaceH{I_1}{a}{r_{1}} = ((I_1,a),(I_2,\bar{b}),(I_3,\bar{e}))
\end{array}
\]

By considering both honest and deviation histories, we obtain a set $\Relevant$ of relevant histories of size at most $n \cdot N_I \cdot N_\ell \cdot |\Actions|$ where $N_\ell$ and $N_I$ are the number of leaves and information sets, respectively. We define a system of linear constraints of polynomial size that exactly captures the incentive constraints of an \EFCE. This system uses one variable for each relevant history in~$\Relevant$, where each such variable $p_{\hist{v}}$ for honest histories $\hist{v}$ represents the probability that the correlation plan induces the players to follow exactly the sequence of histories in the tuple.  However, for deviation histories~$\replaceH{I}{a}{v}$, these variables $p_{\replaceH{I}{a}{v}}$ are used to verify that  players have no incentive to deviate, conditioned to reaching information set~$I$ and being recommended~$a$. 

\begin{figure}[t]
\begin{center}
\renewcommand{\arraystretch}{1.2}
\begin{tabular}{|>{\centering\arraybackslash}m{1.3cm}|>{\centering\arraybackslash}m{1cm}|l|p{2.5cm}|}
    \cline{1-3}
    Honest
    & 
    All & 
    $\hist{v} = (\histP{v}{1}, \ldots, \histP{v}{n})$ & 
    \multicolumn{1}{c}{\parbox{2.5cm}{\centering When deviation}} 
    \\
    \hline
    \multirow{5}{*}{Deviation} &
    \EFCE &
    $(\histP{v}{1}, \ldots, \histP{v}{i-1}, \textcolor{green!60!black}{\histP{I}{i}\cdot(I,a)}, \histP{v}{i+1}, \ldots \histP{v}{n})$
    &
    \multirow{2}{*}{\parbox{2.5cm}{\emph{at $I$ after being recommended $a$}}}
    \\
    \cdashline{2-3}
    &
    \AFCE &
    $(\histP{v}{1}, \ldots, \histP{v}{i-1}, \textcolor{green!60!black}{\histP{I}{i}\cdot(I,a)\cdot h'}, \histP{v}{i+1}, \ldots \histP{v}{n})$
    &
    \\
    \cdashline{2-4}
    &
    \EFCCE &
    $(\histP{v}{1}, \ldots, \histP{v}{i-1}, \textcolor{green!60!black}{\histP{I}{i}}, \histP{v}{i+1}, \ldots \histP{v}{n})$
    &
    \multirow{2}{*}{\parbox{2.5cm}{\emph{at $I$}}}
    \\
    \cdashline{2-3}
    &
    \AFCCE &
    $(\histP{v}{1}, \ldots, \histP{v}{i-1}, \textcolor{green!60!black}{\histP{I}{i}} \cdot h', \histP{v}{i+1}, \ldots \histP{v}{n})$
    &
    \\
    \cdashline{2-4}
    &
    \NFCCE &
    $(\histP{v}{1}, \ldots, \histP{v}{i-1}, \textcolor{green!60!black}{\varepsilon}, \histP{v}{i+1}, \ldots \histP{v}{n})$
    &
    \emph{before beginning}\\
    \hline
\end{tabular}
\renewcommand{\arraystretch}{1}
\end{center}
\caption{
Our newly defined family of relevant histories is tailored to the equilibrium concept under consideration. The honest and deviation histories are only defined for leaves~$v$ in the game. For all equilibrium notions, each leaf $v$ is associated with a single honest history, but the deviation histories vary depending on the equilibrium concept. For each leaf $v$, 
(1)~in \EFCE and \AFCE, 
 there is one deviation history   for each pairs of $(I,a)$ of information set~$I$ in the path to~$v$ and each action $a$ available at $I$; 
(2)~in  \EFCCE and \AFCCE,  there is one deviation history for each information set~$I$ in the path to~$v$; and finally 
(3)  in \NFCCE, there is a single deviation history for each player $i$.
In the table, 
 $\histP{v}{j}$ denotes the history of player~$j$  to~$v$, and $\histP{I}{i}$ is the history of player~$i$  to  the information set~$I$. 
We further assume that the history of the leaf $v$ with respect  to player $i$ is of the form $\histP{v}{i} = \histP{I}{i} \cdot (I,a') \cdot h'$ for some action $a' \in \actions{I}$ and history~$h' \in \HistP{i}$.}
\label{fig:deviatehonest}
\end{figure}

\medskip

Going back to our example, write  $\ell_{i,j}$ and $r_{i,j}$  for the $j$-th coordinate of these vectors, which gives the payoff of the $j$-th player when reaching~$\ell_{i}$ and~$r_i$, respectively. The incentive constraints ensuring that player~1, conditioned to reaching $I_1$ and being recommended to play~$a$, has no incentive to deviate, are captured by the following three inequalities:
\begin{align}
\sum_{k=1}^{4} \frac{1}{3} \cdot \ell_{k,1}  \cdot p_{\hist{\ell_k}} + \frac{2}{3} \cdot r_{5,1} \cdot p_{\hist{r_5}} &\geq  \frac{2}{3}\cdot r_{4,1} \cdot p_{\replaceH{I_1}{a}{r_4}} + u^{R_3}\label{eq:incentive-honest vs deviation}\\
u^{R_3} &\geq \frac{2}{3}\cdot r_{3,1} \cdot p_{\replaceH{I_1}{a}{r_3}}\label{eq:incentive-max1}\\
u^{R_3} &\geq \frac{2}{3}\cdot r_{2,1} \cdot p_{\replaceH{I_1}{a}{r_2}} + \frac{2}{3}\cdot r_{1,1} \cdot p_{\replaceH{I_1}{a}{r_{1}}}\label{eq:incentive-max2}
\end{align}
where  the variable $u^{R_3}$ represents player 1’s optimal payoff at (their own) node~$R_3$. The left hand side of \eqref{eq:incentive-honest vs deviation} intuitively represents the expected payoff for player 1 (conditioned to reaching $I_1$ and being recommended to play~$a$) when all players follow the recommended strategy; in this case, only the leaves $\ell_1,\ell_2,\ell_3,\ell_4$ and $r_5$ can be reached. 
 
The right-hand side of \eqref{eq:incentive-honest vs deviation} represents the highest payoff player 1 can obtain by deviating, that is, by choosing action $\bar{a}$. This corresponds to the sum of the payoff from reaching $r_4$ and the best possible payoff player 1 can obtain after reaching~$R_3$. This value is in fact the maximum of the payoffs resulting from playing either $d$ or $\bar{d}$. To avoid introducing a max operator explicitly, we relax this computation using the two inequalities~\eqref{eq:incentive-max1} and~\eqref{eq:incentive-max2}. This relaxation is sound, as the value of $u^{R_3}$ is bounded above by the left-hand side of~\eqref{eq:incentive-honest vs deviation}. 

From the above, the incentive constraints for the players involve only a polynomial number of inequalities and variables. What remains is to address the first limitation in extending this approach to the more general setting with both chance nodes and $n$ players---namely, how to succinctly express a system of equations that captures a subset of~$\boldsymbol{p}_{\mu}=(p_{h})_{h\in \Relevant}$ among all correlation plans~$\mu$,  that contribute to expressing the optimal \EFCE. We note in passing that since computing an optimal \EFCE is \np-hard, it is infeasible to express all~$\boldsymbol{p}_{\mu}$, even with our newly defined family of relevant histories, with polynomially many equations (unless \ptime = \np). This motivates the need to focus on identifying  a small subset of ~$\boldsymbol{p}_{\mu}$ that are needed to expressing the optimal solution. We refer to such set of correlation plans as \textit{an optimality candidate set} for optimal \EFCE. 

We use Carathéodory's theorem to argue that there is always a small-size optimality candidate set.  To this aim, let $\Delta(\Sigma)$ denote the set of all distributions on strategy profiles~$\Sigma$. It suffices to show that the set $\polyCorr$ of all $\boldsymbol{p}_{\mu} \in \mathbb{R}^{\Relevant}$, with $\mu \in \Delta(\Sigma)$, is the image of the simplex $\Delta(\Sigma)$ under a linear map~$\kappa$. This implies that~$\polyCorr$ is convex, and its extreme points are the image of extreme points of~$\Delta(\Sigma)$. Recall that for all $d \in \mathbb{N}$ and all sets of points $A \subseteq \mathbb{R}^d$, Caratheodory's theorem dictates that any point in the convex hull of $A$ can be described as the convex combination of at most $d+1$ points from $A$. We deduce that for all $\mu \in \Delta(\Sigma)$ there exists another correlation plan~$\mu'$, whose support contain at most $k = |\Relevant|+1$ pure strategy profiles, and $\boldsymbol{p}_{\mu}=\boldsymbol{p}_{\mu'}$. 

Our procedure thus starts by guessing such a support $\bsigma_1,\ldots, \bsigma_k$ of pure strategy  profiles, and solving the \textsc{Threshold} problem over correlation plans on this support. We assign a variable $x_{\bsigma_j}$ to denote the probability of each $\bsigma_j$ in the (optimal) correlation plan, and we express  $\boldsymbol{p}_{\mu}$ using these $x_{\bsigma_j}$ variables. The resulting system of linear constraints is of polynomial size, yielding an \np procedure for deciding the \textsc{Threshold} problem for EFCE in arbitrary extensive-form games with chance moves and with the number of players given as input.

Our procedures for solving the \textsc{Threshold} problem for the various forms of correlated equilibria, namely NFCCE, EFCCE, AFCE, and AFCCE, all follow a similar structure. Each variant requires the corresponding definition given in~\Cref{fig:deviatehonest} from our newly defined family of relevant histories. In all cases, we rely on Carathéodory's theorem to ensure the existence of a small optimality candidate set. As explained in the case of \EFCE, after guessing the pure strategy profiles in the optimality candidate set, we can  formulate the \textsc{Threshold} problem into the feasibility of a  polynomial-size system of linear constraints. This leads to the following theorem:

\CorrelatedNPCom*

An alternative indirect proof of Theorem~\ref{thm:MaxCEisNP-complete} can be obtained by combining the mediator-augmented-game construction of \citet{DBLP:conf/sigecom/ZhangFCS22} with the small-support techniques of \citet{DBLP:conf/icml/ZhangFS23}. Although the latter studies zero-sum team-versus-team and team-versus-player games rather than Threshold-EFCE, it establishes the existence of small-support distributions that are realization-equivalent to a given strategy distribution, using Carathéodory's theorem. Applying an analogous compression argument to the mediator-augmented games of the former yields a NP upper bound for Threshold-EFCE. Our proof instead works directly in the original extensive-form game: we identify a polynomial-size family of relevant histories sufficient to encode both the equilibrium objective and the deviation constraints, and then apply Carathéodory's theorem to obtain a polynomial-support certificate.
\section{The \Threshold Problem for \AFCE and \AFCCE is \np-hard for pure 2-Player Games}
\label{sec-AFCENPhard}

\begin{figure}[t]
\begin{center}
\newcommand{\gadgetAFCE}[3]{
    \node[term,label=below:{\scriptsize $#2$},anchor=center,below left = 0.7cm and 0.4cm of #1.center] (v1) {};
\node[term,label=below:{\scriptsize $#3$},anchor=center,below right = 0.7cm and 0.4cm of #1.center] (v2) {};
\draw[-latex] (#1) edge node[auto,swap] {$\top$} (v1);
\draw[-latex] (#1) edge node[auto] {$\bot$} (v2);
}
\scalebox{0.8}{
\begin{tikzpicture}
    \tikzset{clause/.style={rectangle, draw,fill=red!20}}
    \tikzset{var/.style={circle, draw,fill=cyan!15}}
    \tikzset{term/.style = {circle,draw,inner sep = 1.5,fill=black}}

\node[clause,anchor=center] (R) at (0,0) {$R$};

\coordinate[left = 4.55cm of R.center,anchor=center] (Tmid);

\node[term,label=below:{\scriptsize $\textcolor{red}{-1}/\textcolor{blue}{2}$},anchor=center,below left = 1cm and 4.5cm of R.center] (T) {};

\node[clause,below = 1cm of R.center,anchor=center] (C2) {$C_2$};
\node[clause,anchor=center,left = 2cm of C2.center] (C1) {$C_1$};
\node[clause,anchor=center,right = 2cm of C2.center] (C3) {$C_3$};

\node[right = 3cm of C3.center,anchor=west] (Clause1) {clause $C_1$: $x$};
\node[below = 0.5cm of Clause1.west,anchor=west] (Clause2) {clause $C_2$: $\bar{x} \vee y$};
\node[below = 0.5cm of Clause2.west,anchor=west] (Clause3) {clause $C_3$: $\bar{x} \vee \bar{y}$};
\node[below = 0.5cm of Clause3.west,anchor=west] (Formula) {formula $\varphi = C_1 \wedge C_2 \wedge C_3$};

\draw (R) -- (Tmid);
\path (Tmid) edge[-latex]  node[auto,swap] {$\mathtt{end}$} (T);
\draw[-latex] (R) edge node[auto,swap] {$c_1$} (C1) ;
\draw[-latex] (R) edge node[auto,swap] {$c_2$} (C2) ;
\draw[-latex] (R) edge node[auto]  {$c_3$} (C3) ;

\node[var,below left = 1.5cm and 2cm of C1.center,anchor=center,inner sep=2pt] (X1) {$N_{x,1}$};
\node[var,below left = 1.5cm and 0cm of C1.center,anchor=center,inner sep=2pt] (X2) {$N_{\bar{x},2}$};
\node[var,below left = 1.5cm and -2cm of C1.center,anchor=center,inner sep=2pt] (X3) {$N_{\bar{x},3}$};

\node[var,below right = 1.5cm and 2cm of C2.center,anchor=center,inner sep=2pt] (Y1) {$N_{y,1}$};
\node[var,below right = 1.5cm and 4cm of C2.center,anchor=center,inner sep=2pt] (Y2) {$N_{\bar{y},2}$};

\draw[dashed,thick, blue] (X1) -- (X2);
\draw[dashed,thick, blue] (X2) -- (X3);
\draw[dashed,thick, blue] (Y1) -- (Y2);

\draw[-latex] (C1) edge node[auto,swap,pos=0.3,inner sep=1pt] {$x\vphantom{\overline{y}}$} (X1);
\draw[-latex] (C2) edge node[auto,swap,pos=0.3,inner sep=1pt] {$\bar{x}\vphantom{\overline{y}}$} (X2);
\draw[-latex] (C3) edge node[auto,swap,pos=0.3,inner sep=1pt] {$\bar{x}\vphantom{\overline{y}}$} (X3);

\draw[-latex] (C2) edge node[auto,pos=0.3,inner sep=1pt] {$y\vphantom{\overline{y}}$} (Y1);
\draw[-latex] (C3) edge node[auto,pos=0.3,inner sep=1pt] {$\overline{y}$} (Y2);

\gadgetAFCE{X1}{\textcolor{red}{-1},\textcolor{blue}{2}}{\textcolor{red}{0},\textcolor{blue}{0}}

\gadgetAFCE{X2}{\textcolor{red}{0},\textcolor{blue}{0}}{\textcolor{red}{-1},\textcolor{blue}{2}}

\gadgetAFCE{X3}{\textcolor{red}{0},\textcolor{blue}{0}}{\textcolor{red}{-1},\textcolor{blue}{2}}

\gadgetAFCE{Y1}{\textcolor{red}{-1},\textcolor{blue}{2}}{\textcolor{red}{0},\textcolor{blue}{0}}

\gadgetAFCE{Y2}{\textcolor{red}{0},\textcolor{blue}{0}}{\textcolor{red}{-1},\textcolor{blue}{2}}

\end{tikzpicture}
}
\end{center}
\caption{Game~$G_\varphi$ constructed from $\phi$ in the proof of {\np}-hardness for \AFCE and \AFCCE.
}
\label{fig:AFCE hardness}
\end{figure}

In this section, we address an open question posed by~\citet{von2008extensive} concerning the hardness of the \textsc{Threshold} problem for \AFCE in two-player games without chance nodes. We prove \np-hardness by a reduction from the \textsc{3SAT} problem, using a construction inspired by the same paper.
To outline the core ideas  behind the reduction, we first present an illustrative example.

Consider the CNF formula 
$\varphi = x \wedge (\bar{x} \vee y) \wedge (\bar{x} \vee \bar{y})$ over variables~$\{x,y\}$. 
Given such CNF formula, the reduction constructs a game $G_\varphi$; \Cref{fig:AFCE hardness} shows the construction for our example.
The resulting game involves two players: the $\varphi$-player (shown in blue), who aims to satisfy the formula~$\varphi$, and a \emph{spoiler} player (shown in red), whose goal is to falsify it.

The game begins with the spoiler  choosing either a clause $c_i$ of $\varphi$ or a special action $\mathtt{end}$, which terminates the game immediately. If a clause is chosen, the spoiler  then selects a literal from that clause. At this step, the spoiler  has perfect information. Importantly, all literal actions corresponding to the same variable lead to the same information set for the $\varphi$-player. For example, in \Cref{fig:AFCE hardness}, the $\varphi$-player has two information sets, one for $x$ and one for $y$. 
Upon reaching one of its information set, the $\varphi$-player selects an assignment ($\top$ or $\bot$) for the corresponding variable. 

Intuitively, since all literals of a variable lead to the same information set, the $\varphi$-player does not directly know which clause or literal was chosen by the spoiler player. If the assignment chosen by the $\varphi$-player satisfies the literal chosen by the spoiler then the $\varphi$-player receives a payoff of $2$ whereas the spoiler receives $-1$; otherwise, both players receive $0$.
Similarly, if the spoiler plays $\mathtt{end}$ and terminates the game then it receives $-1$ as payoff and the $\varphi$-player receives $2$. Observe that the social welfare at any leaf is at most $1$.

\AFCEhard*

\begin{proof}[Proof sketch.]

Below, we sketch an argument  showing that that  $\varphi$ is satisfiable if and only if there is an \AFCE  in $G_\varphi$ with  social welfare~$1$. 

When $\varphi$ is satisfiable by some assignment $\theta$, then the following defines a pure \AFCE $\CorrPlan_\theta$ with social welfare~$1$: 
    at root node $R$, $\CorrPlan_\theta$ recommends that the spoiler plays $\mathtt{end}$;
    at each clause $c_i$, the spoiler is recommended to play the literal in $c_i$ that is satisfied by $\theta$; and
    the $\varphi$-player is recommended to follow the assignment $\theta$ exactly.
Observe that the spoiler can only deviate at the root node, for example by selecting a clause~$c_i$. Indeed, according to the \AFCE definition, once a deviation has occurred, the spoiler must follow the recommended strategy for the remainder of the game. Since by construction, the chosen literal in every clause is  satisfied by $\theta$, the resulting payoff for the spoiler after playing $c_i$ would still be $-1$, giving it no incentive to deviate. The $\varphi$-player has also no incentive to deviate as it always receives the maximal payoff of $2$.

Conversely, assume now that $\varphi$ is not satisfiable. Assume by contradiction that an \AFCE (or \AFCCE) correlation plan $\correp$ with social welfare $1$ exists. Therefore, every strategy profile $\bsigma$ in the support of $\CorrPlan$ must lead to a terminal with payoff $(\textcolor{red}{-1},\textcolor{blue}{2})$. Each such strategy profile $\bsigma$ corresponds to a complete assignment $\theta$ of $\varphi$. 
Since $\varphi$ is not satisfiable, there must exist a clause $c_i$ such that $\theta \not\models \bsigma(C_i)$. In other words, by deviating unconditionally at the root and playing $c_i$, the spoiler  will fetch $0$ as payoff, which will overall lead to a payoff strictly greater than $-1$. This violates the \AFCE (or \AFCCE) condition. Hence, there can't be an \AFCE (or \AFCCE) whose support contains only pure profiles reaching leaves with payoff $(\textcolor{red}{-1},\textcolor{blue}{2})$. As a consequence, any other AFCE (or \AFCCE) will not have social welfare $1$. We conclude by noting that when $\varphi$ is not satisfiable, the game $G_{\varphi}$ admits no pure \AFCE. Consequently, the reduction also establishes that the \textsc{Any} problem for pure \AFCE is \np-hard. Moreover, this hardness result extends to \AFCCE: when $\varphi$ is satisfiable, any pure \AFCE is also an \AFCCE.
\end{proof}
\section{Complexity of \Threshold and \Any for \NashE}
\label{sec:Nash}

To the best of our knowledge, the only previous results about the \Threshold and \Any problems for \NashE equilibria in extensive games derive from known hardness results on the normal-form case, where the \Threshold problem is \existreal-complete~\cite{BiloM16-Nash-ETR}, while the \Any problem is \fixp-complete for 3 or more players~\cite{etessami2010complexity} and \ppad-complete for two players~\cite{chen2006settling}. For extensive games, it follows that the \Threshold problem is \existreal-hard, while the \Any problem is \fixp-hard for 3 or more players and \ppad-hard for two players.

Nash equilibria differ from more general Correlated Equilibria in the way players select their strategies. Instead of a correlation plan where the mediator chooses at random from a distribution over the set of pure strategy profiles, each player plays a \emph{mixed strategy} that chooses a pure strategy at random from a distribution over their respective sets of strategies. A mixed strategy profile is an element from $\big(\Delta(\Strategies_1)\times\ldots\times\Delta(\Strategies_n)\big)$, which can also be seen as a correlation plan since it is a subset of $\Delta\big(\Strategies_1\times\ldots\times \Strategies_n\big)$.

In extensive-form games, another way of defining Nash equilibria is to change the definition of strategies themselves. Instead of choosing a single action from each information set, a \emph{behavioral strategy (profile)} chooses a distribution over the actions playable in that information set ; $\tau: \InfSet{} \rightarrow \Delta(\Actions)$. 

In  his seminal paper, \citet{Kuhn+1953+193+216} showed that the two kinds of strategy are \emph{payoff equivalent}.
A mixed or behavioral strategy profile is a \NashE equilibrium if no player can improve their payoff by switching to another (pure) strategy. In our proofs, we use behavioral strategies, since they are able to represent mixed strategies with exponential support in polynomial size.

\subsection{\Threshold-\NashE in \existreal}
The idea behind showing the membership of \Threshold-\NashE in \existreal{} is to express the existence of a Nash equilibrium with expected value of the objective function at least $\lambda$ as the existence of a solution to a system of equations of polynomial size.

Firstly, a behavioral strategy profile can be expressed using variables $x_{I,a}$ for each action $a \in \actions{I}$.
\begin{align*}
\sum_{a \in \actions{I}} x_{I,a} = 1 &\qquad \forall i \in \{1, \ldots, n\}, \forall I \in \InfSet{i}
x_{I,a} \geq 0&\qquad \forall i \in \{1, \ldots, n\}, \forall I \in \InfSet{i},\forall a \in \actions{I}
\end{align*}
The expected payoff of player $i$ is $U_i = \sum_{\node \in \Leaves} \payoffP{\node}{i}\, \Reach{\node}\, \prod_{i=1}^n\, \prod_{(I,a) \in \histP{\node}{i}} x_{I,a}$. Given a behavioral strategy profile $\tau$, the probability of reaching a leaf $v$ following the behavioral strategies of all players except $i$ is $p_i(v) =  \Reach{v}\, \prod_{j \in \{1, \ldots, n\}\, \mid\, j\neq i} \, \prod_{(I,a) \in \histP{v}{j}} x_{I,a}$.

Using these values and the fact that players have perfect recall allows to characterize the best response for player $i$ to the strategy profile $\tau_{|-i}$ through a set of linear inequalities~\cite{KOLLER1996247}. The incentive constraint for player $i$ naturally corresponds to their payoff $U_i$ being greater than their best response $\bestpayoff{i}{}$, that is $U_i \geq \bestpayoff{i}{}$ for all $i \in \{1, \ldots, n\}$. And finally to ensure that the expected value of the objective function is at least the threshold $\lambda$ is given by the  constraint $\omega(U_1,\ldots,U_n) \geq \lambda$.

\NashExistsRCom*

\subsection{\Any-\NashE in \fixp}
\fixp, defined by \citet{etessami2010complexity} is the class of total search problems that can be reduced to finding a fixed point of a continuous function defined by an algebraic circuit. We prove inclusion in \fixp{} by constructing a function from the space of behavioral strategy profiles to itself, whose fixed points correspond to a Nash equilibrium. The function is defined as follows:
\begin{align*}
    f(\bprofile)(a)& = \frac{\bprofile(a)+\max\big(0, \delta(\bprofile,a)\big)}{1+\displaystyle\sum_{b|b \sim a}\max\big(0,\delta(\bprofile,b)\big)}\enspace,\\
\end{align*}
where $b \sim a$ signifies that $a$ and $b$ are playable from the same information set and $\delta(\bprofile,a)$ is the difference between the expected payoff of player $i$ in the profile $\bprofile$ and the expected payoff of player $i$ when playing action $a$ at $I$ (and everywhere else $\bprofile{}$ is followed). Essentially $\delta(\bprofile,a)$ is the benefit player $i$ would gain, if she deviates locally by playing only action $a$ at her information set $I$. 

A key property of the difference measure $\delta$ is that at $I$, its mean over the actions at $I$ is $0$.
\begin{restatable}{proposition}{propSumDelta}
\label{prop:sumdelta}
    Let $I$ be an information set and $\bprofile$ be a behavioral strategy profile. We have:
    \begin{align*}
        \sum_{a \in A(I)} \bprofile(a) \cdot \delta (\bprofile,a) &= 0 \enspace .
    \end{align*}
\end{restatable}

This consequently leads to the fact that the $\delta$ is always non-negative, i.e. players don't benefit by deviating locally at information sets. 
\begin{restatable}{lemma}{lemmaNegativeDelta}
    \label{prop:negative delta}
    Let $\bprofile$ be a fixed-point of $f$ and $a$ be an action. Then we have $\delta(\bprofile,a) \leq 0$.
\end{restatable}

Now, considering only local deviations at information set is not sufficient for Nash equilibria, since a player can deviate at several information sets at the same time. In the final step, we lift \Cref{prop:negative delta} to arbitrary deviations of individuals players, using backward induction. 

\AnyNashisFixPcomplete*
\section{Discussion and Open Questions}

Our results provide new insights into the representational complexity, or strategy complexity, of optimal correlated equilibria in extensive-form games, particularly when the number of players is part of the input. An important aspect of this complexity is how large the support of an equilibrium must be.

For equilibrium notions such as \EFCE, \EFCCE, \NFCCE, \AFCE, and \AFCCE, we show that the support size can be bounded polynomially without loss of expressiveness. More precisely, for any equilibrium of one of these types, there exists an equivalent equilibrium with support  consisting of a polynomial number of strategy profiles, which induces the same distribution over the leaf nodes and thus the same payoffs. This result enables NP-membership for the \textsc{Threshold} problem under these notions. Related to this, a pressing  question in line with the study of~\cite{farina2020polyPublicChance}, is the following:

\begin{itemize}
\item Which natural, tractable subclasses of games admit a polynomial-time solution to the \textsc{Threshold} problem for correlated equilibrium notions?
\end{itemize}

In stark contrast with the above notions, we showed that this support-size compressibility does not extend to \NFCE. We construct a family of games where optimal \NFCE necessarily require exponential-size support to realize their distribution over outcomes (no equivalent equilibrium exists over a small support). This distinction raises a deeper algorithmic question: 

\begin{itemize}
\item Could this exponential blowup in representation be a fundamental bottleneck to obtaining a polynomial-time algorithm for the \textsc{Any}-\NFCE problem?
\end{itemize}

Finally, the complexity landscape for \NFCE remains open in the setting of two-player games or games without change nodes. Our \pspace-hardness result for optimal \NFCE does not yet extend to these restricted cases. A natural first step would be to decide whether there are families of two-player games without chance where  optimal \NFCE requires exponential support:

\begin{itemize}
\item Can the general \textsc{Threshold} problem for \NFCE be solved in \pspace, or only in \exptime?
\item Does the \textsc{Threshold} problem for \NFCE remain \pspace-hard without chance node or with a fixed number of players?
\end{itemize}
\bibliographystyle{ACM-Reference-Format}
\bibliography{biblio}

\clearpage 

\appendix

\section*{Appendices} 
\addcontentsline{toc}{section}{Appendix}
\addtocontents{toc}{\protect\setcounter{tocdepth}{-1}}

\etocdepthtag.toc{mtappendix}
\etocsettagdepth{mtchapter}{none}
\etocsettagdepth{mtappendix}{subsection}
\tableofcontents


\section{Extended Preliminaries}

In this section, we formalize some notions required to define the different equilibria in extensive-form games with $n$ players. For readability, we may reintroduce some of the notions that were already presented in \Cref{sec:overview-extensive form games}. 

We denote by $\Node$ and $\Leaves \subseteq \Node$ the set of
nodes and leaves respectively
in our game tree. 
Given two information sets $I$ and $J$ belonging to the same player, we denote by $I \preceq J$ when a node in $J$ can be reached from some node in $I$, that is, when there exist $u \in I$ and $v \in J$ such that $u\rightarrow^* v$  with $\rightarrow^*$ being the transitive closure of the successor relation $\rightarrow$.

Moreover, we define the \emph{payoff} or \emph{utility function} $\payoff{\cdot}$, which associates to each leaf $\node \in \Leaves$ a vector in $\mathbb{Q}^n$, representing the payoffs of each player when the game reaches this leaf. Denote by $\payoffP{v}{i}$ the payoff associated to player $i$ at leaf $\node$, that is, $\payoffP{v}{i} = x_i$ when $\payoff{v} = (x_1,\ldots,x_n)$. In probability space on leaves, we use $U_i$ as the random variable corresponding to $u_i$. 

\subsection{History} 

A \emph{history of a player $i$} is a sequence of pairs of information sets $I \in \InfSet{i}$ and actions $a \in \actions{I}$. By convention, the empty sequence is denoted $\varepsilon$. We denote by $\HistP{i}$ the set of all histories of player $i$, and we define $\mathcal{H} = \HistP{1} \times \ldots \times \HistP{n}$. Most commonly, we will consider the \emph{history of a node $\node$ restricted to player $i$}, denoted by $\histP{\node}{i}$, as the sequence in $\HistP{i}$ of pairs of information sets and actions belonging to player $i$ encountered from the root to the node $\node$ in order.
In other words, if $\node_0 \xrightarrow{a_0} \node_1 \xrightarrow{a_1} \ldots \xrightarrow{a_\ell} \node$ where $j_1, \ldots, j_m$ is the increasing sequence of indices such that $\node_{j_k}$ belongs to some information set  $I_{j_k} \in \InfSet{i}$ then $\histP{\node}{i} = (I_{j_1},a_{j_1}) \cdot \ldots \cdot (I_{j_m},a_{j_m})$. When considering all the players, we denote by $\hist{\node} = (\histP{\node}{1},\ldots,\histP{\node}{n})$ 
the complete history of the node $\node$.\footnote{In some papers from the literature~\cite{DBLP:conf/aaai/FarinaBS20,farina2020polyPublicChance}, the history of a node is called the \emph{sequence} of the node, and is also sometimes directly identified by the last pair $(I,a)$ when assuming a \emph{perfect recall} game~\cite{DBLP:conf/sigecom/ZhangFCS22}.}

In this paper,  we assume that all players have \emph{perfect recall} and information sets group together nodes that are indistinguishable to the player who controls them. Accordingly, 
all nodes in the same information set must have the same history restricted to that player, that is, $\histP{\node}{i} = \histP{\node'}{i}$ for all information sets~$I \in \InfSet{i}$ with $i \in \{1, \ldots, n\}$ and $v,v' \in I$. 

Given an information set~$I \in \InfSet{i}$, 
we define $\histP{I}{i}$ as the history of player $i$ to that information set. Specifically,
$\histP{I}{i} = \histP{v}{i}$ for 
all nodes $v \in I$, as all such nodes   share the same player-$i$ history.

\subsection{Strategy Profile and Correlation Plans}

A \emph{pure strategy} of a player $i$ is a function $\sigma : \InfSet{i} \mapsto \Actions$
that assigns to each information set $I \in \InfSet{i}$ one of its available action, meaning that $\sigma(I) \in \actions{I}$. A \emph{(pure) strategy profile} is a tuple of pure strategies $\bsigma = (\sigma_1, \ldots, \sigma_n)$ where each $\sigma_i$ is a pure strategy of player $i$. We denote by $\Sigma$ and $\Sigma_i$ the sets of pure strategy profiles and pure strategies of player $i$, respectively. A \emph{correlated plan} $\CorrPlan$ is a probability distribution over the strategy profiles, and we denote by $\Delta(\Sigma)$ the set of all correlated plans.

Denote by $\SigmaP{i}$ the set of \emph{partial pure strategies of player $i$}, that is, the set strategies defined only on a subset of $\InfSet{i}$. Given a partial pure strategy~$\beta \in \SigmaP{i}$, with $i\in \{1,\ldots,n\}$,
denote by $\mathit{dom}(\beta)$  the domain of this partially defined function. Partial pure strategies are used to describe the deviation strategies of players; to this aim, 
given a pure strategy profile $\bsigma \in \Sigma$ and a partial pure strategy $\beta \in \SigmaP{i}$,  denote by $\sigma_i[\beta]$ the pure strategy in $\Sigma_i$ that plays $\beta$ when defined and $\sigma_i$ otherwise. Formally, for all $I \in \InfSet{i}$, 
\[
\sigma_i[\beta](I) = \begin{cases} \beta(I) \text{ when $I \in \mathit{dom}(\beta)$}\\
\sigma_i(I) \text{ otherwise.}
\end{cases}
\]
Write $\bsigma[\beta]$ for the pure strategy profile $(\sigma_1,\ldots,\sigma_{i-1},\sigma_i[\beta],\sigma_{i+1},\ldots, \sigma_n)$.

To define  variants of correlated equilibria and throughout our proofs, we use the following \emph{indicators}: given $\sigma \in \Sigma_i$ a pure strategy of player $i$, 
\begin{itemize}
    \item $\confPName{i}{\sigma}:\Node \rightarrow \{0,1\}$,  defined such that  
$\confP{i}{\sigma}{\node} = 1$ iff
 the path from the root to node~$\node$ is consistent with~$\sigma$.
Formally,
\[
    \confP{i}{\sigma}{\node} = 1 \mbox{ iff } \forall (I,a) \in \histP{\node}{i}, \; (I \in \mathit{dom}(\sigma) \Rightarrow \sigma(I) = a) \,.
\] 
\item $\confPName{i}{\sigma}: \HistP{i} \rightarrow \{0,1\}$, defined 
such that 
$\confP{i}{\sigma}{h} = 1$ iff the history~$h$ is consistent with~$\sigma$. 
That is, \[\confP{i}{\sigma}{h} = 1 \text{ iff } \forall (I,a) \in h ,\; (I \in \mathit{dom}(\sigma) \Rightarrow \sigma(I) = a)\,.\] 
\item $\confPName{i}{\sigma}: \bigsqcup_{i=1}^n\Sigma_i \to \{0,1\}$, defined such that $\confPName{i}{\sigma} (\alpha)=1$ iff $\sigma=\alpha$. 
\end{itemize}
We note that  $\confP{i}{\sigma}{\node} = \confP{i}{\sigma}{\histP{\node}{i}}$.
Extending to joint strategy profiles~$\bsigma$, write $\confName{\bsigma}$ for the indicator naturally defined by 
\begin{itemize}
    \item $\conf{\bsigma}{\node} = \prod_{\iota=1}^n \confP{i}{\sigma_i}{\node}$, which equals 1 iff the path from the root to node $v$ is consistent with $\bsigma$;
    \item $\conf{\bsigma}{\bh} = \prod_{\iota=1}^n \confP{i}{\sigma_i}{h_i}$, which equals 1 iff the history tuple $\bh$ is consistent with $\bsigma$;
    \item $\conf{\bsigma}{\alpha} = \sum_{\iota=1}^n \confP{i}{\sigma_i}{\alpha}$, that is, $\conf{\bsigma}{\alpha} =1$ iff $\alpha$ is one component of $\bsigma$.
\end{itemize}

Define $P_C:\Node \to [0,1]$ as the function that assigns to each node~$v\in \Node$ the product of the probabilities on the outgoing edges of the chance nodes along the path from the root to $\node$. Formally,  
$\Reach{\node} = \prod_{i=1}^k c_i$ if $c_1, \ldots, c_k$ are the  probabilities of the outgoing edges of chance nodes along  this path. If there are no chance nodes on the path ($k = 0$), we set  $\Reach{\node} = 1$. Recall that $U_i$ is the random variable of corresponding to the utility of the $i$-th player. Then 
\[
\ExpBasic{i}{\CorrPlan} = \sum_{\bsigma \in \Sigma}  \; \sum_{\node \in \Leaves} \CorrPlan(\bsigma)\, \Reach{\node}\,\payoffP{\node}{i}\,\conf{\bsigma}{\node}\,,
\]
defines the player $i$'s expected payoff under the correlation plan~$\mu$. 
By linearity of expectation, we can compute the  expected value of the objective function under the correlation plan $\CorrPlan$, which we denoted $\ExpOmega{\CorrPlan}{\omega}$.

\section{Correlated Equilibria in Extensive-Form Games}
\label{sec:equilibria}

In this paper, we study 6 notions of correlated equilibrium in extensive-form games, in addition to Nash equilibrium: namely, \NFCE, \EFCE, and \AFCE, along with their respective coarse variants—\NFCCE, \EFCCE, and \AFCCE. In the remainder of this section, we formally define these equilibria, starting with Nash equilibrium.

For convenience, we fix  the following parameters throughout this section:
a correlation plan $\CorrPlan \in \Delta(\Sigma)$, a pure strategy profile $\bsigma \in \Sigma$, and a leaf node~$v \in \Leaves$. 
    

\subsection{Normal Form (Coarse) Correlated Equilibrium.}

In \NFCE, the mediator privately recommends a complete strategy to each player before the game begins. Let $\alpha$ be such a recommended strategy to the $i$-th player, with $i\in \{1,\ldots,n\}$.

Define $\CondNFCE{\CorrPlan}{\alpha}{\bsigma}{\node}$ as the \emph{probability that the mediator draws~$\bsigma$ under $\CorrPlan$ and player $i$ reaches~$\node$, conditional on this player being recommended~$\alpha$}. Formally,  
\[
\CondNFCE{\CorrPlan}{\alpha}{\bsigma}{\node} = \conf{\bsigma}{\node}\,\Reach{\node}\,\frac{\CorrPlan(\bsigma) \, \conf{\bsigma}{\alpha} }{\displaystyle\sum_{\bsigma' \in \Sigma} \CorrPlan(\bsigma')\,\conf{\bsigma'}{\alpha}}
\]
Observe that $\CondNFCE{\CorrPlan}{\alpha}{\bsigma}{\node}$ forms a distribution over $\Sigma \times \Leaves$.
Accordingly, the expected payoff of player $i$ under~$\CorrPlan$, conditional on this player being recommended $\alpha$ is given by
\[
\ExpNFCE{i}{\CorrPlan}{\alpha} = \sum_{\bsigma \in \Sigma} \sum_{\node \in \Leaves} \CondNFCE{\CorrPlan}{\alpha}{\bsigma}{\node} \, \payoffP{\node}{i}
\] 
In what follows, we adopt analogous definitions and maintain consistent notation for expected payoffs across different conditional events, omitting repeated definitions for brevity.
To express the expected payoff when player $i$ aims to deviate by playing strategy $\beta$,
write~$\CondNFCEDeviation{\CorrPlan}{\alpha}{\beta}{\bsigma}{\node}$ as the \emph{probability that the mediator draws~$\bsigma$ under $\CorrPlan$ and player $i$ reaches~$\node$, conditioned on player $i$ receiving the recommendation $\alpha$ and deciding to deviate by playing~$\beta$}. Formally, for all $\beta \in \Sigma_i$,
\[
\CondNFCEDeviation{\CorrPlan}{\alpha}{\beta}{\bsigma}{\node} = \conf{\bsigma[\beta]}{\node} \, \Reach{\node} \, \frac{\CorrPlan(\bsigma)\,\conf{\bsigma}{\alpha}}{\displaystyle\sum_{\bsigma' \in \Sigma} \CorrPlan(\bsigma')\,\conf{\bsigma'}{\alpha}}
\]
This equation intuitively indicates that instead of considering the node that can be reached by $\bsigma$ (i.e., $\conf{\bsigma}{\node}$), we should consider the nodes that can be reached by $\bsigma[\beta]$ (i.e., $\conf{\bsigma[\beta]}{\node}$).

A correlation plan $\CorrPlan$ is an \NFCE when  for all player $i$, for all strategies $\sigma_i$, $\beta \in \Sigma_i$, 
\begin{equation}
\ExpNFCE{i}{\CorrPlan}{\sigma_i} \geq \ExpNFCEDeviation{i}{\CorrPlan}{\sigma_i}{\beta}\tag{NFCE}\label{eq:NFCE}
\end{equation}

In \NFCCE, the player must decide to deviate before receiving  its recommended strategy. This can be expressed by considering only the expected payoff over~$\mu$. Define \[\CondNFCCEDeviation{\CorrPlan}{\beta}{\bsigma}{v} = \CorrPlan(\bsigma)\,\Reach{v}\,\conf{\bsigma[\beta]}{v}\]
as the \emph{probability that the mediator draws~$\bsigma$ under $\CorrPlan$ and player $i$ reaches~$\node$, conditional on player $i$ deciding to deviate by playing $\beta$}. 
A correlation plan $\CorrPlan$ is a \NFCCE when, for all player $i$, for all strategies $\beta \in \Sigma_i$,
\begin{equation}
\ExpNFCCE{i}{\CorrPlan} \geq \ExpNFCCEDeviation{i}{\CorrPlan}{\beta}\tag{NFCCE}\label{eq:NFCCE}
\end{equation}


\subsection{Extended Form (Coarse) Correlated Equilibrium.}

In \EFCCE and \EFCE, the recommended action for a given information set is revealed to the player only when that information set is reached during play.
Recall that in \EFCE, the player can decide to deviate after it received its recommendation whereas in \EFCCE, it must decide before receiving the recommendation.

Below, we use two more indicator functions in formal definitions:
\[\confIn{I}{v}: \Node \rightarrow \{0,1\}  \qquad \qquad \text{ and} \qquad \qquad \conf{\bsigma}{I \rightarrow a} : \actions{I} \to \{0,1\}\,\]
where $\confIn{I}{v} = 1$ iff $v$ is a descendant of some node in $I$, and 
$\conf{\bsigma}{I \rightarrow a} = 1$ iff $\bsigma$ recommends action $a$ at information set reaching $I$, that is,  $\sigma_i(I) = a$ when $I \in \InfSet{i}$.

Fix the information set~$I \in \InfSet{i}$ belonging to the $i$-th player and an action~$a\in \actions{I}$ that is recommended to this player at $I$.  
Define $\CondEFCCE{\CorrPlan}{I}{\bsigma}{v}$ as the \emph{probability that the mediator draws $\bsigma$ under $\CorrPlan$ and player $i$ reaches $v$, conditional on player $i$ reaching~$I$}, that is,
\[
\CondEFCCE{\CorrPlan}{I}{\bsigma}{v} = \frac{\CorrPlan(\bsigma)\,\Reach{v} \conf{\bsigma}{v}\, \confIn{I}{v} }{\sum_{\bsigma' \in \Sigma} \sum_{v'\in I} \CorrPlan(\bsigma')\,\Reach{v'}\, \conf{\bsigma'}{v'}}
\]
holds. 
Similarly, define  $\CondEFCE{\CorrPlan}{I}{a}{\bsigma}{v}$  where the conditional event additionally includes the player receiving the recommendation to play action $a$ at information set $I$. Again, 
\begin{equation}
\label{eq:conditional corr I a}
\CondEFCE{\CorrPlan}{I}{a}{\bsigma}{v} = \frac{\CorrPlan(\bsigma)\,\Reach{v} \conf{\bsigma}{v}\, \conf{\bsigma}{I \rightarrow a}\, \confIn{I}{v}}{\sum_{\bsigma' \in \Sigma} \sum_{v'\in I} \CorrPlan(\bsigma')\,\Reach{v'} \conf{\bsigma'}{v'}\, \conf{\bsigma'}{I \rightarrow a}}\,.
\end{equation}

Since  the player may decide to deviate when reaching~$I$, its deviating strategy $\beta$ may affect the information sets reachable from $I$. To express this, let   $\Sigma^{I \preceq}_i$ be the set of partial pure strategies defined on $\{ J \in \InfSet{i} \mid I \preceq J \}$. 

Write $\CondEFCCEDeviation{\CorrPlan}{I}{\beta}{\bsigma}{v}$ as the \emph{probability that the mediator draws~$\bsigma$ under $\CorrPlan$ and player $i$ reaches~$\node$, conditional on this player reaching~$I$ and deciding to deviate by playing $\beta$}. 
Then 
\begin{equation}
\label{eq:deviation corr I}
\CondEFCCEDeviation{\CorrPlan}{I}{\beta}{\bsigma}{v} = \frac{\CorrPlan(\bsigma)\,\Reach{v} \conf{\bsigma[\beta]}{v}\, \confIn{I}{v} }{\sum_{\bsigma' \in \Sigma} \sum_{v'\in I} \CorrPlan(\bsigma')\,\Reach{v'}\, \conf{\bsigma'}{v'}}
\end{equation}
Similarly, define $\CondEFCEDeviation{\CorrPlan}{I}{a}{\beta}{\bsigma}{v}$ where the conditional event additionally includes the player receiving the recommendation to play $a$ at $I$.
Formally, for all $\beta \in \Sigma^{I \preceq}_i$, 
\begin{equation}
\label{eq:deviation corr I a}
\CondEFCEDeviation{\CorrPlan}{I}{a}{\beta}{\bsigma}{v} = \frac{\CorrPlan(\bsigma)\,\Reach{v}\, \conf{\bsigma[\beta]}{v}\, \conf{\bsigma}{I \rightarrow a}\, \confIn{I}{v}}{\sum_{\bsigma' \in \Sigma} \sum_{v'\in I} \CorrPlan(\bsigma')\,\Reach{v'} \conf{\bsigma'}{v'}\, \conf{\bsigma'}{I \rightarrow a}}
\end{equation}

A correlation plan $\CorrPlan$ is in \EFCCE when
for all players $i$, for all $I \in \mathcal{I}_i$ and for all $\beta \in \Sigma_i^{I \preceq}$,
\begin{equation}
\ExpEFCCE{i}{\CorrPlan}{I} \geq \ExpEFCCEDeviation{i}{\CorrPlan}{I}{\beta}\tag{EFCCE}\label{eq:EFCCE}
\end{equation}
Similarly, a correlation plan $\CorrPlan$ is in \EFCE when for all players $i$, for all $I \in \mathcal{I}_i, a \in A(I)$ and for all $\beta \in \Sigma_i^{I \preceq}$,
\begin{equation}
\ExpEFCE{i}{\CorrPlan}{I}{a} \geq \ExpEFCEDeviation{i}{\CorrPlan}{I}{a}{\beta}\tag{EFCE}\label{eq:EFCE}
\end{equation}


\subsection{Agent Form (Coarse) Correlated Equilibrium}

The formal definition of AFCE is in fact almost identical to EFCE except that we need to limit the choice of deviating strategies $\beta$ to only the information set that was reached. Hence $\CorrPlan$ is an \AFCE when for all players $i$, for all $I \in \InfSet{i}$, $a,a' \in \actions{I}$, defining $\beta = \{ I \mapsto a'\}$, we have
\begin{equation}
\ExpEFCE{i}{\CorrPlan}{I}{a} \geq \ExpEFCEDeviation{i}{\CorrPlan}{I}{a}{\beta}\tag{AFCE}\label{eq:AFCE}
\end{equation}
We note  that in the definition of AFCE, by defining $\beta = \{ I \mapsto a'\}$, the strategy profile $\bsigma[\beta]$ enforces that player $i$ only deviates at information set $I$ by playing $a'$ but will keeps following the recommendations for all other information sets. 
The definition of \AFCCE is analogous:  $\CorrPlan$ is an \AFCCE when for all players $i$, for all $I \in \InfSet{i}$, $a' \in \actions{I}$, defining $\beta = \{ I \mapsto a'\}$, we have
\begin{equation}
\ExpEFCCE{i}{\CorrPlan}{I} \geq \ExpEFCCEDeviation{i}{\CorrPlan}{I}{\beta}\tag{AFCCE}\label{eq:AFCCE}
\end{equation}
\section{The \textsc{Threshold} problem for \NFCE is \pspace-hard}
\label{app:MaxNFCEisPSpace-hard}

In this section, we present the full proof of our main theorem, based on the reduction described in \Cref{fig:MaxNFCEisPSpace-hard}.
\NFCEPSPACEh*

\subsection{Definition of the reduction}

Let $\Phi$ be a fully quantified boolean formula in prenex normal form:
$$\Phi = Q_1 x_1\, Q_2 x_2\, \ldots Q_n x_n\, \phi\ ,$$
where $Q_i \in \{\exists,\forall\}$ and $\phi = \bigvee_{k=1}^m t_k$ is a formula over the variables $x_1,\ldots,x_n$ in disjunctive normal form.

The game $G_\Phi$ involves $y+2$ players, where $y$ is the number of \emph{universal} variables:
\begin{itemize}
    \item a single \emph{assignment player} (Player~$\assignP$);
    \item a single \emph{formula player} (Player~$\formP$);
    \item a \emph{universal player} for each universal variable $x_i$ (Player~$\univP{i})$.
\end{itemize}

It is a tree of height four, with a single Chance node at the root and all leaves at depth $3$. The non-chance nodes can be grouped as 
a series of ``gadgets'' each associated with a variable $x_i$ and containing three controlled nodes. The Chance node leads to all the gadgets with uniform probability\footnote{Any arbitrary probability distribution would work here, so the precision of the Chance node is not an issue.}. In each gadget, the first node belongs to \Passign, who can play either $x_i$ or $\overline{x_i}$. These actions lead to each of the two other nodes (which we call $x_i$ and $\overline{x_i}$ when the gadget is clear from context). These nodes belong to another single player (either \Pform or one of the Players \univP{j}, with $j \geq i$) and lie in the same information set. After that second player takes an action, the game ends with one of the following payoffs:
\begin{itemize}
    \item $\bm{0}$: all players receive $0$ and the objective value is $0$;
    \item $\bm{1}$: all players receive $0$ and the objective value is $1$;
    \item $\forall_j^+$: player $\univP{j}$ receives $+1$, all other players receive $0$, and the objective function is $0$;
    \item $\forall_j^-$: player $\univP{j}$ receives $-1$, all other players receive $0$, and the objective function is $0$.
\end{itemize}

\textbf{(1)~Compatibility gadget $\bm{C_i}$ for variable $\bm{x_i}$.} The second player is \Pform. They have an action for each term of $\phi$. If the term $t$ they choose is compatible with the literal $\ell_i$ chosen by \Passign, \emph{i.e.} $\overline{\ell_i} \notin t$, the payoff is $\bm{1}$; otherwise, it is $\bm{0}$.
\begin{figure}[H]
\scalebox{1}{
\begin{tikzpicture}
    \gCompatibility{i}{0,0}
    \node[overlay,left=1.8 of CFxi.center,anchor=center] (CFx1) {};
    \node[overlay,right=1.8 of CFnxi.center,anchor=center] (CFx4) {};
    \draw[dashed] (CFx1) -- (CFxi) -- (CFnxi) {};
    \draw[dashed] (CFnxi) edge node [above,pos=.9] {$\scriptsize{f}$} (CFx4){};
    \node[overlay,above=.7 of Ci] (Cj){};
    \draw[dashed] (Cj) edge node [left,pos=.4] {$\scriptsize{\assignP}$} node [right,pos=.4] {$\scriptsize{A_i}$}(Ci);
    \draw[dashed] ([xshift=+2.6cm,yshift=.65cm]Ci.center) -- ([xshift=+1.8cm,yshift=.65cm]Ci.center);
    \draw[-latex] ([xshift=+1.8cm,yshift=.65cm]Ci.center) -- ([xshift=+1cm,yshift=.65cm]Ci.center) -- (Ci);
\end{tikzpicture}
}
\caption*{Compatibility gadget $C_i$}
\end{figure}

\textbf{(2)~Knowledge gadget $\bm{_jK_i}$ for Player~$\bm{\univP{j}}$ and variable~$\bm{x_i}$, where $j>i$} The second player is \Puniv{j}. They have two actions $x_i$ and $\overline{x_i}$. If they choose the same action that \Passign chose before, the payoff is $\bm{1}$. Otherwise, it is $\bm{0}$.
\begin{figure}[H]
\scalebox{1}{
\begin{tikzpicture}
    \gKnowledge{i}{j}{0,0}
        \node[overlay,above=.7 of Kixj] (Kl){};
        \draw[dashed] (Kl) edge node [left,pos=.4] {$\scriptsize{\assignP}$} node [right,pos=.4] {$\scriptsize{A_i}$}(Kixj);
        \node[overlay,below=1.5 of Kixj] (Km){};
        \draw[overlay,dashed] (Km) -- (Kixj){};
        \node[left=2.5cm of Kixj]{};
        \node[right=2.5cm of Kixj]{};
        \draw[dashed] ([xshift=+2.6cm,yshift=.65cm]Kixj.center) -- ([xshift=+1.8cm,yshift=.65cm]Kixj.center);
        \draw[-latex] ([xshift=+1.8cm,yshift=.65cm]Kixj.center) -- ([xshift=+1cm,yshift=.65cm]Kixj.center) -- (Kixj);
\end{tikzpicture}
}
\caption*{Knowledge gadget $_jK_i$}
\end{figure}

\textbf{(3)~Doubt gadget $\bm{D_i}$ for universal variable $\bm{x_i}$} The second player is \Puniv{i}. They have three actions $x_i$, $\overline{x_i}$, and $\checkmark$. If they choose $\checkmark$, the payoff is $\bm{1}$. Otherwise, if they choose the same literal as \Passign, the payoff is $\forall_i^+$; if they choose the opposite literal, the payoff is $\forall_i^-$.
\begin{figure}[H]
\scalebox{1}{
\begin{tikzpicture}
    \gUncertainty{i}{0,0}
    \node[overlay,above=.7 of Uxi] (Uj){};
    \draw[dashed] (Uj) edge node [left,pos=.4] {$\scriptsize{\assignP}$} node [right,pos=.4] {$\scriptsize{A_i}$}(Uxi);
    \node[overlay,below=1.5 of Uxi] (Uk){};
    \draw[overlay,dashed] (Uxi) -- (Uk){};
    \node[left=2.5cm of Uxi]{};
    \node[right=2.5cm of Uxi]{};
    \draw[dashed] ([xshift=+2.6cm,yshift=.65cm]Uxi.center) -- ([xshift=+1.8cm,yshift=.65cm]Uxi.center);
    \draw[-latex] ([xshift=+1.8cm,yshift=.65cm]Uxi.center) -- ([xshift=+1cm,yshift=.65cm]Uxi.center) -- (Uxi);

\end{tikzpicture}
}
\caption*{Doubt gadget $\bm{D_i}$}
\end{figure}

\subsection{Useful notions to prove the correctness of the reduction}

\paragraph{Valid and invalid pairs} Let $\mathbb{B}^{n}$ be the set of assignments of $x_1,\ldots,x_{n}$ and $T$ be the set of terms of $\phi$. A pair $(\theta, t)$ in \astree is \emph{valid} if $\theta \models t$ and \emph{invalid} if $\theta \not \models t$.

\paragraph{Explicit proof}
    An \emph{explicit proof} of $\Phi$ is a subset $P$ of $\astree$, such that:
    \begin{itemize}
        \item all pairs in $P$ are valid;
        \item if $x_i$ is a universal variable and $\ell_1,\ldots,\ell_i$ appears in $P$, then $\ell_1,\ldots,\ell_{i-1},\overline{\ell_i}$ appears in $P$;
    \end{itemize}

    An explicit proof of $\Phi$ is \emph{minimal} if:
    \begin{itemize}
        \item if $x_i$ is an existential variable and $\ell_1,\ldots,\ell_i$ appears in $P$, then $\ell_1,\ldots,\ell_{i-1},\overline{\ell_i}$ does \emph{not} appear in $P$.
    \end{itemize}

\begin{proposition}
    \label{prop:minimalproof}
    A formula $\Phi$ is true if and only if $\Phi$ has a minimal explicit proof $P$. 
\end{proposition}

\paragraph{Good strategy profile}
    Let $p = (\ell_1, \ldots, \ell_{n}, t)$ be a valid pair of $\astree$. The \emph{good strategy profile of $p$}, denoted $\profile^p$ is defined as follows:
    \begin{align*}
        \forall 1 \leq i \leq n, \profile^p(A_i) = \ell_i & \qquad \text{ strategy for Player }\assignP\\
         \left.
        \begin{aligned}
        \forall 1 \leq j \leq n \mid x_j \text{ is universal }, \forall 1 \leq i < j, \profile^p(_jK_i) & = \ell_i\\
        \forall 1 \leq j \leq n \mid x_j \text{ is universal }, \profile^p(D_j) & = \checkmark\\
        \end{aligned}
        \right\} &\qquad  \text{ strategy for Player }\univP{j}\\
        \profile^p(C)  = t & \qquad \text{ strategy for Player }\formP\\
    \end{align*}

\begin{proposition}
\label{prop:invalidprofile}
        Let $\profile$ be a strategy profile. If $\profile$ is the good strategy profile of a valid pair, then the expectation under $\profile$ of the objective function is $1$; otherwise, it is strictly less than $1$.
\end{proposition}

\begin{proof}
     It follows from the definition that the expected value for a good strategy profile is $1$ in all gadgets, while there is at least one gadget with expected value $0$ for bad strategy profiles. As, regardless of the strategy profile, the probability of reaching any of the gadgets in $\GPhi$ is strictly positive, \Cref{prop:invalidprofile} follows.
\end{proof}

\subsection{The main proof}

In the following two propositions, we will show that a QBF $\Phi$ is true if and only if there is a \NFCE in $\GPhi$ with expected objective value $1$. 

\begin{proposition}
\label{lem:trueimplieseq1}
Let $\Phi$ be a true quantified boolean formula. There is an \NFCE in $\GPhi$ with expected objective value $1$.
\end{proposition}

\begin{proof}
    Let $P = p_1,\ldots,p_{z}$ be a minimal explicit proof of $\Phi$. We define $\CorrPlan$ as the uniform correlation plan over the good strategy profiles $\profile^{p}$. It follows from~\Cref{prop:invalidprofile} that $\expect_\CorrPlan(\Omega) = 1$.

    Let us now show that $\CorrPlan$ is an \NFCE. The assignment and formula players have a payoff of $0$ in all leaves, so they have no reason to deviate. Likewise, the variable players have no reason to deviate in the Knowledge gadgets since their payoff is the same regardless of their choice. 
    
    Let us consider whether a deviation in a Doubt gadget $D_i$ could be profitable for Player~$\univP{i}$. Let $\alpha = \ell_1,\ldots,\ell_{i-1},\checkmark$ be a recommendation given to \Puniv{i} in $\CorrPlan$ and let us consider the potential deviation where \Puniv{i} plays $x_i$ in $D_i$. 
    
    As $\CorrPlan$ contains only good profiles, all the pairs where $\Puniv{i}$ gets that recommendation start with $\ell_1,\ldots,\ell_{i-1}$. It follows that the deviation is only profitable if the probability that a pair's profile starts with $\ell_1,\ldots,\ell_{i-1},x_i$ is strictly greater than the probability that they received $\ell_1,\ldots,\ell_{i-1}, \overline{x_i}$. 
    
    However, by definition of $\CorrPlan$, these two probabilities are equal, so this deviation is not profitable. A symmetric reasoning shows that the other possible deviation, to $\overline{x_i}$, is not profitable either.

    Therefore, $\CorrPlan$ is a \NFCE and Proposition~\ref{lem:trueimplieseq1} follows.
\end{proof}

\begin{proposition}
\label{lem:eq1impliestrue}
Let $\Phi$ be a quantified boolean formula and $\CorrPlan$ be a \NFCE in  $\GPhi$ with objective value $1$. Then $\Phi$ is true.
\end{proposition}

\begin{proof}
    It follows from \Cref{prop:invalidprofile} that all profiles in the support of $\CorrPlan$ are good profiles of valid pairs of $\astree$. Let $P$ be the set of pairs appearing in $\CorrPlan$. We show that it is an explicit proof of $\Phi$. Let $x_i$ be a universal variable and let $p = \ell_1,\ldots,\ell_i, \ldots,\ell_{n},t$ be a pair in $P$. 
    
    By definition of a good profile, \Puniv{i} receives the strategy $\ell_1,\ldots,\ell_{i-1}, \checkmark$ when $\CorrPlan$ selects $p$. 
    
    As $\CorrPlan$ is a \NFCE, the deviation where \Puniv{i} plays $\ell_i$ in $D_i$ must not be profitable for \Puniv{i}. It follows that the probability that $\CorrPlan$ selects a pair starting with $\ell_1,\ldots,\ell_i$ is equal to the probability that it selects a pair starting with $\ell_1,\ldots,\overline{\ell_i}$. 
    
    It follows that there is a pair $q= \ell_1,\ldots,\ell_{i-1},\overline{\ell_i},\ell'_{i+1},\ldots,\ell'_{n}$ in $P$.
    
    Therefore, $P$ is a proof of $\Phi$ and \Cref{lem:eq1impliestrue} follows.
\end{proof}

\NFCEPSPACEh*

\begin{proof}
Direct from \Cref{lem:eq1impliestrue,lem:trueimplieseq1}.
\end{proof}

\color{black}

\section{The \textsc{Threshold} Problem for \EFCE is in \NP}
\label{sec:EFCE upper bound}


\subsection{Relevant Histories}
\label{sec:relevant histories}

A key novelty of our  procedure is the definition of a family of small sets of relevant histories that are sufficient to express the different kinds of equilibrium through a system of linear constraints. 
As indicated in~\Cref{sec:Overview EFCE}, in the notion of relevant history first introduced in~\cite{von2008extensive} for two-players game, $\bh = (h_1,h_2)$ would be considered relevant when either one of them was the empty history (history of the root) or if a path from a root to some node would pass through the 
last information sets in the histories $h_1$ and $h_2$.
For $n$ player games, the notion was extended in~\cite{DBLP:conf/aaai/FarinaBS20} by considering  $(h_1,\ldots,h_n)$ pairwise-relevant tuples of histories, i.e. where for all $i\neq j$, $(h_i,h_j)$ is relevant. 

\begin{wrapfigure}[14]{l}{0.28\textwidth}
\tikzset{
triangle/.style = {regular polygon,regular polygon sides=3,draw,inner sep = 2},
circ/.style = {circle,fill=cyan!15,draw,inner sep = 2},
term/.style = {circle,draw,inner sep = 1.5,fill=black},
sq/.style = {rectangle,fill=gray!20, draw, inner sep = 4},
sq1/.style = {rectangle,fill=red!20, draw, inner sep = 4},
dashNode/.style = {edge from parent/.style={dashed,draw}},
normNode/.style = {edge from parent/.style={draw,solid}}
}

\begin{tikzpicture}
\tikzstyle{level 1}=[level distance=7mm,sibling distance =14mm]
\tikzstyle{level 2}=[level distance=7mm,sibling distance=12mm]
\tikzstyle{level 3}=[level distance=7mm,sibling distance=8mm]
\tikzstyle{level 4}=[level distance=7mm,sibling distance=7mm]


\begin{scope}[->, >=stealth]
\node (0) [circ] {\scriptsize $v_1$}
child {
  node (00) [circ] {\scriptsize $v_2$}
  child {
    node (000) [circ] {\scriptsize $v_3$}
    child {
      node (0000) [circ] {\scriptsize $v_4$}
          child[dashNode,level distance=14mm,sibling distance=12mm] {
          node (00000) [circ] {\scriptsize $v_n$}
            child[normNode,level distance=7mm,sibling distance=7mm] {
            node (000000) [term, label=below:{\scriptsize $p_{n+1}$}] {}
            edge from parent node [auto,swap,inner sep=1pt] {\scriptsize $a_n$}
            }
            child[normNode,level distance=7mm,sibling distance=7mm] {
            node (000001) [term, label=below:{\scriptsize $p_n$}] {}
            edge from parent node [auto,inner sep=1pt] {\scriptsize $b_n$}
            }
          edge from parent node [auto,swap,inner sep=1] {}
          }
          child {
          node (00001) [term, label=below:{\scriptsize $p_4$}] {}
          edge from parent node [auto,inner sep=1] {\scriptsize $b_4$}
          }
      edge from parent node [auto,swap,inner sep=1] {\scriptsize $a_3$}
    }
    child {
      node (0001) [term, label=below:{\scriptsize $p_3$}] {}
      edge from parent node [auto,inner sep=1] {\scriptsize $b_3$}
      }
    edge from parent node [auto,swap,inner sep=1] {\scriptsize $a_2$}
  }
  child {
    node (001) [term, label=below:{\scriptsize $p_2$}] {}
    edge from parent node [auto,inner sep=1] {\scriptsize $b_2$} 
  }
  edge from parent node [auto,swap,inner sep=1] {\scriptsize $a_1$}
}
child {
    node (001) [term, label=below:{\scriptsize $p_1$}] {}
    edge from parent node [auto,inner sep=1] {\scriptsize $b_1$}
} 
;
\end{scope}
\end{tikzpicture}
\end{wrapfigure}

Unfortunately, this definition does not lead to a polynomial number of relevant histories, as illustrated in the example on the left. In this game, each node $v_i$ is played by the player $i$. Let us denote by $I_i$ the information set $\{ v_i\}$. Thus, for all players $i \in \{1, \ldots, n\}$, $\HistP{i}$ is composed only three histories $\emptyset$, $(I_i,a_i)$ and $(I_i,b_i)$. However, any tuples in 
\[
\HistP{1} \times \ldots \times \HistP{n}
\] are pairwise-relevant since the path from $v_1$ to $v_n$ goes through all information sets $I_1,\ldots, I_n$. Hence, this example produces $3^n$ pairwise-relevant histories.

In this work, we restrict the notion of relevant histories by
defining two types of relevant histories, \emph{honest} and \emph{deviation} histories, which will be used to compute the various expected payoffs in the equilibrium. For instance, in Equation \eqref{eq:EFCE}, $\ExpEFCE{i}{\CorrPlan}{I}{a}$ corresponds to the average payoff of player $i$ when all players follow their recommendations. Hence, it suffices to consider the honest histories of each leaf in the game $\hist{v}=(\histP{v}{1},\ldots,\histP{v}{n})$. We denote by $\RelevantH$ the set $\{ \hist{\node} \mid \node \in \Leaves \}$ of all honest relevant histories.

The right hand side of \Cref{eq:EFCE}, that is, $\ExpEFCEDeviation{i}{\CorrPlan}{I}{a}{\beta}$ requires the tuples of histories where the player $i$, reaching some information set $I$ and being recommended some action $a \in \actions{I}$, decides to deviate by playing $b \in \actions{I}$. Hence for the players $j \neq i$, the tuples will contain the history of the leaves $v$ in  the subgame that follows after player $i$ takes action $b$. Formally, given an information set $I \in \InfSet{i}$ of a player $i \in \{1, \ldots, n\}$, and $a \in \actions{I}$, we denote by 
\[
  \replaceH{I}{a}{v} = (\histP{v}{1}, \ldots, \histP{v}{i-1}, \histP{I}{i}\cdot(I,a), \histP{v}{i+1}, \ldots \histP{v}{n})\,.
\] 
The \emph{deviation histories for \EFCE} are thus defined as follows
\[
\RelevantDEFCE = \{ \replaceH{I}{a}{v} \mid i \in \{1, \ldots, n\}, I \in \InfSet{i}, a \in \actions{I}, v \in \Leaves\}
\]
with the set $\Relevant = \RelevantH \cup \RelevantDEFCE$ being the set of all relevant histories. 
Following the definition, we directly obtain that the number of relevant histories is polynomial in the size of our input game, that is $|\Relevant| \in O(n\,|\Node|\,|\Leaves|\,|\Actions|)$. 


\subsection{Linear Constraints}
\label{sec:EFCE linear constraints}
Consider a subset $S \subseteq \Sigma$ of pure strategy profiles. We define
a system of linear constraints, denoted $\System{S}$, consists of the following types of constraints: 
\begin{itemize}
    \item \eqref{eq:corr1} and \eqref{eq:corr2} for description of correlation plan,
    \item \eqrefCC[S,\Relevant]{eq:corr3} for description of relevant histories,
    \item \Cref{eq:EFCE-best payoff} for expected payoff without deviation,
    \item \Cref{eq:EFCE-best deviation payoff,eq:EFCE-relaxation} for the best deviation payoff, 
    \item \Cref{eq:EFCE-incentive constraint} for incentive constraints requiring that the expected payoff without deviation must be greater than the best deviation, and finally
    \item \Cref{eq:threshold} for
    the expected value of objective function is greater than the threshold.
\end{itemize}


\paragraph{Correlation plan and relevant histories.}
 Given players history tuple $\bh \in \Hist$, we associate a real-valued variable $\Xvar{\bh}$ representing the probability of reaching $\bh$. (That is, the probability that the play follows a path consistent with $\bh$.)

When considering correlation plans in an equilibrium, one should in theory consider those whose support range over $\Sigma$. However, for our procedure, we will show that it is sufficient to consider correlation plan with a small size in order to determine the optimal \EFCE. Therefore, for correlation plans defined on a
subset $S \subseteq \Sigma$ of pure strategies, for a subset $H \subseteq \Hist$, the linear constraints defining the variables $\Xvar{\bh}$ with $\bh \in H$       and the variables $x_\bsigma$ with $\bsigma \in S$, are given below.
\begin{align}
0 \leq x_{\bsigma} \leq 1 &\qquad \forall \bsigma \in S\tag{C1($S$)}\label{eq:corr1}\\
\sum_{\bsigma \in S} x_{\bsigma} = 1\tag{C2($S$)}\label{eq:corr2}\\
\Xvar{\bh} = \sum_{\bsigma \in S} x_{\bsigma}\,\conf{\bsigma}{\bh} &\qquad \forall \bh \in H\tag{C3($S,H$)}\label{eq:corr3}
\end{align} 
Notice that \eqref{eq:corr1}, \eqref{eq:corr2} and \eqref{eq:corr3} have a number of variables and linear constraints bounded by $|H|+|S|+1$. These equations are generic and can be applied to all equilibria. In particular, in the case of \EFCE, we will consider the equations where $H = \Relevant$.


\paragraph{Expected payoff without deviation.}
In addition to the variables $\Xvar{\bh}$, we also introduce some variables $\UvarA{I}{a}$ to represent the 
expected payoff of player $i$ when it reaches the information set $I \in \InfSet{i}$, is recommended the action $a$ and no player deviates. The linear constraints linking the variables~$\UvarA{I}{a}$ with the variables $\Xvar{\bh}$ are given below. For all $I \in \InfSet{i}$, for all $a \in \actions{I}$,
\begin{equation}
  \UvarA{I}{a} = \sum_{v \in \Leaves} \payoffP{v}{i}\,\Reach{v}\,\Xvar{\hist{v}}\,\confIn{I,a}{v}\label{eq:EFCE-best payoff}
\end{equation}
where $\confIn{I,a}{v}$ is the  indicator for whether $(I,a)$ appears in the history of $v$ (more formally, $\confIn{I,a}{v} = 1$ iff $(I,a) \in \hist{v}$). 
\Cref{eq:EFCE-best payoff} basically describes the term $\ExpEFCE{i}{\CorrPlan}{I}{a}$ from the left-hand side of the inequalities in \eqref{eq:EFCE}.

\paragraph{Best deviation payoff.}

We introduce two additional types of variables to describe the best expected payoff for players when they decide to deviate.
The variables $\Vvar{I}{a}{I'}$ represent the best expected payoff for player $i$ upon reaching the information set $I' \in \InfSet{i}$, after having deviated at the information set $I \in \InfSet{i}$ when recommended to play the action $a \in \actions{I}$. Similarly, the variables $\VvarA{I}{a}{I'}{a'}$ represent the best expected payoff of player $i$ when reaching the information set $I' \in \InfSet{i}$, and playing $a' \in \actions{I'}$ after having deviated at information set $I \in \InfSet{i}$ when recommended to play the action $a \in \actions{I}$. 

For all information sets $I,I' \in \InfSet{i}$ with  $i \in \{1, \ldots, n\}$ and $I \preceq I'$, for all actions $a \in \actions{I}$ and~$a' \in \actions{I'}$,   
\begin{align}
\VvarA{I}{a}{I'}{a'} &= \sum_{v \in \Leaves\, \mid\, \histP{v}{i} = h} \payoffP{v}{i}\,\Reach{v}\,\Xvar{\replaceH{I}{a}{v}} + \sum_{J \in \InfSet{i}\, \mid\, \histP{J}{i} = h} \Vvar{I}{a}{J}\label{eq:EFCE-best deviation payoff}\\
\Vvar{I}{a}{I'} &\geq \VvarA{I}{a}{I'}{a'}\label{eq:EFCE-relaxation}
\end{align}
where  $h = \histP{I'}{i}\cdot (I',a')$. 
Intuitively, when computing the best payoff of a player $i$ after reaching $I'$ and playing $a'$, we can split the leaves of the remaining subgame in two, whether the player $i$ still has to play or not. For the leaves reached without any involvement of player $i$, we can directly compute the expected payoff. This is the sum 
\[\sum_{v \in \Leaves\, \mid\, \histP{v}{i} = h} \payoffP{v}{i}\,\Reach{v}\,\Xvar{\replaceH{I}{a}{v}}\]
in \Cref{eq:EFCE-best deviation payoff}. For the leaves where the player $i$ is still involved, we look at the \emph{next} information sets of that player, that is $J \in \InfSet{i}$ such that $\histP{J}{i} = h$ and add the best payoff when reaching that information set. This is the sum $\sum_{J \in \InfSet{i}\, \mid\, \histP{J}{i} = h} \Vvar{I}{a}{J}$ in \Cref{eq:EFCE-best deviation payoff}. The computation of $\Vvar{I}{a}{I'}$ would normally correspond to taking the maximum over all variables  $\VvarA{I}{a}{I'}{a'}$ with $a' \in \actions{I'}$. To keep the constraint linear, we relax this computation by introducing the inequalities $\Vvar{I}{a}{I'} \geq \VvarA{I}{a}{I'}{a'}$ for all $a' \in \actions{I'}$ (see \Cref{eq:EFCE-relaxation}). This relaxation is not problematic, as each $\Vvar{I}{a}{I'}$ is bounded by the variable $\UvarA{I}{a}$ as shown below.


\paragraph{Incentive constraints.}
The incentive constraints ensure that the expected payoff of every player without deviation is always greater than any of their best deviation payoffs, and that the expected value of the objective function is greater than the threshold $\lambda$. For all $i \in \{1, \ldots, n\}$, for all $I \in \InfSet{i}$, for all $a \in \actions{I}$,
\begin{align}
  \UvarA{I}{a} &\geq \Vvar{I}{a}{I}\label{eq:EFCE-incentive constraint}\\
  \sum_{v \in \Leaves} \omega(\payoffP{v}{1},\ldots,\payoffP{v}{n})\,\Reach{v}\,\Xvar{\hist{v}} &\geq \lambda\label{eq:threshold}
\end{align}

Our final procedure for solving the Threshold problem for EFCE consists of guessing a set $S \subseteq \Sigma$ with $|S| \leq |\Relevant| + 1$ and solving the system $\System{S}$.


\subsection{Correctness}

Some of the proposition presented here for the correctness of our procedure for \EFCE can be used to solve the \textsc{Threshold} problem for the other equilibria as well. For instance, to show that we can always guess a small support for the correlation plans, we rely on the Carathéodory theorem, stated below. Let $\mathit{conv}(A)$ denote the convex hull of $A$.

\begin{theorem}[Carathéodory]\label{thm:cara}
Let $A \subseteq \mathbb{R}^d$ be a finite set. Then for every $v \in conv(A)$, there exists some $A_v \subseteq A$ such that $| A_v | \leq d+1$ and $ v \in conv(A_v)$. 
\end{theorem}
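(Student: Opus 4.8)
The plan is to prove the statement by an iterative reduction on the number of points appearing in a convex representation of $v$. Start from an arbitrary representation $v = \sum_{i=1}^{k} \lambda_i a_i$ with $a_1,\ldots,a_k \in A$ distinct, all $\lambda_i > 0$, and $\sum_{i=1}^{k} \lambda_i = 1$; such a representation exists because $v \in \mathit{conv}(A)$, after discarding any point carrying zero weight. If $k \leq d+1$ we are done by taking $A_v = \{a_1,\ldots,a_k\}$, so from now on assume $k \geq d+2$, and the goal of one reduction step is to produce a convex representation of $v$ using strictly fewer points of $A$.

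The key observation is a dimension count. The $k-1 \geq d+1$ vectors $a_2 - a_1, \ldots, a_k - a_1$ all lie in $\mathbb{R}^d$ and are therefore linearly dependent, so there exist scalars $\mu_2,\ldots,\mu_k$, not all zero, with $\sum_{i=2}^{k} \mu_i (a_i - a_1) = 0$. Setting $\mu_1 := -\sum_{i=2}^{k} \mu_i$ yields a nontrivial tuple $(\mu_1,\ldots,\mu_k)$ with $\sum_{i=1}^{k} \mu_i a_i = 0$ and $\sum_{i=1}^{k} \mu_i = 0$. Since the $\mu_i$ sum to zero and are not all zero, at least one of them is strictly positive.

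Now perturb the representation: for $t \geq 0$ set $\lambda_i(t) := \lambda_i - t\mu_i$. Then $\sum_{i=1}^{k} \lambda_i(t) a_i = v$ and $\sum_{i=1}^{k} \lambda_i(t) = 1$ for all $t$. At $t = 0$ every coefficient is positive; let $t^\ast := \min\{\lambda_i/\mu_i : \mu_i > 0\}$, which is well defined (there is a positive $\mu_i$) and strictly positive. At $t = t^\ast$ every $\lambda_i(t^\ast)$ is still nonnegative, and at least one index $j$ attains $\lambda_j(t^\ast) = 0$; dropping $a_j$ gives a convex representation of $v$ using at most $k-1$ points of $A$. Iterating this step, which strictly decreases the point count each time, terminates at a representation with at most $d+1$ points, which is the desired $A_v$.

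I do not expect a genuine obstacle here, as this is essentially the textbook argument; the only points requiring care are the well-definedness of $t^\ast$ (guaranteed by the existence of a positive $\mu_i$ arising from the sum-zero condition) and termination of the iteration (immediate, since $k$ strictly decreases and is bounded below by $d+1$).
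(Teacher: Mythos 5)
Your proof is correct: it is the standard textbook argument for Carath\'eodory's theorem (perturbing a convex representation along an affine dependence until a coefficient vanishes), and every step — the existence of a strictly positive $\mu_i$ from the sum-zero condition, the well-definedness and positivity of $t^\ast$, and termination — is justified. The paper itself states this theorem as a classical result without proof, so there is nothing to compare against; your argument would serve as a complete self-contained proof if one were wanted.
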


The following proposition shows that the probability of reaching a history under any correlation plan can also be obtained under a correlation plan with small support.


\begin{proposition}
\label{prop:Cara}
Let $\CorrPlan \in \Delta(\Sigma)$ be a correlation plan. For all sets of histories $H \subseteq \Hist$, there exists a set $S \subseteq \Sigma$ of pure strategy profiles and a correlation plan $\CorrPlan' \in \Delta(\Sigma)$ such that $S = \supp(\CorrPlan')$, $|S| \leq |H| + 1$ and for all $\bh \in H$,
\[
    \sum_{\bsigma\in \Sigma} \CorrPlan(\bsigma) \conf{\bsigma}{\bh} = \sum_{\bsigma\in \Sigma} \CorrPlan'(\bsigma) \conf{\bsigma}{\bh}\,.
\]
\end{proposition}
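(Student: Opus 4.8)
The plan is to realize the quantities $\sum_{\bsigma}\mu(\bsigma)\conf{\bsigma}{\bh}$, as $\bh$ ranges over $H$, as the coordinates of a single point in the convex hull of a finite set of $\{0,1\}$‑vectors, and then to apply Carathéodory. Concretely, for each pure strategy profile $\bsigma\in\Sigma$ I would define its \emph{$H$-profile} $w_{\bsigma}\in\mathbb{R}^{H}$ by $(w_{\bsigma})_{\bh}=\conf{\bsigma}{\bh}$ for $\bh\in H$, and set $A=\{\,w_{\bsigma}\mid\bsigma\in\Sigma\,\}$, a finite subset of the $|H|$-dimensional space $\mathbb{R}^{H}$. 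Letting $p:=\sum_{\bsigma\in\Sigma}\mu(\bsigma)\,w_{\bsigma}$, the coefficients $\mu(\bsigma)$ are nonnegative and sum to $1$, so $p\in\mathit{conv}(A)$, and by construction $p_{\bh}=\sum_{\bsigma\in\Sigma}\mu(\bsigma)\conf{\bsigma}{\bh}$ for every $\bh\in H$.

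Next I would apply Carathéodory's theorem (\Cref{thm:cara}) with $d=|H|$: there is a subset $A_{p}\subseteq A$ with $|A_{p}|\le |H|+1$ and $p\in\mathit{conv}(A_{p})$. For each vector $w\in A_{p}$ I would choose one pure strategy profile $\bsigma$ with $w_{\bsigma}=w$ (such a profile exists by the definition of $A$); this yields distinct profiles $\bsigma_{1},\dots,\bsigma_{m}$ with $m=|A_{p}|\le |H|+1$, together with coefficients $\lambda_{1},\dots,\lambda_{m}\ge 0$ with $\sum_{j=1}^{m}\lambda_{j}=1$, such that $p=\sum_{j=1}^{m}\lambda_{j}\,w_{\bsigma_{j}}$.

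Finally I would define $\mu'\in\Delta(\Sigma)$ by $\mu'(\bsigma_{j})=\lambda_{j}$ for $j=1,\dots,m$ and $\mu'(\bsigma)=0$ otherwise, discarding from the list the indices with $\lambda_{j}=0$, so that $S:=\supp(\mu')=\{\bsigma_{j}\mid\lambda_{j}>0\}$ satisfies $|S|\le m\le |H|+1$. Then for every $\bh\in H$,
\[
\sum_{\bsigma\in\Sigma}\mu'(\bsigma)\conf{\bsigma}{\bh}
=\sum_{j=1}^{m}\lambda_{j}\,(w_{\bsigma_{j}})_{\bh}
=p_{\bh}
=\sum_{\bsigma\in\Sigma}\mu(\bsigma)\conf{\bsigma}{\bh},
\]
which is the claimed identity. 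I do not expect a genuine obstacle here; the only points requiring a little care are that Carathéodory returns a set of \emph{points} of $A$ (i.e.\ vectors), not strategy profiles, so each chosen vector must be lifted back to some representative profile, and that zero-weight profiles must be pruned to make the support identity $S=\supp(\mu')$ exact rather than merely $S\supseteq\supp(\mu')$. The same argument, with $H$ instantiated to the relevant-history set of whichever equilibrium concept is under consideration, will be reused verbatim in the later soundness proofs, and it also exhibits the set $\polyCorr$ of vectors $\boldsymbol{p}_{\mu}$ as exactly the image of the simplex $\Delta(\Sigma)$ under the linear map $\mu\mapsto\sum_{\bsigma}\mu(\bsigma)\,w_{\bsigma}$.
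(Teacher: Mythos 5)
Your proposal is correct and follows essentially the same route as the paper: both realize the tuple $\bigl(\sum_{\bsigma}\CorrPlan(\bsigma)\conf{\bsigma}{\bh}\bigr)_{\bh\in H}$ as a point in the convex hull of the finite set of $\{0,1\}$-vectors $\{(\conf{\bsigma}{\bh})_{\bh\in H}\mid\bsigma\in\Sigma\}\subseteq\mathbb{R}^{H}$ and invoke Carathéodory with $d=|H|$ (the paper packages this as the linear map $\kappa$ acting on the Dirac basis of $\mathbb{R}^{\Sigma}$, which is the same computation). Your explicit handling of the lift from vectors back to representative profiles and the pruning of zero-weight profiles to get $S=\supp(\CorrPlan')$ exactly is a minor point the paper's write-up passes over, but there is no substantive difference.
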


\begin{proof}
Let $\delta_\bsigma$ denotes the Dirac distribution on the pure strategy profile $\bsigma \in \Sigma$; recall that such dirac distributions constitute a basis~$\mathcal B$ for $\mathbb{R}^{\Sigma}$.
We define the following linear map $\kappa : \mathbb{R}^{\Sigma} \rightarrow \mathbb{R}^{H}$ (with its action on the basis of $\mathbb{R}^{\Sigma}$), 
such that for all $\bh \in H$, 
\[\kappa(\delta_\bsigma)_{\bh} = \conf{\bsigma}{\bh}\,,\]
where   $\kappa(\delta_\bsigma)_{\bh}$ denotes the projection of the element $\kappa(\delta_\bsigma) \in \mathbb{R}^{H}$ onto the coordinate $\bh$. 
Fix a correlation plan~$\mu \in \Delta(\Sigma)$, and observe that the map $\kappa$ is  such that  
\[
\kappa(\CorrPlan)_{\bh} = \sum_{\bsigma \in \Sigma} \CorrPlan(\bsigma)\, \conf{\bsigma}{\bh}\,
\]
where $\bh \in H$. 
Denoting $\polyCorr = \kappa(\Delta(\Sigma))$, from the linearity of~$\kappa$ and convexity of~$\Delta(\Sigma)$ it follows that $\polyCorr$ is convex and equals to $\mathit{conv}(\kappa(\mathcal B))$. 
By \Cref{thm:cara}, since $\kappa(\CorrPlan) \in \polyCorr$, there exists $S \subseteq \Sigma$ of size at most~$|H|+1$
such that 
$\kappa(S) \subseteq \kappa(\mathcal B)$ and $\kappa(\CorrPlan) \in \mathit{conv}(\kappa(S)$.

Write $\kappa(\CorrPlan)$ as $\sum_{\bsigma\in S} \alpha_{\bsigma} \kappa(\delta_{\bsigma})$.
Define the correlation plan $\CorrPlan' \in \Delta(\Sigma)$ as 
$\CorrPlan'=\sum_{\bsigma\in S} \alpha_{\bsigma} \delta_{\bsigma}$. By construction, $\kappa(\CorrPlan) =\kappa(\CorrPlan')$, which  concludes the proof.
\end{proof}

Using the previous proposition with the set $H = \Relevant$, we can restrict attention to correlation plans with support size at most $|\Relevant|+1$. This enables an \NP-procedure:   guess such a support and  verify the existence of an \NFCE, as stated in the following main result of the section. To state the result precisely, we first recall that $\System{S}$ is the system of linear constraints consists of  \eqrefCA[S]{eq:corr1}, \eqrefCB[S]{eq:corr2}, \eqrefCC[S,\Relevant]{eq:corr3} and \Cref{eq:EFCE-best payoff,eq:EFCE-best deviation payoff,eq:EFCE-relaxation,eq:EFCE-incentive constraint,eq:threshold}. 
\begin{proposition}
\label{th:EFCE}
Given an extensive-form game with perfect recall, 
there exists a correlation plan $\CorrPlan$ in \EFCE with \[\ExpOmega{\CorrPlan}{\omega} \geq \lambda\] if and only if there exists $S \subseteq \Sigma$ with $|S| \leq |\Relevant|+1$ such that $\System{S}$ admits a solution.
\end{proposition}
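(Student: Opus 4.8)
The proof is an ``if and only if'' between the existence of an \EFCE whose objective value meets the threshold and the feasibility of the guessed linear system $\System{S}$ for some small support $S$. I will prove the two directions separately, and in both I will treat $\System{S}$ as a conjunction of the blocks \eqref{eq:corr1}, \eqref{eq:corr2}, \eqref{eq:corr3}, \eqref{eq:EFCE-best payoff}, \eqref{eq:EFCE-best deviation payoff}, \eqref{eq:EFCE-relaxation}, \eqref{eq:EFCE-incentive constraint}, \eqref{eq:threshold}, isolating the role of each block.

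\emph{($\Rightarrow$) From an \EFCE to a feasible system.} Suppose $\CorrPlan$ is an \EFCE with $\ExpOmega{\CorrPlan}{\omega}\ge\lambda$. First apply \Cref{prop:Cara} with $H=\Relevant$ to obtain a correlation plan $\CorrPlan'$ with $S:=\supp(\CorrPlan')$ of size at most $|\Relevant|+1$ such that $\sum_\bsigma\CorrPlan(\bsigma)\conf{\bsigma}{\bh}=\sum_\bsigma\CorrPlan'(\bsigma)\conf{\bsigma}{\bh}$ for all $\bh\in\Relevant$; in particular the probabilities of all honest histories $\hist{v}$ (and hence the leaf distribution, since $\RelevantH=\{\hist{v}\mid v\in\Leaves\}$) and of all deviation histories $\replaceH{I}{a}{v}$ are preserved. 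I then set $x_{\bsigma}=\CorrPlan'(\bsigma)$ for $\bsigma\in S$, which satisfies \eqref{eq:corr1}--\eqref{eq:corr2}, and define $\Xvar{\bh}=\sum_{\bsigma\in S}x_{\bsigma}\conf{\bsigma}{\bh}$ for $\bh\in\Relevant$, satisfying \eqref{eq:corr3}; by the preservation property these $\Xvar{\bh}$ equal the corresponding reaching probabilities under $\CorrPlan$. Next I \emph{define} the auxiliary variables: $\UvarA{I}{a}$ by the right-hand side of \eqref{eq:EFCE-best payoff} (so \eqref{eq:EFCE-best payoff} holds by fiat), and the deviation variables $\Vvar{I}{a}{I'}$ and $\VvarA{I}{a}{I'}{a'}$ by backward induction over the $\preceq$-order on player-$i$ information sets following \eqref{eq:EFCE-best deviation payoff}, taking $\Vvar{I}{a}{I'}:=\max_{a'\in\actions{I'}}\VvarA{I}{a}{I'}{a'}$, which makes \eqref{eq:EFCE-relaxation} hold with equality for the maximizing $a'$ and as an inequality otherwise. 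The content to verify is then twofold: (i) $\UvarA{I}{a}$ as defined equals $\ExpEFCE{i}{\CorrPlan}{I}{a}$ up to the normalization factor $\DenomEFCE$ — I argue the equations are written in \emph{unnormalized} form so that $\UvarA{I}{a}=\DenomEFCE\cdot\ExpEFCE{i}{\CorrPlan}{I}{a}$, and likewise $\Vvar{I}{a}{I}=\DenomEFCE\cdot\max_\beta\ExpEFCEDeviation{i}{\CorrPlan}{I}{a}{\beta}$, where the maximum over $\beta\in\Sigma_i^{I\preceq}$ decomposes exactly along the recursion \eqref{eq:EFCE-best deviation payoff} because a deviating strategy is chosen independently at each downstream information set; (ii) the \EFCE inequalities $\ExpEFCE{i}{\CorrPlan}{I}{a}\ge\ExpEFCEDeviation{i}{\CorrPlan}{I}{a}{\beta}$ for all $\beta$, multiplied by the nonnegative $\DenomEFCE$, give exactly \eqref{eq:EFCE-incentive constraint}; and \eqref{eq:threshold} is just $\ExpOmega{\CorrPlan}{\omega}\ge\lambda$ rewritten via linearity of $\omega$ and the identity $\Xvar{\hist{v}}=\Reach{v}$-weighted leaf probability. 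A subtlety is information sets $I$ that are unreachable under $\CorrPlan$ (so $\DenomEFCE=0$): there the \EFCE constraint is vacuous, but the linear system still imposes \eqref{eq:EFCE-incentive constraint}; I handle this by noting that when $\DenomEFCE=0$ all the relevant $\Xvar{\bh}$ terms vanish, so both sides of \eqref{eq:EFCE-incentive constraint} are $0$ and the constraint holds trivially.

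\emph{($\Leftarrow$) From a feasible system to an \EFCE.} Conversely, given $S$ and a solution of $\System{S}$, let $\CorrPlan=\sum_{\bsigma\in S}x_{\bsigma}\delta_{\bsigma}$, which is a correlation plan by \eqref{eq:corr1}--\eqref{eq:corr2}. By \eqref{eq:corr3} the variables $\Xvar{\bh}$ are exactly the reaching probabilities of $\bh$ under $\CorrPlan$, for every $\bh\in\Relevant$. Then \eqref{eq:EFCE-best payoff} forces $\UvarA{I}{a}=\DenomEFCE\cdot\ExpEFCE{i}{\CorrPlan}{I}{a}$ as above. For the deviation side I show by induction on $\preceq$ that any solution of \eqref{eq:EFCE-best deviation payoff}--\eqref{eq:EFCE-relaxation} satisfies $\Vvar{I}{a}{I'}\ge\DenomEFCE\cdot\max_{\beta\in\Sigma_i^{I'\preceq}}(\text{deviation payoff from }I'\text{ onward})$ — this is exactly where the relaxation ``is not problematic'': the inequalities \eqref{eq:EFCE-relaxation} only upper-bound $\Vvar{I}{a}{I'}$ from below by each choice of $a'$, hence by the best one, and the recursion propagates this to $\Vvar{I}{a}{I}$. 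Combining with \eqref{eq:EFCE-incentive constraint} gives $\DenomEFCE\cdot\ExpEFCE{i}{\CorrPlan}{I}{a}=\UvarA{I}{a}\ge\Vvar{I}{a}{I}\ge\DenomEFCE\cdot\ExpEFCEDeviation{i}{\CorrPlan}{I}{a}{\beta}$ for every $\beta$; dividing by $\DenomEFCE>0$ (or using vacuity when it is $0$) yields the \EFCE inequalities, so $\CorrPlan$ is an \EFCE. Finally \eqref{eq:threshold} gives $\ExpOmega{\CorrPlan}{\omega}\ge\lambda$.

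\emph{Main obstacle.} The routine parts are bookkeeping; the crux is establishing that the two sums in \eqref{eq:EFCE-best deviation payoff}, with the max-relaxation \eqref{eq:EFCE-relaxation}, correctly compute (a lower bound matching, on the optimal branch, the exact value of) $\max_\beta\ExpEFCEDeviation{i}{\CorrPlan}{I}{a}{\beta}$ in \emph{both} directions. This requires showing the best-deviation payoff factorizes over the subtree of player-$i$ information sets reachable after deviating at $I$ — i.e.\ that an optimal deviation can be chosen myopically information-set by information-set — which relies on perfect recall (so each downstream information set has a unique player-$i$ history $h$, justifying the partition of leaves in \eqref{eq:EFCE-best deviation payoff} by $\histP{v}{i}=h$) and on the fact that the relevant-history variables $\Xvar{\replaceH{I}{a}{v}}$ already encode the joint behavior of the other players along that leaf, so only player $i$'s choices remain to be optimized. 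I would isolate this as a lemma: for fixed $(I,a)$, the family $\{\Vvar{I}{a}{I'}\}$ defined by the recursion equals $\DenomEFCE$ times the optimal continuation value, and then the forward/backward directions both follow by plugging in.
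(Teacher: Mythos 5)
Your proposal is correct and follows essentially the same route as the paper: Carath\'eodory's theorem to restrict to a support of size $|\Relevant|+1$, the observation that the incentive constraints only require the $\Vvar{I}{a}{I'}$ variables to upper-bound the true best-deviation value (so the max-relaxation is sound), and a backward-induction lemma — relying on perfect recall to make the downstream information sets disjoint — showing that the exact optimal deviation values satisfy the recursion and that every solution dominates them componentwise, which is precisely the paper's generic best-deviation lemma. The only differences are cosmetic: you apply Carath\'eodory before constructing the assignment rather than after, and you make the $\DenomEFCE=0$ degenerate case explicit, which the paper leaves implicit.
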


To prove this proposition, we need to establish few intermediary results that relates to the linear constraints of the system. Throughout this section, we fix 
 $\theta$ to be an assignment over all variables of~$\System{S}$.

\paragraph{Correlation plan and relevant histories.} We first focus on the linear constraints defined in \Cref{sec:EFCE linear constraints}. 
When $\theta$ satisfies \eqref{eq:corr1} and \eqref{eq:corr2}, we say that it corresponds to  a correlation plan $\CorrPlan$ if $\theta(x_\bsigma) = \CorrPlan(\bsigma)$ holds.
By definition, any correlation plan $\CorrPlan$ with support~$S$ corresponds to a solution $\theta$ of \eqref{eq:corr1} and \eqref{eq:corr2} and vice-versa.
Let us now study \Cref{eq:corr3}.

\begin{restatable}{lemma}{propzvalue}
\label{prop:Zvar-best payoff}
Let $\theta$ corresponds to a correlation plan $\CorrPlan$.
It satisfies \eqref{eq:corr3} if and only if, for all history tuples~$\bh \in H$,
\begin{align}
\theta(\Xvar{\bh}) &= \sum_{\bsigma \in \Sigma}\,\CorrPlan(\bsigma) \, \conf{\bsigma}{\bh}  \tag{EqZ(H)}\label{eq:proof-correlation}
\end{align}
\end{restatable}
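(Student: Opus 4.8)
The statement of \Cref{prop:Zvar-best payoff} is essentially unfolding the definition of the constraint \eqref{eq:corr3}, so the proof should be a short, direct computation rather than anything subtle. The plan is as follows. Fix an assignment $\theta$ that corresponds to a correlation plan $\CorrPlan$, so that by definition $\theta(x_\bsigma) = \CorrPlan(\bsigma)$ for all $\bsigma \in S$, and $\CorrPlan$ has support contained in $S$ (hence $\CorrPlan(\bsigma) = 0 = \theta(x_\bsigma)$ for $\bsigma \in \Sigma \setminus S$, if $S$ is the support; more generally the sum over $\Sigma$ and the sum over $S$ agree). The constraint \eqref{eq:corr3}, namely $\Xvar{\bh} = \sum_{\bsigma \in S} x_\bsigma\,\conf{\bsigma}{\bh}$, is a single linear equation for each $\bh \in H$, so $\theta$ satisfies it precisely when $\theta(\Xvar{\bh}) = \sum_{\bsigma \in S} \theta(x_\bsigma)\,\conf{\bsigma}{\bh}$.

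First I would substitute $\theta(x_\bsigma) = \CorrPlan(\bsigma)$ into the right-hand side to get $\sum_{\bsigma \in S} \CorrPlan(\bsigma)\,\conf{\bsigma}{\bh}$. Then I would observe that the summands for $\bsigma \notin S$ vanish, since $\CorrPlan(\bsigma) = 0$ for such $\bsigma$ (the support of $\CorrPlan$ is contained in $S$), so this equals $\sum_{\bsigma \in \Sigma} \CorrPlan(\bsigma)\,\conf{\bsigma}{\bh}$, which is exactly the right-hand side of \eqref{eq:proof-correlation}. This gives both directions at once: $\theta$ satisfies \eqref{eq:corr3} for the tuple $\bh$ iff $\theta(\Xvar{\bh})$ equals this sum, and ranging over all $\bh \in H$ gives the claimed equivalence with \eqref{eq:proof-correlation}.

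One small point to be careful about: the excerpt writes \eqref{eq:corr3} as a sum over $\bsigma \in S$ whereas \eqref{eq:proof-correlation} is phrased as a sum over $\bsigma \in \Sigma$; the equivalence of the two sums is exactly the observation that $\CorrPlan$ is supported on $S$ (equivalently, $\theta(x_\bsigma)$ is only defined/nonzero for $\bsigma \in S$), which is part of what "$\theta$ corresponds to the correlation plan $\CorrPlan$" means. I would state this explicitly so the bookkeeping is transparent. There is no real obstacle here — the only thing to get right is making the correspondence between the syntactic object (the assignment $\theta$ and the variables of $\System{S}$) and the semantic object (the correlation plan $\CorrPlan$ and the probabilities it assigns) completely precise, so that the chain of equalities is unambiguous.
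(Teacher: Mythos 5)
Your proof is correct and follows essentially the same route as the paper's: unfold \eqref{eq:corr3}, substitute $\theta(x_\bsigma)=\CorrPlan(\bsigma)$, and use that $\CorrPlan$ is supported on $S$ to extend the sum from $S$ to $\Sigma$. Your explicit remark about the $S$-versus-$\Sigma$ bookkeeping is exactly the one step the paper's proof also relies on, so nothing is missing.
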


\begin{proof}
Observe that  $\theta(\Xvar{\bh})$ satisfies \eqref{eq:corr3} for all all tuple histories $\bh \in H$ if and only if \[\theta(\Xvar{\bh}) = \sum_{\bsigma \in S} \theta(x_\bsigma) \conf{\bsigma}{\bh}\,.\]
Since $S$ is the support of $\CorrPlan$, it follows that
\[
\theta(\Xvar{\bh}) = \sum_{\bsigma \in \Sigma} \theta(x_\bsigma) \conf{\bsigma}{\bh} = \sum_{\bsigma \in \Sigma} \CorrPlan(\bsigma) \conf{\bsigma}{\bh}\qedhere
\]
\end{proof}

\paragraph{Expected payoff without deviation.}
To express the correctness of \Cref{eq:EFCE-best payoff} computing the expected payoff without deviation, we denote by $\DenomEFCE$ the denominator of \Cref{eq:conditional corr I a}, that is:
\[
\DenomEFCE = \sum_{\bsigma' \in \Sigma} \sum_{v'\in I} \CorrPlan(\bsigma')\,\Reach{v'} \conf{\bsigma'}{v'}\, \conf{\bsigma'}{I \rightarrow a}
\]
The values of $\UvarA{I}{a}$ and $\Xvar{\bh}$ under~$\theta$ are given by the following lemma. 


\begin{restatable}{lemma}{propEFCEbestpayoff}
\label{prop:EFCE-best payoff}
Let $\CorrPlan$ be a correlation plan, and suppose that $\theta$ satisfies~\eqreftag[\RelevantH]{eq:proof-correlation-variables}. Then $\theta$ also fulfills \Cref{eq:EFCE-best payoff} if and only if, for all information sets~$I \in \InfSet{i}$ with $i\in \{1,\ldots,n\}$ and all actions $a \in \actions{I}$,
\[
\theta(\UvarA{I}{a}) = \DenomEFCE \cdot \ExpEFCE{i}{\CorrPlan}{I}{a}\,.
\]
\end{restatable}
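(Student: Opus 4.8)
The plan is to unfold both sides of the claimed equality and reduce everything to a single node-level combinatorial identity. Note first that the variable $\UvarA{I}{a}$ occurs only in \Cref{eq:EFCE-best payoff}, where it is written as an explicit linear combination of the variables $\Xvar{\hist{v}}$. Hence $\theta$ fulfills \Cref{eq:EFCE-best payoff} if and only if, for every $i$, every $I \in \InfSet{i}$ and every $a \in \actions{I}$, we have $\theta(\UvarA{I}{a}) = \sum_{v\in\Leaves}\payoffP{v}{i}\,\Reach{v}\,\theta(\Xvar{\hist{v}})\,\confIn{I,a}{v}$. So it suffices to prove that, granting the hypothesis $\theta(\Xvar{\hist{v}}) = \sum_{\bsigma\in\Sigma}\CorrPlan(\bsigma)\conf{\bsigma}{\hist{v}}$ for every leaf $v$ (which is \Cref{prop:Zvar-best payoff} applied with $H=\RelevantH$), this quantity equals $\DenomEFCE\cdot\ExpEFCE{i}{\CorrPlan}{I}{a}$.

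First I would substitute the hypothesis and exchange the two finite sums, getting $\sum_{\bsigma\in\Sigma}\CorrPlan(\bsigma)\sum_{v\in\Leaves}\payoffP{v}{i}\,\Reach{v}\,\conf{\bsigma}{\hist{v}}\,\confIn{I,a}{v}$. On the other side, expanding $\ExpEFCE{i}{\CorrPlan}{I}{a} = \sum_{\bsigma\in\Sigma}\sum_{v\in\Leaves}\CondEFCE{\CorrPlan}{I}{a}{\bsigma}{v}\,\payoffP{v}{i}$ and inserting the definition of $\CondEFCE{\CorrPlan}{I}{a}{\bsigma}{v}$ from \Cref{eq:conditional corr I a}, whose denominator is exactly $\DenomEFCE$, gives
\[
\DenomEFCE\cdot\ExpEFCE{i}{\CorrPlan}{I}{a} \;=\; \sum_{\bsigma\in\Sigma}\CorrPlan(\bsigma)\sum_{v\in\Leaves}\payoffP{v}{i}\,\Reach{v}\,\conf{\bsigma}{v}\,\conf{\bsigma}{I \rightarrow a}\,\confIn{I}{v}.
\]
Thus the whole statement reduces to the identity, for every pure profile $\bsigma$ and every leaf $v$,
\[
\conf{\bsigma}{\hist{v}}\,\confIn{I,a}{v} \;=\; \conf{\bsigma}{v}\,\conf{\bsigma}{I \rightarrow a}\,\confIn{I}{v}.
\]

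The key step is this pointwise identity. The factor $\conf{\bsigma}{\hist{v}}$ equals $\conf{\bsigma}{v}$ by the preliminary remark $\confP{i}{\sigma}{v} = \confP{i}{\sigma}{\histP{v}{i}}$ (pure strategies are total, so there is no domain restriction to worry about). It then remains to check that whenever $\conf{\bsigma}{v}=1$ one has $\confIn{I,a}{v} = \conf{\bsigma}{I \rightarrow a}\,\confIn{I}{v}$, the case $\conf{\bsigma}{v}=0$ making both sides vanish. If $\confIn{I,a}{v}=1$ then $(I,a)\in\histP{v}{i}$, so the (unique) root-to-$v$ path passes through a node of $I$ and plays $a$ there; hence $\confIn{I}{v}=1$, and $\conf{\bsigma}{v}=1$ forces $\sigma_i(I)=a$, so $\conf{\bsigma}{I \rightarrow a}=1$. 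Conversely, if $\conf{\bsigma}{I \rightarrow a}=1$ and $\confIn{I}{v}=1$, then $v$ is a strict descendant (it is a leaf, hence not in any information set) of some $w\in I$; the root-to-$v$ path plays some action $a''$ at $w$, so $(I,a'')\in\histP{v}{i}$, and $\conf{\bsigma}{v}=1$ together with $\conf{\bsigma}{I \rightarrow a}=1$ give $a''=\sigma_i(I)=a$, i.e.\ $\confIn{I,a}{v}=1$. Assembling the two displayed chains then proves both directions of the lemma (the ``only if'' direction being precisely the observation that \Cref{eq:EFCE-best payoff} pins down $\theta(\UvarA{I}{a})$).

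I expect the main obstacle to be exactly this node-level identity: one must be careful about the precise conventions for $\confIn{I}{v}$ (descendant, possibly non-strict), use that a leaf cannot belong to an information set, and invoke the fact that in a tree the root-to-$v$ path through a node $w\in I$ is forced (and, under perfect recall, meets $I$ at most once). Everything surrounding it is a routine exchange of finite sums and bookkeeping of the $\DenomEFCE$ normalization.
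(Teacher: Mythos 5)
Your proof is correct and follows essentially the same route as the paper's: reduce to the explicit linear combination defining $\UvarA{I}{a}$, substitute the hypothesis on $\theta(\Xvar{\hist{v}})$, and match the result against the expansion of $\DenomEFCE\cdot\ExpEFCE{i}{\CorrPlan}{I}{a}$ via the indicator identity $\conf{\bsigma}{\hist{v}}\,\confIn{I,a}{v}=\conf{\bsigma}{v}\,\conf{\bsigma}{I\rightarrow a}\,\confIn{I}{v}$. The only difference is that you spell out the case analysis behind this identity --- correctly noting that $\confIn{I,a}{v}=\conf{\bsigma}{I\rightarrow a}\,\confIn{I}{v}$ only holds once the factor $\conf{\bsigma}{v}$ is taken into account --- whereas the paper dispatches it with a one-line observation.
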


\begin{proof}
Given  $I \in \InfSet{i}$ with $i \in \{1, \ldots, n\}$ and action~$a \in \actions{I}$, the assignment $\theta$ satisfies \eqref{eq:EFCE-best payoff} if and only if \[\theta(\UvarA{I}{a}) = \sum_{v \in \Leaves} \payoffP{v}{i}\,\Reach{v}\,\theta(\Xvar{\hist{v}})\,\confIn{I,a}{v}\,.\]
Observe that  $\conf{I,a}{v} = \conf{\bsigma}{I \rightarrow a}\, \confIn{I}{v}$ and $\conf{\bsigma}{\hist{v}} = \conf{\bsigma}{v}$ for all leaves $v$. Hence, 
\begin{align*}
\theta(\UvarA{I}{a}) &= \sum_{v \in \Leaves} \sum_{\bsigma \in \Sigma} \payoffP{v}{i}\,\Reach{v}\,\CorrPlan(\bsigma)\, \conf{\bsigma}{\hist{v}}\,\confIn{I,a}{v}\qquad \text{by \eqreftag[\RelevantH]{eq:proof-correlation-variables}} \\
&= \sum_{v \in \Leaves} \sum_{\bsigma \in \Sigma} \DenomEFCE \cdot \payoffP{v}{i}\,\CondEFCE{\CorrPlan}{I}{a}{\bsigma}{v}\\
 &= \DenomEFCE \cdot \ExpEFCE{i}{\CorrPlan}{I}{a}\qedhere
\end{align*}
\end{proof}

\paragraph{Best deviation payoff.} 
The next proposition establishes the relation between the values of the variables $\VvarA{I}{a}{I'}{a'}$, $\Vvar{I}{a}{I'}$ and the best deviation payoff. 
Given two histories $h,h'$, let us denote by $h \preceq h'$ when $h$ is a prefix of $h'$.
The proof of the following lemma relies on two technical results, \Cref{prop:generic best deviation,prop:EFCE-best deviation payoff-last}, which will also be reused in the proofs of the upper bound for other types of equilibrium. Their statement and proofs can be bound in \Cref{sec:technical lemmas}. 


Let $\CorrPlan$ be a correlation plan. Let $\theta^*$ be an assignment over all variables of~$\System{S}$ 
satisfying~ 
\eqreftag[\Relevant]{eq:proof-correlation-variables}.
Suppose that for all information sets $I,I' \in \InfSet{i}$  with $i\in \{1,\ldots,n\}$ and  $I \preceq I'$, and all $a \in \actions{I}$, for all $a' \in \actions{I'}$, the following two equations holds for $\theta^*$:
\begin{align*}
\theta^*(\VvarA{I}{a}{I'}{a'}) &= 
\max_{\beta \in \Sigma_i^{I\preceq}\,\mid\,\confP{i}{\beta}{h}=1}\, \sum_{\substack{\bsigma \in \Sigma\\v \in \Leaves \mid h \preceq \histP{v}{i}}} \DenomEFCE\,\payoffP{v}{i}\,\CondEFCEDeviation{\CorrPlan}{I}{a}{\beta}{\bsigma}{v}\\
\theta^*(\Vvar{I}{a}{I'}) &= \max_{\beta \in \Sigma_i^{I\preceq}\,\mid\,\confP{i}{\beta}{\histP{I'}{i}}=1}\, \sum_{\substack{\bsigma \in \Sigma\\v \in \Leaves \mid \confIn{I'}{v}=1}} \DenomEFCE\,\payoffP{v}{i}\,\CondEFCEDeviation{\CorrPlan}{I}{a}{\beta}{\bsigma}{v}
\end{align*}
where $h = \histP{I'}{i} \cdot (I',a')$.

\begin{restatable}{lemma}{propEFCEbestdeviationpayoff}
\label{prop:EFCE-best deviation payoff}
The following two propositions hold for $\mu$ and $\theta^*$:
\begin{enumerate}
\item $\theta^*$ 
is a solution of \Cref{eq:EFCE-best deviation payoff,eq:EFCE-relaxation};
\item $\theta \geq \theta^*$ for all solutions $\theta$ of \Cref{eq:EFCE-best deviation payoff,eq:EFCE-relaxation} and \eqreftag[\Relevant]{eq:proof-correlation-variables}, where~$\geq$
is component-wise comparison. 
\end{enumerate}
\end{restatable}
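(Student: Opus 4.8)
\textbf{Proof plan for \Cref{prop:EFCE-best deviation payoff}.}
The plan is to prove both items simultaneously by a downward induction on the information sets of player~$i$, ordered by the prefix relation~$\preceq$ (equivalently, by descending distance from the root within each player-$i$ subtree). The base case treats information sets $I'$ that are $\preceq$-maximal among those with $I \preceq I'$: for such $I'$, no further information set of player~$i$ lies below $I'$, so the sum $\sum_{J \in \InfSet{i}\,\mid\,\histP{J}{i}=h} \Vvar{I}{a}{J}$ in \Cref{eq:EFCE-best deviation payoff} is empty, and the defining expression for $\theta^*(\VvarA{I}{a}{I'}{a'})$ reduces to the single term $\sum_{v\in\Leaves\,\mid\,\histP{v}{i}=h}\payoffP{v}{i}\,\Reach{v}\,\Xvar{\replaceH{I}{a}{v}}$; here I would invoke \Cref{prop:EFCE-best deviation payoff-last} (the technical lemma deferred to \Cref{sec:technical lemmas}) to identify the maximization over $\beta$ with this closed-form sum, using that $\theta^*$ satisfies \eqreftag[\Relevant]{eq:proof-correlation-variables} so that $\Xvar{\replaceH{I}{a}{v}}$ is literally the probability mass $\CorrPlan$ assigns to the deviation history.

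For the inductive step, fix $I' \preceq I''$ with $I'$ one level up (in player-$i$'s subtree) from the information sets $J$ appearing in the sum of \Cref{eq:EFCE-best deviation payoff}; assume both items hold for every such $J$. Splitting the leaves of the subgame below $(I',a')$ according to whether player~$i$ acts again (the reachable $J \in \InfSet{i}$ with $\histP{J}{i}=h$) or not, and using \Cref{prop:generic best deviation} to commute the $\max_\beta$ with this partition — the key point being that a deviating strategy $\beta\in\Sigma_i^{I\preceq}$ decomposes as independent choices on the disjoint sub-subtrees rooted at the various $J$, so the max of the sum equals the sum of the maxes — I get
\[
\theta^*(\VvarA{I}{a}{I'}{a'}) = \sum_{\substack{v\in\Leaves\\\histP{v}{i}=h}}\DenomEFCE\,\payoffP{v}{i}\,\CondEFCEDeviation{\CorrPlan}{I}{a}{\beta^\ast}{\bsigma}{v} \;+\; \sum_{\substack{J\in\InfSet{i}\\\histP{J}{i}=h}}\theta^*(\Vvar{I}{a}{J})\,.
\]
Translating the first sum back through \eqreftag[\Relevant]{eq:proof-correlation-variables} into the variables $\Xvar{\replaceH{I}{a}{v}}$ shows $\theta^*$ satisfies \Cref{eq:EFCE-best deviation payoff}; and since $\theta^*(\Vvar{I}{a}{I'})$ is defined as the maximum over $a'\in\actions{I'}$ of the $\theta^*(\VvarA{I}{a}{I'}{a'})$, it satisfies \Cref{eq:EFCE-relaxation}, giving item~1.

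For item~2, let $\theta$ be any solution of \Cref{eq:EFCE-best deviation payoff,eq:EFCE-relaxation} together with \eqreftag[\Relevant]{eq:proof-correlation-variables}. By the same downward induction: at the maximal information sets, \Cref{eq:EFCE-best deviation payoff} forces $\theta(\VvarA{I}{a}{I'}{a'})$ to equal exactly the closed-form sum, which is $\theta^*(\VvarA{I}{a}{I'}{a'})$, and then \Cref{eq:EFCE-relaxation} gives $\theta(\Vvar{I}{a}{I'})\ge\max_{a'}\theta(\VvarA{I}{a}{I'}{a'})=\theta^*(\Vvar{I}{a}{I'})$. For the inductive step, the induction hypothesis gives $\theta(\Vvar{I}{a}{J})\ge\theta^*(\Vvar{I}{a}{J})$ for all the relevant $J$; plugging into \Cref{eq:EFCE-best deviation payoff} (all coefficients $\payoffP{v}{i}\Reach{v}$ and the $\Xvar{\replaceH{I}{a}{v}}$ terms being pinned down by \eqreftag[\Relevant]{eq:proof-correlation-variables}) yields $\theta(\VvarA{I}{a}{I'}{a'})\ge\theta^*(\VvarA{I}{a}{I'}{a'})$, and \Cref{eq:EFCE-relaxation} propagates this to $\theta(\Vvar{I}{a}{I'})$. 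The main obstacle I anticipate is the bookkeeping in \Cref{prop:generic best deviation}: making precise that an optimal deviation $\beta$ can be chosen independently on each of the disjoint continuation subtrees rooted at the next player-$i$ information sets, so that the $\max$ genuinely distributes over the sum — this is where perfect recall and the partial-strategy formalism $\Sigma_i^{I\preceq}$ do the real work, and it is exactly what the deferred technical lemmas are set up to handle.
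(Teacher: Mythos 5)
Your proposal is correct and follows essentially the same route as the paper: the paper's proof instantiates the functions $f$ and $g$ and invokes \Cref{prop:generic best deviation} (whose internal bottom-up induction and $\beta_{max}$-gluing argument is exactly the decomposition you describe) together with \Cref{prop:EFCE-best deviation payoff-last} to identify the leaf-level sums. The only cosmetic difference is that you inline the generic lemma's induction rather than citing it as a black box.
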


\begin{proof}
Let  $I \in \InfSet{i}$ with $i \in \{1, \ldots, n\}$ and $a \in \actions{I}$. 
Define  functions $f : \Node \rightarrow \mathbb{R}$ and $g : \Sigma_i^{I\preceq} \times \Node \rightarrow \mathbb{R}$ as follows:
\begin{align*}
f(v) &= \sum_{\bsigma \in \Sigma}\,\payoffP{v}{i}\,\Reach{v}\,\CorrPlan(\bsigma) \, \conf{\bsigma}{\replaceH{I}{a}{v}}\\
g(\beta,v) &= \sum_{\bsigma \in \Sigma} \DenomEFCE\,\payoffP{v}{i}\,\CondEFCEDeviation{\CorrPlan}{I}{a}{\beta}{\bsigma}{v}
\end{align*}

We aim to apply \Cref{prop:generic best deviation}. The first two conditions on the function on $g$ hold by its definition. The third condition holds thanks to \Cref{prop:EFCE-best deviation payoff-last}. The result is thus directly obtained by application of \Cref{prop:generic best deviation} and \eqreftag[\Relevant]{eq:proof-correlation-variables}.
\end{proof}

We can finally prove the main proposition of this section.

\begin{proof}[Proof of \Cref{th:EFCE}]
We start with the left implication. Let $\theta$ be a solution of $\System{S}$. Since $\theta$ satisfies both~\eqrefCA[S]{eq:corr1} and \eqrefCB[S]{eq:corr2}, it corresponds to a correlation plan $\CorrPlan$. Moreover with $\theta$ satisfying \eqrefCC[S,\Relevant]{eq:corr3}, we deduce from \Cref{prop:Zvar-best payoff} that $\theta$ satisfies \eqreftag[\Relevant]{eq:proof-correlation}.

We show that the correlation plan $\mu$ is an \EFCE such that $\ExpOmega{\CorrPlan}{\omega} \geq \lambda$. Let $I \in \InfSet{i}$ be an information set with  $i \in \{1, \ldots, n\}$ and $a \in \actions{I}$ be an action. 
By the fact that $\theta$ satisfies \Cref{eq:EFCE-incentive constraint} and by \Cref{prop:EFCE-best payoff}, we get that \[\DenomEFCE \cdot \ExpEFCE{i}{\CorrPlan}{I}{a} \geq \theta(\Vvar{I}{a}{I})\,.\] 

Notice that $\confP{i}{\beta}{\histP{I}{i}}=1$ for all partial strategies~$\beta \in \Sigma^{I\preceq}_i$. The following follows from $\theta$ satisfying~\Cref{eq:EFCE-best deviation payoff,eq:EFCE-relaxation} and \Cref{prop:EFCE-best deviation payoff},
\[
\DenomEFCE \cdot \ExpEFCE{i}{\CorrPlan}{I}{a} \geq \theta(\Vvar{I}{a}{I}) \geq \max_{\beta \in \Sigma_i^{I\preceq}}\, \sum_{\substack{\bsigma \in \Sigma\\v \in \Leaves \mid \confIn{I}{v}=1}} \DenomEFCE\,\payoffP{v}{i}\,\CondEFCEDeviation{\CorrPlan}{I}{a}{\beta}{\bsigma}{v}
\]
Furthermore, remark that $\CondEFCEDeviation{\CorrPlan}{I}{a}{\beta}{\bsigma}{v} \neq 0$ implies $\confIn{I}{v} = 1$. Thus, we obtain that
\[
\ExpEFCE{i}{\CorrPlan}{I}{a} \geq \max_{\beta \in \Sigma_i^{I\preceq}} \ExpEFCEDeviation{i}{\CorrPlan}{I}{a}{\beta}
\]
and so $\CorrPlan$ is an \EFCE. Finally, as $\theta$ satisfies \Cref{eq:threshold}, by \eqreftag[\Relevant]{eq:proof-correlation}, the threshold condition~$\ExpOmega{\CorrPlan}{\omega} \geq \lambda$ is guaranteed.

\medskip

Let us now prove the right implication. Consider a correlation plan $\CorrPlan$ in \EFCE such that $\ExpOmega{\CorrPlan}{\omega} \geq \lambda$. Let $\theta_\CorrPlan$ be the assignment induced by $\CorrPlan$ which satisfies \eqrefCA[\Sigma]{eq:corr1} and \eqrefCB[\Sigma]{eq:corr2}. We can extend $\theta_\CorrPlan$ such that it satisfies \eqreftag[\Relevant]{eq:proof-correlation}. Similarly, we can extend $\theta_\CorrPlan$ so that
\begin{align*}
\theta_\CorrPlan(\UvarA{I}{a}) &= \DenomEFCE \cdot \ExpEFCE{i}{\CorrPlan}{I}{a} \qquad \forall i \in \{1, \ldots, n\}, \forall I \in \InfSet{i}, \forall a \in \actions{I}\\
\theta_\CorrPlan(\VvarA{I}{a}{I'}{a'}) &= \theta^*(\VvarA{I}{a}{I'}{a'})\\
\theta_\CorrPlan(\Vvar{I}{a}{I'}) &= \theta^*(\Vvar{I}{a}{I'})
\end{align*}
with $\theta^*$ as defined at the beginning of this section.
Since
$\theta^*(\Vvar{I}{a}{I}) = \DenomEFCE \cdot \max_{\beta \in \Sigma_i^{I\preceq}} \ExpEFCEDeviation{i}{\CorrPlan}{I}{a}{\beta}$, it follows from \Cref{prop:EFCE-best deviation payoff,prop:EFCE-best payoff}  that
$\theta_{\CorrPlan}$ satisfies \Cref{eq:EFCE-best payoff,eq:EFCE-best deviation payoff,eq:EFCE-relaxation,eq:EFCE-incentive constraint,eq:threshold}. Now this does not directly conclude the proof as the support of $\CorrPlan$ currently range over $\Sigma$. However, by \Cref{prop:Cara}, there exist $S \subseteq \Sigma$ with $|S| \leq |\Relevant|+1$ and a correlation plan $\CorrPlan'$ such that $S = \supp(\CorrPlan')$ and for all $\bh \in \Relevant$,
\begin{equation}
\label{eq:proof-same Z value}
\sum_{\bsigma\in \Sigma} \CorrPlan(\bsigma) \conf{\bsigma}{\bh} = \sum_{\bsigma\in \Sigma} \CorrPlan'(\bsigma) \conf{\bsigma}{\bh}
\end{equation}
Denote by $\theta_{\CorrPlan'}$ the assignment induced by $\CorrPlan'$ which satisfies \eqrefCA[S]{eq:corr1} and \eqrefCB[S]{eq:corr2}. 
By \Cref{eq:proof-same Z value}, we can therefore extend $\theta_{\CorrPlan'}$ such that \[\theta_{\CorrPlan'}(x) = \theta_{\CorrPlan}(x)\] for all variables $x$ in the system $\System{S}$ other than the variables $x_\bsigma$.
Hence, as $\theta_{\CorrPlan'}$ satisfies \eqreftag[\Relevant]{eq:proof-correlation}, we deduce from \Cref{prop:Zvar-best payoff} that $\theta_{\mu'}$ satisfies \eqrefCC[S,\Relevant]{eq:corr3}. Since \Cref{eq:EFCE-best payoff,eq:EFCE-best deviation payoff,eq:EFCE-relaxation,eq:EFCE-incentive constraint,eq:threshold} does not depend on the variables $x_\bsigma$, and we already showed that $\theta_{\CorrPlan}$ satisfies these equations, we  conclude that so does $\theta_{\CorrPlan'}$. This implies  that $\theta_{\CorrPlan'}$ satisfies $\System{S}$ as well.
\end{proof}

\section{The \textsc{Threshold} Problem for \NFCCE, \EFCCE, \AFCE, \AFCCE is in \NP}

The procedures for solving the \textsc{Threshold} problem for \NFCCE, \EFCCE, \AFCE, \AFCCE are very similar to the one for solving \EFCE presented in \Cref{sec:EFCE upper bound}. The main differences will be in the definition of relevant deviation histories and an adaptation of the system of linear constraints. 


\subsection{The Linear Constraints for \EFCCE}

In \EFCCE, players must decide to deviate before being recommended an action. This is reflected in the definition of deviation histories: given a player $i \in \{1, \ldots, n\}$ and an information set $I \in \InfSet{i}$, we denote by 
\[
  \replaceHI{I}{v} = (\histP{v}{1}, \ldots, \histP{v}{i-1}, \histP{I}{i}, \histP{v}{i+1}, \ldots \histP{v}{n})\,.
\] 
The deviation histories for \EFCCE are thus defined as $$\RelevantDEFCCE = \{ \replaceHI{I}{v} \mid i \in \{1, \ldots, n\}, I \in \InfSet{i}, v \in \Leaves \}$$ inducing $\RelevantEFCCE = \RelevantH \cup \RelevantDEFCCE$ to be the set of relevant histories for \EFCCE.
With this, our system of linear constraints $\System{S}$ for \EFCCE will contain \eqrefCA[S]{eq:corr1}, \eqrefCB[S]{eq:corr2}, \eqrefCC[S,\RelevantEFCCE]{eq:corr3} and \eqref{eq:threshold} to describe the correlation plan, the relevant histories and the threshold constraint. 

To describe the expected payoff without deviation, we introduce a variable $\Uvar{I}$ to represent the \emph{average expected payoff of player $i$ when it reaches the information set $I \in \InfSet{i}$ and no player deviates.} The linear constraints linking $\Uvar{I}$ with the variables $\Xvar{\hist{v}}$ are given below.  For all $I \in \InfSet{i}$ with $i \in \{1, \ldots, n\}$, 
\begin{equation}
\Uvar{I} = \sum_{v \in \Leaves} \payoffP{v}{i}\,\Reach{v}\,\Xvar{\hist{v}}\,\confIn{I}{v}\label{eq:EFCCE-best payoff}
\end{equation}

The description of the best deviation introduces the variables $\VIvar{I}{I'}$ representing the \emph{best expected payoff of player $i$ conditional on reaching the information set $I' \in \InfSet{i}$ after having deviated at information set~$I \in \InfSet{i}$}. 
Similarly, we introduce $\VIvarA{I}{I'}{a'}$ where the conditional event additionally includes the player playing $a' \in \actions{I'}$.
Formally, for all information sets~$I,I' \in \InfSet{i}$ with $i \in \{1, \ldots, n\}$ and  $I \preceq I'$, for all actions $a'\in \actions{I'}$,
\begin{align}
\VIvarA{I}{I'}{a'} &= \sum_{v \in \Leaves\, \mid\, \histP{v}{i} = h} \payoffP{v}{i}\,\Reach{v}\,\Xvar{\replaceHI{I}{v}} + \sum_{J \in \InfSet{i}\, \mid\, \histP{J}{i} = h} \VIvar{I}{J}\label{eq:EFCCE-best deviation payoff}\\
\VIvar{I}{I'} &\geq \VIvarA{I}{I'}{a'}\label{eq:EFCCE-relaxation}
\end{align}
where $h = \histP{I'}{i} \cdot (I',a')$.

Finally, the incentive constraint naturally derive from $\Uvar{I}$ and $\VIvar{I}{I}$ as follows.  For all information sets~$I \in \InfSet{i}$ with $i \in \{1, \ldots, n\}$,  
\begin{equation}
\Uvar{I} \geq \VIvar{I}{I}\label{eq:EFCCE-incentive constraint}
\end{equation}

The correctness of our procedure is given in the proposition below. To state the result precisely, we define $\System{S}$ the system of linear constraints consisting of \eqrefCA[S]{eq:corr1}, \eqrefCB[S]{eq:corr2}, \eqrefCC[S,\RelevantEFCCE]{eq:corr3} and \Cref{eq:EFCCE-best payoff,eq:EFCCE-best deviation payoff,eq:EFCCE-relaxation,eq:EFCCE-incentive constraint,eq:threshold}.


\begin{restatable}{proposition}{propEFCCEcorrectness}
\label{th:EFCCE}
Given an extensive-form game with perfect recall, there exists a correlation plan $\CorrPlan$ in \EFCCE with \[\ExpOmega{\CorrPlan}{\omega} \geq \lambda\] if and only if there exists $S \subseteq \Sigma$ with $|S| \leq |\RelevantEFCCE|+1$ such that $\System{S}$ admits a solution.
\end{restatable}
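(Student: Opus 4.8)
\textbf{Proof plan for \Cref{th:EFCCE}.}
The argument follows exactly the same template as the proof of \Cref{th:EFCE}, with the definitions of relevant deviation histories, the variables $\Uvar{I}$, $\VIvar{I}{I'}$, $\VIvarA{I}{I'}{a'}$, and the equations \Cref{eq:EFCCE-best payoff,eq:EFCCE-best deviation payoff,eq:EFCCE-relaxation,eq:EFCCE-incentive constraint} substituted for their \EFCE counterparts. First I would record the analogues of the intermediary lemmas: \Cref{prop:Zvar-best payoff} carries over verbatim (it only depends on \eqref{eq:corr3} and the support $S$, not on the equilibrium concept), giving that any $\theta$ corresponding to a correlation plan $\CorrPlan$ with support $S$ satisfies \eqrefCC[S,\RelevantEFCCE]{eq:corr3} iff $\theta(\Xvar{\bh}) = \sum_{\bsigma\in\Sigma}\CorrPlan(\bsigma)\conf{\bsigma}{\bh}$ for all $\bh \in \RelevantEFCCE$. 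Next I would state and prove the \EFCCE version of \Cref{prop:EFCE-best payoff}: letting $\DenomEFCCE = \sum_{\bsigma'\in\Sigma}\sum_{v'\in I}\CorrPlan(\bsigma')\Reach{v'}\conf{\bsigma'}{v'}$ be the denominator of the conditional $\CondEFCCE{\CorrPlan}{I}{\bsigma}{v}$, one checks that $\theta$ satisfies \Cref{eq:EFCCE-best payoff} iff $\theta(\Uvar{I}) = \DenomEFCCE\cdot\ExpEFCCE{i}{\CorrPlan}{I}$; the computation is the same as before, using $\confIn{I}{v}$ in place of $\confIn{I,a}{v}$ and observing $\conf{\bsigma}{\hist{v}} = \conf{\bsigma}{v}$.

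The heart of the proof is the \EFCCE analogue of \Cref{prop:EFCE-best deviation payoff}, relating $\theta^*(\VIvar{I}{I'})$ and $\theta^*(\VIvarA{I}{I'}{a'})$ to the best deviation payoff. I would define, for fixed $I\in\InfSet{i}$, the functions $f(v) = \sum_{\bsigma\in\Sigma}\payoffP{v}{i}\Reach{v}\CorrPlan(\bsigma)\conf{\bsigma}{\replaceHI{I}{v}}$ and $g(\beta,v) = \sum_{\bsigma\in\Sigma}\DenomEFCCE\,\payoffP{v}{i}\,\CondEFCCEDeviation{\CorrPlan}{I}{\beta}{\bsigma}{v}$, with $\beta$ ranging over $\Sigma_i^{I\preceq}$, and then invoke the generic technical results \Cref{prop:generic best deviation,prop:EFCCE-best deviation payoff-last} (the \EFCCE sibling of \Cref{prop:EFCE-best deviation payoff-last}, which must be established in \Cref{sec:technical lemmas} — it says $g(\beta,v) = \DenomEFCCE^{-1}\cdot\DenomEFCCE\cdots$, i.e. that $g$ decomposes along the subtree structure consistently with $f$; concretely that $\conf{\bsigma[\beta]}{v}\confIn{I}{v}$ factors through $\conf{\bsigma}{\replaceHI{I}{v}}$ whenever $\confP{i}{\beta}{\histP{v}{i}}=1$). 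The key identity is that $\CondEFCCEDeviation{\CorrPlan}{I}{\beta}{\bsigma}{v}$ uses $\conf{\bsigma[\beta]}{v}$, which splits as $\confP{i}{\beta}{\histP{v}{i}}$ times the product over $j\neq i$ of $\confP{j}{\sigma_j}{\histP{v}{j}}$; since for $j\neq i$ the history $\histP{v}{j}$ is the $j$-th component of $\replaceHI{I}{v}$, and since $\replaceHI{I}{v}$ replaces player $i$'s component by the truncation $\histP{I}{i}$ (which every $\beta\in\Sigma_i^{I\preceq}$ is automatically consistent with), the sum over $\bsigma$ of $\CorrPlan(\bsigma)\conf{\bsigma}{\replaceHI{I}{v}}$ captures precisely the leaf-reaching probability relevant to the deviation. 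This is where the \EFCCE-specific bookkeeping differs from \EFCE: there is no recommended action $a$ at $I$ to condition on, so the denominator and the indicator $\conf{\bsigma}{I\rightarrow a}$ drop out, and $\replaceHI{I}{v}$ truncates at $\histP{I}{i}$ rather than at $\histP{I}{i}\cdot(I,a)$. I expect this careful re-indexing — verifying that the honest-vs-deviation split in \Cref{eq:EFCCE-best deviation payoff} matches the subtree decomposition underlying \Cref{prop:generic best deviation} — to be the main obstacle, though it is structurally identical to the \EFCE case and should be routine once the technical lemma is in place.

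Finally I would assemble the two implications exactly as in the proof of \Cref{th:EFCE}. For the left-to-right direction: given a solution $\theta$ of $\System{S}$, \eqrefCA[S]{eq:corr1} and \eqrefCB[S]{eq:corr2} yield a correlation plan $\CorrPlan$ with support $S$; \eqrefCC[S,\RelevantEFCCE]{eq:corr3} with \Cref{prop:Zvar-best payoff} gives \eqreftag[\RelevantEFCCE]{eq:proof-correlation}; then \Cref{eq:EFCCE-incentive constraint} together with the \EFCCE analogues of \Cref{prop:EFCE-best payoff} and \Cref{prop:EFCE-best deviation payoff} gives $\DenomEFCCE\cdot\ExpEFCCE{i}{\CorrPlan}{I} \geq \theta(\VIvar{I}{I}) \geq \DenomEFCCE\cdot\max_{\beta\in\Sigma_i^{I\preceq}}\ExpEFCCEDeviation{i}{\CorrPlan}{I}{\beta}$, whence $\CorrPlan$ is an \EFCCE; the threshold \Cref{eq:threshold} and \eqreftag[\RelevantEFCCE]{eq:proof-correlation} give $\ExpOmega{\CorrPlan}{\omega}\geq\lambda$. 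For the right-to-left direction: given an \EFCCE $\CorrPlan$ with $\ExpOmega{\CorrPlan}{\omega}\geq\lambda$, build the induced assignment $\theta_\CorrPlan$ over $\Sigma$, extend it to satisfy all the equations using the lemmas above, then apply \Cref{prop:Cara} with $H = \RelevantEFCCE$ to obtain $S\subseteq\Sigma$ with $|S|\leq|\RelevantEFCCE|+1$ and a correlation plan $\CorrPlan'$ with support $S$ agreeing with $\CorrPlan$ on all $\conf{\cdot}{\bh}$ for $\bh\in\RelevantEFCCE$; since none of \Cref{eq:EFCCE-best payoff,eq:EFCCE-best deviation payoff,eq:EFCCE-relaxation,eq:EFCCE-incentive constraint,eq:threshold} depends on the $x_\bsigma$ variables directly, transporting $\theta_\CorrPlan$ to $\theta_{\CorrPlan'}$ preserves all constraints, so $\theta_{\CorrPlan'}$ solves $\System{S}$. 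The polynomial bound $|\RelevantEFCCE| \in O(n\,|\Node|\,|\Leaves|)$ follows directly from the definition, so the whole procedure — guess $S$, solve the polynomial-size linear system $\System{S}$ — runs in \NP.
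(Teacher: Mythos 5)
Your proposal is correct and follows essentially the same route as the paper's own proof: the same intermediary claims (the \EFCCE analogues of \Cref{prop:Zvar-best payoff} and \Cref{prop:EFCE-best payoff}, and the best-deviation claim obtained by instantiating \Cref{prop:generic best deviation} with exactly the $f$ and $g$ the paper uses, backed by the indicator identity of \Cref{prop:EFCCE-best deviation payoff-last}), followed by the same two-directional assembly via \Cref{prop:Cara} with $H = \RelevantEFCCE$. The only blemish is the garbled parenthetical describing what the technical lemma "says" ($g(\beta,v) = \DenomEFCCE^{-1}\cdot\DenomEFCCE\cdots$), but your subsequent correct statement of the indicator factorization makes clear you have the right lemma in mind, so this is cosmetic rather than a gap.
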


\begin{proof}
We fix $\theta$ to be an assignment of all variables of $\System{S}$. For all $I \in \InfSet{i}$ with $i \in \{1, \ldots, n\}$, we define $\DenomEFCCE = \sum_{\bsigma' \in \Sigma} \sum_{v'\in I} \CorrPlan(\bsigma')\,\Reach{v'}\, \conf{\bsigma'}{v'}$.
We start the proof by proving the following two claims.


\begin{claim}
\label{prop:EFCCE-best payoff}
Let $\CorrPlan$ be a correlation plan, and suppose that $\theta$ satisfies \eqreftag[\RelevantH]{eq:proof-correlation-variables}. Then $\theta$ also fulfills \Cref{eq:EFCCE-best payoff} if and only if, for all information sets $I \in \InfSet{i}$ with $i \in \{1, \ldots, n\}$, 
\[
\theta(\Uvar{I}) = \DenomEFCCE \cdot \ExpEFCCE{i}{\CorrPlan}{I}
\]
\end{claim}

\begin{proofclaim}
Given  $I \in \InfSet{i}$ with $i \in \{1, \ldots, n\}$, the assignment $\theta$ satisfies \eqref{eq:EFCCE-best payoff} if and only if \[\theta(\Uvar{I}) = \sum_{v \in \Leaves} \payoffP{v}{i}\,\Reach{v}\,\theta(\Xvar{\hist{v}})\,\confIn{I}{v}\,.\]
Observe that $\conf{\bsigma}{\hist{v}} = \conf{\bsigma}{v}$ for all leaves $v$. Hence, 
\begin{align*}
\theta(\Uvar{I}) &= \sum_{v \in \Leaves} \sum_{\bsigma \in \Sigma} \payoffP{v}{i}\,\Reach{v}\,\CorrPlan(\bsigma)\, \conf{\bsigma}{\hist{v}}\,\confIn{I}{v}\qquad \text{by \eqreftag[\RelevantH]{eq:proof-correlation-variables}} \\
&= \sum_{v \in \Leaves} \sum_{\bsigma \in \Sigma} \DenomEFCCE \cdot \payoffP{v}{i}\,\CondEFCCE{\CorrPlan}{I}{\bsigma}{v}\\
 &= \DenomEFCCE \cdot \ExpEFCCE{i}{\CorrPlan}{I}
\end{align*}
This concludes the proof of our first claim.
\end{proofclaim}

For the computation of the best deviation payoff, take $\CorrPlan$ a correlation plan. Let $\theta^*$ be an assignment over all variables of~$\System{S}$ 
satisfying~\eqreftag[\RelevantEFCCE]{eq:proof-correlation-variables}. Suppose that for all information sets $I,I' \in \InfSet{i}$  with $i\in \{1,\ldots,n\}$ and  $I \preceq I'$, and for all $a' \in \actions{I'}$, the following two equations holds for $\theta^*$:
\begin{align*}
\theta^*(\VIvarA{I}{I'}{a'}) &= 
\max_{\beta \in \Sigma_i^{I\preceq}\,\mid\,\confP{i}{\beta}{h}=1}\, \sum_{\bsigma \in \Sigma}\,\sum_{v \in \Leaves \mid h \preceq \histP{v}{i}} \DenomEFCCE\,\payoffP{v}{i}\,\CondEFCCEDeviation{\CorrPlan}{I}{\beta}{\bsigma}{v}\\
\theta^*(\VIvar{I}{I'}) &= \max_{\beta \in \Sigma_i^{I\preceq}\,\mid\,\confP{i}{\beta}{\histP{I'}{i}}=1}\, \sum_{\bsigma \in \Sigma}\,\sum_{v \in \Leaves \mid \confIn{I'}{v}=1} \DenomEFCCE\,\payoffP{v}{i}\,\CondEFCCEDeviation{\CorrPlan}{I}{\beta}{\bsigma}{v}
\end{align*}


\begin{claim}
\label{prop:EFCCE-best deviation payoff}
The following two statements hold for $\CorrPlan$ and $\theta^*$:
\begin{enumerate}
\item $\theta^*$ is a solution of \Cref{eq:EFCCE-best deviation payoff,eq:EFCCE-relaxation};
\item $\theta \geq \theta^*$ for all solutions $\theta$ of \Cref{eq:EFCCE-best deviation payoff,eq:EFCCE-relaxation} and \eqreftag[\RelevantEFCCE]{eq:proof-correlation-variables}, where~$\geq$
is component-wise comparison. 
\end{enumerate}
\end{claim}

\begin{proofclaim}
Let $I \in \InfSet{i}$ with $i \in \{1, \ldots, n\}$. Define the functions $f : \Node \rightarrow \mathbb{R}$ and $g : \Sigma_i^{I\preceq} \times \Node \rightarrow \mathbb{R}$ as follows:
\begin{align*}
f(v) &= \sum_{\bsigma \in \Sigma}\,\payoffP{v}{i}\,\Reach{v}\,\CorrPlan(\bsigma) \, \conf{\bsigma}{\replaceHI{I}{v}}\\
g(\beta,v) &= \sum_{\bsigma \in \Sigma} \DenomEFCCE\,\payoffP{v}{i}\,\CondEFCCEDeviation{\CorrPlan}{I}{\beta}{\bsigma}{v}
\end{align*}

We aim to apply \Cref{prop:generic best deviation}. The first two conditions on the function on $g$ hold by its definition. The third condition holds thanks to \Cref{prop:EFCCE-best deviation payoff-last}. The result is thus directly obtained by application of \Cref{prop:generic best deviation} and \eqreftag[\RelevantEFCCE]{eq:proof-correlation-variables}.
\end{proofclaim}

We can finally prove the main result.
The proof is almost identical to the proof of \Cref{th:EFCE}. We will rely of course on the relevant histories $\RelevantEFCCE$ instead of $\Relevant$. Similarly, we rely on  \Cref{prop:EFCCE-best payoff,prop:EFCCE-best deviation payoff} instead of \Cref{prop:EFCE-best payoff,prop:EFCE-best deviation payoff}. Note that in the case of \EFCCE, we have that
\[
\theta_{min}(\VIvar{I}{I}) = \max_{\beta \in \Sigma_i^{I\preceq}\,\mid\,\confP{i}{\beta}{\histP{I}{i}}=1}\, \sum_{\substack{\bsigma \in \Sigma \\ v \in \Leaves \mid \confIn{I}{v}=1}} \DenomEFCCE\,\payoffP{v}{i}\,\CondEFCCEDeviation{\CorrPlan}{I}{\beta}{\bsigma}{v} = \DenomEFCCE\cdot \max_{\beta \in \Sigma_i^{I\preceq}} \ExpEFCCEDeviation{i}{\CorrPlan}{I}{\beta}
\]
as $\confP{i}{\beta}{\histP{I}{i}}=1$ for all $\beta \in \Sigma_i^{I\preceq}$.
\end{proof}


\subsection{The Linear Constraints for \NFCCE}

The linear constraints for \NFCCE can be seen as a restricted subset of those for \EFCCE. In \EFCCE, a player may deviate at any information set they reach, provided the deviation occurs before receiving a recommendation. In contrast, \NFCCE requires the player to commit to deviating at the outset of the game; before any recommendation is given and without knowing which information set they will encounter. As a result, the constraints for \NFCCE naturally correspond to the subset of \EFCCE constraints that pertain solely to the initial information set.

The set of \NFCCE relevant deviation histories for deviation focuses on the information sets $I$ with $\histP{I}{i} = \varepsilon$. In other word, we denote \[\replaceHEpsilon{i}{v} = (\histP{v}{1}, \ldots, \histP{v}{i-1}, \varepsilon, \histP{v}{i+1}, \ldots \histP{v}{n})\] and define \[\RelevantNFCCE = \RelevantH \cup \{ \replaceHEpsilon{i}{v} \,\mid\, i \in \{1, \ldots, n\}, v \in \Leaves\}\] to be the set of relevant histories for \NFCCE. Our system of linear constraints $\System{S}$ will naturally contain the equations \eqrefCA[S]{eq:corr1}, \eqrefCB[S]{eq:corr2}, \eqrefCC[S,\RelevantNFCCE]{eq:corr3} and \eqref{eq:threshold} to describe the correlation plan, the relevant histories and the threshold constraint.

To describe the \emph{expected payoff without deviation of a player $i$}, we introduce variables $\UvarP{i}$ and define the following constraints: for all  $i \in \{1, \ldots, n\}$,  
\begin{equation}
\UvarP{i} = \sum_{v \in \Leaves} \payoffP{v}{i}\,\Reach{v}\,\Xvar{\hist{v}}\label{eq:NFCCE-best payoff}
\end{equation}

The linear constraints describing the best deviation payoff introduce the variables $\VEvar{I'}$ representing the \emph{best expected payoff of player $i$ when reaching the information set $I' \in \InfSet{i}$ after having deviated at the beginning of the game}. We also consider the variables $\VEvar{h}$ with $h$ being a prefix of the honest history of some leaves. Formally, for all information sets $I' \in \InfSet{i}$ with $i \in \{1, \ldots, n\}$, for all histories~$h \in \{ h' \mid h' \preceq \histP{v}{i}, v \in \Leaves\}$, 
\begin{align}
\VEvar{h} &= \sum_{v \in \Leaves \mid \histP{v}{i} = h} \payoffP{v}{i}\,\Reach{v}\,\Xvar{\replaceHEpsilon{i}{v}} + \sum_{J \in \InfSet{i} \mid \histP{J}{i} = h} \VEvar{J}\label{eq:NFCCE-best deviation payoff}\\
\VEvar{I'} &\geq \VEvar{\histP{I'}{i}\cdot(I',a')}\label{eq:NFCCE-relaxation}
\end{align}
Finally, the incentive constraint is derived from $\UvarP{I}$ and $\VEvar{\varepsilon}$ as follows: for all $i \in \{1, \ldots, n\}$,
\begin{equation}
\UvarP{i} \geq \VEvar{\varepsilon}\label{eq:NFCCE-incentive constraint}
\end{equation}

The correctness of our procedure is given in the proposition below.  To state the result precisely, we define $\System{S}$ the system of linear constraints consisting of \eqrefCA[S]{eq:corr1}, \eqrefCB[S]{eq:corr2}, \eqrefCC[S,\RelevantNFCCE]{eq:corr3} and \Cref{eq:NFCCE-best payoff,eq:NFCCE-best deviation payoff,eq:NFCCE-relaxation,eq:NFCCE-incentive constraint,eq:threshold}.

\begin{restatable}{proposition}{propNFCCEcorrectness}
\label{th:NFCCE}
Given an extensive-form game with perfect recall, there exists a correlation plan~$\CorrPlan$ in \NFCCE with \[\ExpOmega{\CorrPlan}{\omega} \geq \lambda\] if and only if there exists $S \subseteq \Sigma$ with $|S| \leq |\RelevantNFCCE|+1$ such that $\System{S}$ admits a solution.
\end{restatable}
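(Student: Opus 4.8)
The proof follows the same template as \Cref{th:EFCE,th:EFCCE}, with one simplification: in \NFCCE the deviation decision is committed before any recommendation is issued and before any information set is reached, so the conditioning event is trivial and no normalising denominator appears. Fix an assignment $\theta$ over all variables of $\System{S}$. Exactly as in the \EFCE case, $\theta$ satisfies \eqrefCA[S]{eq:corr1} and \eqrefCB[S]{eq:corr2} iff it corresponds to a correlation plan $\CorrPlan$ with $\supp(\CorrPlan) \subseteq S$, and by the instance $H = \RelevantNFCCE$ of \Cref{prop:Zvar-best payoff}, such a $\theta$ additionally satisfies \eqrefCC[S,\RelevantNFCCE]{eq:corr3} iff $\theta(\Xvar{\bh}) = \sum_{\bsigma\in\Sigma}\CorrPlan(\bsigma)\,\conf{\bsigma}{\bh}$ for every $\bh \in \RelevantNFCCE$; in particular $\theta$ then satisfies \eqreftag[\RelevantNFCCE]{eq:proof-correlation-variables}.

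Next I would establish two claims mirroring \Cref{prop:EFCCE-best payoff,prop:EFCCE-best deviation payoff}. The first states that, for $\theta$ satisfying \eqreftag[\RelevantH]{eq:proof-correlation-variables}, $\theta$ satisfies \Cref{eq:NFCCE-best payoff} iff $\theta(\UvarP{i}) = \ExpBasic{i}{\CorrPlan}$ for every player $i$; this is immediate from $\conf{\bsigma}{\hist{v}} = \conf{\bsigma}{v}$ and the defining sum of $\ExpBasic{i}{\CorrPlan}$, with no conditioning factor. The second concerns the deviation variables: let $\theta^*$ be the assignment that gives $\VEvar{h}$ the value $\max_{\beta\in\Sigma_i\,\mid\,\confP{i}{\beta}{h}=1}\sum_{\bsigma\in\Sigma}\sum_{v\in\Leaves\mid h\preceq\histP{v}{i}}\payoffP{v}{i}\,\CondNFCCEDeviation{\CorrPlan}{\beta}{\bsigma}{v}$ and $\VEvar{I'}$ the value $\max_{\beta\in\Sigma_i\,\mid\,\confP{i}{\beta}{\histP{I'}{i}}=1}\sum_{\bsigma\in\Sigma}\sum_{v\in\Leaves\mid\confIn{I'}{v}=1}\payoffP{v}{i}\,\CondNFCCEDeviation{\CorrPlan}{\beta}{\bsigma}{v}$. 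I would show (i) $\theta^*$ solves \Cref{eq:NFCCE-best deviation payoff,eq:NFCCE-relaxation}, and (ii) $\theta^*$ is component-wise minimal among all such solutions that also satisfy \eqreftag[\RelevantNFCCE]{eq:proof-correlation-variables}. As in the \EFCE/\EFCCE proofs, this comes from setting $f(v) = \sum_{\bsigma\in\Sigma}\payoffP{v}{i}\,\Reach{v}\,\CorrPlan(\bsigma)\,\conf{\bsigma}{\replaceHEpsilon{i}{v}}$ and $g(\beta,v) = \sum_{\bsigma\in\Sigma}\payoffP{v}{i}\,\CondNFCCEDeviation{\CorrPlan}{\beta}{\bsigma}{v}$ (no $\DenomEFCCE$-style factor), verifying the hypotheses of \Cref{prop:generic best deviation} — the first two by definition of $g$, the third by the \NFCCE analogue of \Cref{prop:EFCCE-best deviation payoff-last} — and invoking \eqreftag[\RelevantNFCCE]{eq:proof-correlation-variables}. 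Note $\theta^*(\VEvar{\varepsilon}) = \max_{\beta\in\Sigma_i}\ExpNFCCEDeviation{i}{\CorrPlan}{\beta}$ since $\confP{i}{\beta}{\varepsilon}=1$ always.

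With these pieces, the two implications go through as in \Cref{th:EFCE}. For the ``if'' direction, a solution $\theta$ of $\System{S}$ yields a correlation plan $\CorrPlan$ with support contained in $S$; combining \Cref{eq:NFCCE-incentive constraint} with the two claims gives $\ExpNFCCE{i}{\CorrPlan} = \theta(\UvarP{i}) \geq \theta(\VEvar{\varepsilon}) \geq \theta^*(\VEvar{\varepsilon}) = \max_{\beta\in\Sigma_i}\ExpNFCCEDeviation{i}{\CorrPlan}{\beta}$ for all $i$, so $\CorrPlan$ is an \NFCCE, and \eqref{eq:threshold} together with \eqreftag[\RelevantH]{eq:proof-correlation-variables} yields $\ExpOmega{\CorrPlan}{\omega}\geq\lambda$. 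Conversely, given an \NFCCE $\CorrPlan$ with $\ExpOmega{\CorrPlan}{\omega}\geq\lambda$, build the assignment $\theta_\CorrPlan$ over support $\Sigma$, extend it so as to satisfy \eqreftag[\RelevantNFCCE]{eq:proof-correlation-variables} and the payoff/deviation equations using the two claims, then apply \Cref{prop:Cara} with $H = \RelevantNFCCE$ to obtain $S \subseteq \Sigma$ with $|S|\leq|\RelevantNFCCE|+1$ and $\CorrPlan'$ with $\supp(\CorrPlan')=S$ agreeing with $\CorrPlan$ on all relevant-history probabilities; transferring the assignment gives a solution $\theta_{\CorrPlan'}$ of $\System{S}$, which is valid because \Cref{eq:NFCCE-best payoff,eq:NFCCE-best deviation payoff,eq:NFCCE-relaxation,eq:NFCCE-incentive constraint,eq:threshold} do not mention the variables $x_\bsigma$.

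The main obstacle is the third hypothesis of \Cref{prop:generic best deviation} in the \NFCCE setting, i.e.\ the \NFCCE analogue of \Cref{prop:EFCCE-best deviation payoff-last}: one must show that the recursive ``split at the next information set of player~$i$'' decomposition underlying \Cref{eq:NFCCE-best deviation payoff} faithfully computes $\sum_{\bsigma}\sum_{v}\payoffP{v}{i}\,\CondNFCCEDeviation{\CorrPlan}{\beta}{\bsigma}{v}$ when the deviation $\beta\in\Sigma_i$ is committed at the outset. The care here is in the bookkeeping of the deviation histories $\replaceHEpsilon{i}{v}$ and the chance factors $\Reach{v}$, ensuring that a deviating strategy $\beta$ affects only leaves lying below information sets it touches while the $x_\bsigma$-induced probabilities correctly aggregate the remaining (non-deviated) play; everything else is a routine specialisation of the \EFCCE argument with the conditioning collapsed.
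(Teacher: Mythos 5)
Your proposal is correct and follows essentially the same route as the paper: the same two claims (expected payoff without deviation via \eqreftag[\RelevantH]{eq:proof-correlation-variables}, and the minimal assignment $\theta^*$ for the deviation variables via \Cref{prop:generic best deviation} with $I=I_{root}$ and the \NFCCE analogue of the best-deviation-payoff computation lemma), followed by the two implications patterned on \Cref{th:EFCE} and a Carathéodory reduction of the support via \Cref{prop:Cara} with $H=\RelevantNFCCE$. The technical step you flag as the main obstacle is exactly what the paper isolates as \Cref{prop:NFCCE-best deviation payoff-last}, and your observation that no $\DenomEFCCE$-style normalisation appears is consistent with how that lemma is stated.
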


\begin{proof}
We fix $\theta$ to be an assignment of all variables o $\System{S}$. We start the proof by proving the following two claims.
     

\begin{claim}
\label{prop:NFCCE-best payoff}
Let $\CorrPlan$ be a correlation plan, and suppose that $\theta$ satisfies \eqreftag[\RelevantH]{eq:proof-correlation-variables} holds. Then $\theta$ fulfills \Cref{eq:NFCCE-best payoff} if and only if, for all $i \in \{1, \ldots, n\}$,
\[
\theta(\UvarP{i}) = \ExpNFCCE{i}{\CorrPlan}
\]
\end{claim}

\begin{proofclaim}
Given $i \in \{1, \ldots, n\}$, the assignment $\theta$ satisfies \eqref{eq:NFCCE-best payoff} if and only if 
\[\theta(\UvarP{i}) = \sum_{v \in \Leaves} \payoffP{v}{i}\,\Reach{v}\,\theta(\Xvar{\hist{v}})\,.\] 
Observe that $\conf{\bsigma}{\hist{v}} = \conf{\bsigma}{v}$ for all leaves $v$. Hence, 
\begin{align*}
\theta(\UvarP{i}) &= \sum_{v \in \Leaves} \sum_{\bsigma \in \Sigma} \payoffP{v}{i}\,\Reach{v}\,\CorrPlan(\bsigma)\, \conf{\bsigma}{\hist{v}}\qquad \text{by \eqreftag[\RelevantH]{eq:proof-correlation-variables}} \\
&= \ExpNFCCE{i}{\CorrPlan}
\end{align*}
This concludes the proof of our first claim.
\end{proofclaim}

For the computation of the best deviation payoff, take $\CorrPlan$ a correlation plan. Let $\theta^*$ be an assignment over all variables of~$\System{S}$ 
satisfying~\eqreftag[\RelevantNFCCE]{eq:proof-correlation-variables}. Suppose that for all information sets $I' \in \InfSet{i}$  with $i\in \{1,\ldots,n\}$ and for all $h$ prefix of $\histP{v}{i}$ for some leaf $v$, the following two equations holds for $\theta^*$:
\begin{align*}
\theta^*(\VEvar{h}) &= 
\max_{\beta \in \Sigma_i\,\mid\,\confP{i}{\beta}{h}=1}\, \sum_{\bsigma \in \Sigma}\,\sum_{v \in \Leaves \mid h \preceq \histP{v}{i}} \payoffP{v}{i}\,\CondNFCCEDeviation{\CorrPlan}{\beta}{\bsigma}{v}\\
\theta^*(\VEvar{I'}) &= \max_{\beta \in \Sigma_i\,\mid\,\confP{i}{\beta}{\histP{I'}{i}}=1}\, \sum_{\bsigma \in \Sigma}\,\sum_{v \in \Leaves \mid \confIn{I'}{v}=1} \payoffP{v}{i}\,\CondNFCCEDeviation{\CorrPlan}{\beta}{\bsigma}{v}
\end{align*}


\begin{claim}
\label{prop:NFCCE-best deviation payoff}
The following two propositions hold for $\CorrPlan$ and $\theta^*$:
\begin{enumerate}
\item $\theta^*$ is a solution of \Cref{eq:NFCCE-best deviation payoff,eq:NFCCE-relaxation};
\item $\theta \geq \theta^*$ for all solutions $\theta$ of \Cref{eq:NFCCE-best deviation payoff,eq:NFCCE-relaxation} and \eqreftag[\RelevantNFCCE]{eq:proof-correlation-variables}, where~$\geq$
is component-wise comparison. 
\end{enumerate}
\end{claim}

\begin{proofclaim}
Let $i \in \{1, \ldots, n\}$. Take $I = I_{root}$. Define the functions $f : \Node \rightarrow \mathbb{R}$ and $g : \Sigma_i^{I\preceq} \times \Node \rightarrow \mathbb{R}$ as follows:
\begin{align*}
f(v) &= \sum_{\bsigma \in \Sigma}\,\payoffP{v}{i}\,\Reach{v}\,\CorrPlan(\bsigma) \, \conf{\bsigma}{\replaceHEpsilon{i}{v}}\\
g(\beta,v) &= \sum_{\bsigma \in \Sigma} \DenomEFCCE\,\payoffP{v}{i}\,\CondEFCCEDeviation{\CorrPlan}{I}{\beta}{\bsigma}{v}
\end{align*}

We aim to apply \Cref{prop:generic best deviation}. The first two conditions on the function on $g$ hold by its definition. The third condition holds thanks to \Cref{prop:NFCCE-best deviation payoff-last}. The result is thus directly obtained by application of \Cref{prop:generic best deviation} and \eqreftag[\RelevantNFCCE]{eq:proof-correlation-variables}.
\end{proofclaim}

We can finally prove the main result.
The proof is almost identical to the proof of \Cref{th:EFCE}. We will rely of course on the relevant histories $\RelevantNFCCE$ instead of $\Relevant$. Similarly, we rely on  \Cref{prop:NFCCE-best payoff,prop:NFCCE-best deviation payoff} instead of \Cref{prop:EFCE-best payoff,prop:EFCE-best deviation payoff}. Note that in the case of \NFCCE, we have that
\[
\theta^*(\VEvar{\varepsilon}) = 
\max_{\beta \in \Sigma_i\,\mid\,\confP{i}{\beta}{\varepsilon}=1}\, \sum_{\bsigma \in \Sigma}\,\sum_{v \in \Leaves \mid h \preceq \histP{v}{i}} \payoffP{v}{i}\,\CondNFCCEDeviation{\CorrPlan}{\beta}{\bsigma}{v} = \max_{\beta \in \Sigma_i} \ExpNFCCEDeviation{i}{\CorrPlan}{\beta}
\]
as $\confP{i}{\beta}{\varepsilon}=1$ for all $\beta \in \Sigma_i$.
\end{proof}


\subsection{The Linear Constraints for \AFCE}

In \AFCE, when a player reaches the information set $I$ and being recommended $a$, it may decide to deviate at $I$ but is not allowed to deviate anymore for the remaining of the game.
This is reflected in the definition of deviation histories: given an information set $I \in \InfSet{i}$ with $i \in \{1, \ldots, n\}$, given $a \in \actions{I}$ and a leaf $v \in \Leaves$ such that $\confIn{I}{v} = 1$, we denote 
\[
  \replaceHA{I}{a}{v} = (\histP{v}{1}, \ldots, \histP{v}{i-1}, \histP{I}{i}\cdot(I,a)\cdot h', \histP{v}{i+1}, \ldots \histP{v}{n})
\] 
where $\histP{v}{i} = \histP{I}{i}\cdot (I,a') \cdot h'$ for some $a' \in \actions{I}$ and history $h'$. The set of relevant histories for \AFCE is then defined as follows:
\[
\RelevantAFCE = \RelevantH \cup \{ \replaceHA{I}{a}{v} \mid i \in \{1, \ldots, n\}, I \in \InfSet{i}, a \in \actions{I}, v \in \Leaves, \confIn{I}{v} = 1 \}
\]
Our system of linear constraints $\System{S}$ for \AFCE will contain \eqrefCA[S]{eq:corr1}, \eqrefCB[S]{eq:corr2}, \eqrefCC[S,\RelevantAFCE]{eq:corr3} and \eqref{eq:threshold} to describe the correlation plan, the relevant histories and the threshold constraint. 

To describe the expected payoff without deviation, we can reuse the linear constraints \eqref{eq:EFCE-best payoff} used in \EFCE since the decision to deviate is done at the same moment. 

The description of the best deviation is simplified compared to previous equilibrium concepts: once the player selects a deviating action, they must follow all subsequent recommendations without further opportunities to deviate. We introduce the variable $\VAvarA{I}{a}{a'}$ to describe the \emph{best expected payoff of player $i$ when reaching the information set $I \in \InfSet{i}$, being recommended $a$, and after deviating by playing $a'$.} Formally, for all players $i \in \{1, \ldots, n\}$, all information sets $I \in \InfSet{i}$ and all actions $a,a' \in \actions{I}$, 
\begin{align}
\VAvarA{I}{a}{a'} &= \sum_{v \in \Leaves\, \mid\, \histP{I}{i}\cdot(I,a') \preceq \histP{v}{i}} \payoffP{v}{i}\,\Reach{v}\,\Xvar{\replaceHA{I}{a}{v}}\label{eq:AFCE-best deviation payoff}
\end{align}

Finally, the incentive constraint naturally correspond to $\UvarA{I}{a}$ being greater than any $\VAvarA{I}{a}{a'}$, as indicated in the linear constraints below: for all $I \in \InfSet{i}$ with $i \in \{1, \ldots, n\}$, for all $a,a'\in \actions{I}$,
\begin{equation}
\UvarA{I}{a} \geq \VAvarA{I}{a}{a'} \label{eq:AFCE-incentive constraint}
\end{equation}

The correctness of our procedure is given in the proposition below. To state the result precisely, we define $\System{S}$ the system of linear constraints consisting of \eqrefCA[S]{eq:corr1}, \eqrefCB[S]{eq:corr2}, \eqrefCC[S,\RelevantAFCE]{eq:corr3} and \Cref{eq:EFCE-best payoff,eq:AFCE-best deviation payoff,eq:AFCE-incentive constraint,eq:threshold}. 


\begin{restatable}{proposition}{propAFCEcorrectness}
\label{th:AFCE}
Given an extensive-form game with perfect recall, there exists a correlation plan $\CorrPlan$ in \AFCE with \[\ExpOmega{\CorrPlan}{\omega} \geq \lambda\] if and only if there exists $S \subseteq \Sigma$ with $|S| \leq |\RelevantAFCE|+1$ such that $\System{S}$ admits a solution.
\end{restatable}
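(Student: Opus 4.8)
The plan is to follow the proof of \Cref{th:EFCE} almost line by line, exploiting two simplifications specific to \AFCE: a deviation is a single action change $\beta = \{I \mapsto a'\}$, after which the deviating player is forced to conform to the recommendation. First, the honest-payoff variables $\UvarA{I}{a}$ are governed by exactly the same equation \eqref{eq:EFCE-best payoff} as in \EFCE, so \Cref{prop:EFCE-best payoff} applies verbatim once we observe that $\RelevantH \subseteq \RelevantAFCE$ and hence any $\theta$ satisfying \eqrefCC[S,\RelevantAFCE]{eq:corr3} (equivalently, by \Cref{prop:Zvar-best payoff}, the identity $\theta(\Xvar{\bh}) = \sum_{\bsigma}\CorrPlan(\bsigma)\conf{\bsigma}{\bh}$ for all $\bh \in \RelevantAFCE$) also satisfies the corresponding identity for $\bh\in\RelevantH$. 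Second, because the deviation involves no subsequent choices, the best-deviation payoff is captured directly by the family $\VAvarA{I}{a}{a'}$ and the incentive inequalities \eqref{eq:AFCE-incentive constraint}, with no need for the $\max$-relaxation machinery (the $\Vvar{I}{a}{I'}$ variables, \eqref{eq:EFCE-relaxation}, and the generic best-deviation lemma) used for \EFCE.

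The only genuinely new ingredient is the deviation lemma: fixing $i$, $I \in \InfSet{i}$, $a,a' \in \actions{I}$ and $\beta = \{I \mapsto a'\}$, an assignment $\theta$ satisfying the identity $\theta(\Xvar{\bh}) = \sum_{\bsigma}\CorrPlan(\bsigma)\conf{\bsigma}{\bh}$ for all $\bh \in \RelevantAFCE$ also satisfies \eqref{eq:AFCE-best deviation payoff} if and only if $\theta(\VAvarA{I}{a}{a'}) = \DenomEFCE \cdot \ExpEFCEDeviation{i}{\CorrPlan}{I}{a}{\beta}$. The proof is a direct expansion, parallel to that of \Cref{prop:EFCE-best payoff}. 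The key combinatorial identity is that for every leaf $v$ with $\histP{I}{i}\cdot(I,a') \preceq \histP{v}{i}$ we have $\conf{\bsigma}{\replaceHA{I}{a}{v}} = \conf{\bsigma[\beta]}{v}\,\conf{\bsigma}{I \rightarrow a}$; this is where \emph{perfect recall} is essential, since the continuation of player~$i$ recorded after $(I,a')$ in $\histP{v}{i}$ is exactly the recommended continuation that $\sigma_i[\beta]$ must follow, so $\sigma_i$ being consistent with $\replaceHA{I}{a}{v}$ is equivalent to $\sigma_i$ recommending $a$ at $I$ together with $\sigma_i[\beta]$ reaching $v$. One then checks that, after summing against $\CorrPlan$, the index set $\{\,v : \histP{I}{i}\cdot(I,a') \preceq \histP{v}{i}\,\}$ may be replaced by $\{\,v : \conf{\bsigma[\beta]}{v} = 1,\ \confIn{I}{v}=1\,\}$, which matches the numerator of the deviation analogue of \eqref{eq:conditional corr I a}, whose denominator is $\DenomEFCE$; the claimed identity follows.

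With \Cref{prop:EFCE-best payoff}, the deviation lemma, and \Cref{prop:Cara} (applied with $H = \RelevantAFCE$) in hand, both implications are transcribed from \Cref{th:EFCE}. For the forward direction, a solution $\theta$ of $\System{S}$ yields a correlation plan $\CorrPlan$ with support $S$; combining the two lemmas with \eqref{eq:AFCE-incentive constraint} gives $\DenomEFCE\cdot\ExpEFCE{i}{\CorrPlan}{I}{a} \geq \DenomEFCE\cdot\ExpEFCEDeviation{i}{\CorrPlan}{I}{a}{\{I\mapsto a'\}}$ for all $I,a,a'$, and dividing by $\DenomEFCE$ when positive (the constraint being vacuous otherwise, exactly as in \Cref{th:EFCE}) yields \eqref{eq:AFCE}; \eqref{eq:threshold} gives $\ExpOmega{\CorrPlan}{\omega}\geq\lambda$. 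For the converse, from an \AFCE $\CorrPlan$ with $\ExpOmega{\CorrPlan}{\omega}\geq\lambda$ we build the assignment $\theta_\CorrPlan$ over support $\Sigma$ satisfying every equation of the system by setting $\theta_\CorrPlan(\UvarA{I}{a}) = \DenomEFCE\cdot\ExpEFCE{i}{\CorrPlan}{I}{a}$ and $\theta_\CorrPlan(\VAvarA{I}{a}{a'}) = \DenomEFCE\cdot\ExpEFCEDeviation{i}{\CorrPlan}{I}{a}{\{I\mapsto a'\}}$, then use \Cref{prop:Cara} to replace $\CorrPlan$ by $\CorrPlan'$ with $|\supp(\CorrPlan')|\leq|\RelevantAFCE|+1$ and identical values on $\RelevantAFCE$; since every equation of $\System{S}$ other than \eqref{eq:corr1}--\eqref{eq:corr3} is independent of the variables $x_\bsigma$, the induced $\theta_{\CorrPlan'}$ inherits them all and satisfies \eqref{eq:corr3} by \Cref{prop:Zvar-best payoff}. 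The main obstacle is the bookkeeping in the deviation lemma — the identity $\conf{\bsigma}{\replaceHA{I}{a}{v}} = \conf{\bsigma[\beta]}{v}\,\conf{\bsigma}{I\rightarrow a}$ and the matching of the two summation ranges — where perfect recall must be invoked to pin down the recommended continuation past the deviation point; the rest is a routine transcription of the \EFCE argument.
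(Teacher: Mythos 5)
Your proposal is correct and follows essentially the same route as the paper: the paper likewise reuses \Cref{prop:EFCE-best payoff} and \Cref{prop:Cara} verbatim, isolates the same deviation lemma (its \Cref{prop:AFCE-best deviation payoff}) built on the identity $\conf{\bsigma}{\replaceHA{I}{a}{v}}\,\confIn{I,a'}{v} = \conf{\bsigma[\beta]}{v}\,\conf{\bsigma}{I \rightarrow a}\,\confIn{I}{v}$, and then transcribes the two implications from \Cref{th:EFCE}, noting as you do that the absence of post-deviation choices removes the need for the $\max$-relaxation machinery.
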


\begin{proof}
We fix $\theta$ to be an assignment of all variables of $\System{S}$. We start the proof by proving the following claim.


\begin{claim}
\label{prop:AFCE-best deviation payoff}
Let $\CorrPlan$ be a correlation plan, and suppose that $\theta$ satisfies \eqreftag[\RelevantH]{eq:proof-correlation-variables}. Then $\theta$ also fulfills \Cref{eq:AFCE-best deviation payoff} if and only if, for all information sets $I \in \InfSet{i}$ with $i \in \{1, \ldots, n\}$, for all $\forall a,a' \in \actions{I}$, 
\[
\theta(\VAvarA{I}{a}{a'}) = \DenomEFCE \cdot \ExpEFCEDeviation{i}{\CorrPlan}{I}{a}{\beta}
\]
with $\beta = \{ I \mapsto a'\}$.
\end{claim}

\begin{proofclaim}
We have
\begin{align*}
\theta(\VAvarA{I}{a}{a'}) &= \sum_{v \in \Leaves\, \mid\, \histP{I}{i}\cdot(I,a') \preceq \histP{v}{i}} \payoffP{v}{i}\,\Reach{v}\,\theta(\Xvar{\replaceHA{I}{a}{v}})\qquad \text{by \Cref{eq:AFCE-best deviation payoff}} \\
&= \sum_{v \in \Leaves}\, \sum_{\bsigma \in \Sigma}\, \payoffP{v}{i}\,\Reach{v}\,\conf{\bsigma}{\replaceHA{I}{a}{v}}\,\confIn{I,a'}{v}\qquad \text{by \eqreftag[\RelevantH]{eq:proof-correlation-variables}} 
\end{align*}
Observe that by definition, for all $v \in \Leaves$, $\conf{\bsigma}{\replaceHA{I}{a}{v}} = 1$ if and only if 
\begin{itemize}
\item for all $j \neq i$, $\confP{j}{\sigma_j}{v} = 1$; and
\item for all $(J,b) \in \histP{v}{i}$, $J \neq I$ implies $\sigma_i(J) = b$.  
\end{itemize}
Since $\beta = \{ I \rightarrow a'\}$, we also know that 
\[
\bsigma[\beta]_i = (\sigma_1,\ldots,\sigma_{i-1},\sigma_i[\beta],\sigma_{i+1},\ldots, \sigma_n)
\]
with $\sigma_i[\beta](I) = a'$ and for all $J \neq I$, $\sigma_i[\beta](J) = \sigma_i(J)$. Hence 
\[
\conf{\bsigma[\beta]_i}{v}\, \conf{\bsigma}{I \rightarrow a}\,\confIn{I}{v} = 1
\]
is equivalent to 
\begin{itemize}
\item $\sigma_i(I) = a$; and
\item $\confIn{I,a'}{v} = 1$; and
\item for all $j \neq i$, $\confP{j}{\bsigma}{v} = 1$; and 
\item for all $(J,b) \in \histP{i}{v}$, $J \neq I$ implies $\sigma_i[\beta](J) = \sigma_i(J) = b$.
\end{itemize}
This allows us to deduce that 
\[
\conf{\bsigma}{\replaceHA{I}{a}{v}}\,\confIn{I,a'}{v} = \conf{\bsigma[\beta]_i}{v}\, \conf{\bsigma}{I \rightarrow a}\,\confIn{I}{v}\,.
\]
We can therefore obtain that 
\begin{align*}
\theta(\VAvarA{I}{a}{a'}) &= \sum_{v \in \Leaves}\, \sum_{\bsigma \in \Sigma}\, \payoffP{v}{i}\,\Reach{v}\,\conf{\bsigma[\beta]_i}{v}\, \conf{\bsigma}{I \rightarrow a}\,\confIn{I}{v}\\
&= \sum_{v \in \Leaves}\, \sum_{\bsigma \in \Sigma}\, \payoffP{v}{i}\,\Reach{v}\,\CondEFCEDeviation{\CorrPlan}{I}{a}{\beta}{\bsigma}{v} \qquad \text{by \Cref{eq:deviation corr I a}}\\
&= \DenomEFCE \cdot \ExpEFCEDeviation{i}{\CorrPlan}{I}{a}{\beta}
\end{align*}
This concludes the proof of the claim.
\end{proofclaim}

We can finally prove the main result which follows the same pattern as the proof of \Cref{th:EFCE} in a simpler manner. We will rely of course on the relevant histories $\RelevantAFCE$ instead of $\Relevant$. Similarly, we rely on \Cref{prop:AFCE-best deviation payoff} instead of \Cref{prop:EFCE-best deviation payoff}. Note that the system of linear constraints in \AFCE is simpler. In particular, from \Cref{prop:AFCE-best deviation payoff}, we directly obtain that 
\[
\theta(\VAvarA{I}{a}{a'}) = \DenomEFCE \cdot \ExpEFCEDeviation{i}{\CorrPlan}{I}{a}{\beta}
\]
for all $i \in \{1, \ldots, n\}$, for all $I \in \InfSet{i}$ and all $a,a' \in \actions{I}$ with $\beta = \{ I \mapsto a'\}$. Thus an assignment $\theta_\CorrPlan$ that satisfies \eqreftag[\RelevantAFCE]{eq:proof-correlation} can be uniquely extended to an assignment that satisfies the linear constraints \Cref{eq:EFCE-best payoff,eq:AFCE-best deviation payoff,eq:AFCE-incentive constraint,eq:threshold}.
\end{proof}


\subsection{The Linear Constraints for \AFCCE}

In \AFCCE, a player may decide to deviate after reaching an information set but before receiving recommendation. Moreover, one it deviates, he is not allowed to deviate any more. This is reflected in the definition of deviation histories: given an information set $I \in \InfSet{i}$ with $i \in \{1, \ldots, n\}$ and a leaf $v \in \Leaves$ such that $\confIn{I}{v} = 1$, we denote 
\[
  \replaceHAC{I}{v} = (\histP{v}{1}, \ldots, \histP{v}{i-1}, \histP{I}{i} \cdot h', \histP{v}{i+1}, \ldots \histP{v}{n})
\] 
where $\histP{v}{i} = \histP{I}{i}\cdot (I,a) \cdot h'$ for some $a \in \actions{I}$ and history $h'$. The set of relevant histories for \AFCCE is then defined as follows:
\[
\RelevantAFCCE = \RelevantH \cup \{ \replaceHAC{I}{v} \mid i \in \{1, \ldots, n\}, I \in \InfSet{i}, v \in \Leaves, \confIn{I}{v} = 1 \}
\]
Our system of linear constraints $\System{S}$ for \AFCCE will contain \eqrefCA[S]{eq:corr1}, \eqrefCB[S]{eq:corr2}, \eqrefCC[S,\RelevantAFCCE]{eq:corr3} and \eqref{eq:threshold} to describe the correlation plan, the relevant histories and the threshold constraint. 

To describe the expected payoff without deviation, we can reuse the linear constraints \Cref{eq:EFCCE-best payoff} used in \EFCCE since the decision to deviate is done at the same moment. 

The description of the best deviation follows the simple linear constraints from \AFCE. We introduce the variable $\VACvar{I}{a}$ to describe the \emph{best expected payoff of player $i$ when reaching the information set $I \in \InfSet{i}$ and after deviating by playing $a$.} Formally, for all information sets $I \in \InfSet{i}$ with $i \in \{1, \ldots, n\}$ and all actions $a' \in \actions{I}$, 
\begin{align}
\VACvar{I}{a'} &= \sum_{v \in \Leaves\, \mid\, \histP{I}{i}\cdot(I,a') \preceq \histP{v}{i}} \payoffP{v}{i}\,\Reach{v}\,\Xvar{\replaceHAC{I}{v}}\label{eq:AFCCE-best deviation payoff}
\end{align} 

Finally, the incentive constraint naturally correspond to $\Uvar{I}$ being greater than any $\VACvar{I}{a'}$, as indicated in the linear constraint below: for all $\forall I \in \InfSet{i}$ with $i \in \{1, \ldots, n\}$, for all $a' \in \actions{I}$,
\begin{equation}
\Uvar{I} \geq \VACvar{I}{a'}\label{eq:AFCCE-incentive constraint}
\end{equation}

The correctness of our procedure is given in the theorem below. To state the result precisely, we define $\System{S}$ the system of linear constraints consisting of \eqrefCA[S]{eq:corr1}, \eqrefCB[S]{eq:corr2}, \eqrefCC[S,\RelevantAFCE]{eq:corr3} and \Cref{eq:EFCCE-best payoff,eq:AFCCE-best deviation payoff,eq:AFCCE-incentive constraint,eq:threshold}.


\begin{restatable}{proposition}{propAFCCEcorrectness}
\label{th:AFCCE}
Given an extensive-form game with perfect recall, there exists a correlation plan $\CorrPlan$ in \AFCCE with
There exists a correlation plan $\CorrPlan$ in \AFCCE with \[\ExpOmega{\CorrPlan}{\omega} \geq \lambda\] if and only if there exists $S \subseteq \Sigma$ with $|S| \leq |\RelevantAFCCE|+1$ such that $\System{S}$ admits a solution.
\end{restatable}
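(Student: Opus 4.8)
The plan is to mirror the proof of \Cref{th:AFCE} almost verbatim, replacing the relevant-history family $\RelevantAFCE$ by $\RelevantAFCCE$ and adjusting the single point at which the recommendation conditioning differs. As in all previous cases, fix an assignment $\theta$ over the variables of $\System{S}$; recall that when $\theta$ satisfies \eqrefCA[S]{eq:corr1} and \eqrefCB[S]{eq:corr2} it corresponds to a correlation plan $\CorrPlan$ with $\supp(\CorrPlan)=S$, and that by \Cref{prop:Zvar-best payoff} satisfying \eqrefCC[S,\RelevantAFCCE]{eq:corr3} is then equivalent to $\theta(\Xvar{\bh}) = \sum_{\bsigma\in\Sigma}\CorrPlan(\bsigma)\,\conf{\bsigma}{\bh}$ for every $\bh\in\RelevantAFCCE$. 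For the expected payoff without deviation there is nothing new to prove: the constraint \Cref{eq:EFCCE-best payoff} is reused verbatim from \EFCCE, so \Cref{prop:EFCCE-best payoff} applies unchanged and yields $\theta(\Uvar{I}) = \DenomEFCCE\cdot\ExpEFCCE{i}{\CorrPlan}{I}$ whenever $\theta$ respects the honest-history equalities \eqreftag[\RelevantH]{eq:proof-correlation}.

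The core step is the \AFCCE analogue of \Cref{prop:AFCE-best deviation payoff}: assuming $\theta$ satisfies the honest-history equalities, $\theta$ satisfies \Cref{eq:AFCCE-best deviation payoff} if and only if, for every $I\in\InfSet{i}$ and $a'\in\actions{I}$, $\theta(\VACvar{I}{a'}) = \DenomEFCCE\cdot\ExpEFCCEDeviation{i}{\CorrPlan}{I}{\beta}$ with $\beta=\{I\mapsto a'\}$. The argument is the same bijective-indicator computation as in \Cref{prop:AFCE-best deviation payoff}, the only difference being that no $\conf{\bsigma}{I\rightarrow a}$ factor occurs (in \AFCCE the player commits to deviating before receiving a recommendation): one shows $\conf{\bsigma}{\replaceHAC{I}{v}}\,\confIn{I,a'}{v} = \conf{\bsigma[\beta]_i}{v}\,\confIn{I}{v}$ by checking that both sides equal $1$ exactly when all players $j\neq i$ follow $\sigma_j$ to $v$, player $i$ follows $\sigma_i$ at every information set on the path to $v$ other than $I$, and $v$ is a descendant of the $a'$-successor of $I$; substituting the honest-history equalities and the definition \eqref{eq:deviation corr I} of $\CondEFCCEDeviation{\CorrPlan}{I}{\beta}{\bsigma}{v}$ then gives the claimed identity. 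Because the player may not deviate again after $I$, this deviation constraint is a single linear equation rather than a relaxed maximum over continuations, so no appeal to \Cref{prop:generic best deviation} is needed here.

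It then remains to assemble the two implications exactly as in the proof of \Cref{th:EFCE}. For the forward direction, a solution $\theta$ of $\System{S}$ corresponds to a correlation plan $\CorrPlan$; combining \Cref{eq:AFCCE-incentive constraint} with the two facts above yields $\ExpEFCCE{i}{\CorrPlan}{I}\ge\ExpEFCCEDeviation{i}{\CorrPlan}{I}{\beta}$ for all $I$ and $\beta=\{I\mapsto a'\}$, i.e. $\CorrPlan$ is an \AFCCE, while \Cref{eq:threshold} together with \eqreftag[\RelevantH]{eq:proof-correlation} gives $\ExpOmega{\CorrPlan}{\omega}\ge\lambda$. For the converse, given an \AFCCE $\CorrPlan$ with $\ExpOmega{\CorrPlan}{\omega}\ge\lambda$, take the induced assignment $\theta_\CorrPlan$ satisfying \eqrefCA[\Sigma]{eq:corr1} and \eqrefCB[\Sigma]{eq:corr2}, extend it so that the history variables, the $\Uvar{I}$ and the $\VACvar{I}{a'}$ take the values prescribed by \Cref{prop:EFCCE-best payoff} and the \AFCCE deviation claim (possible since those equations are definitional, and the incentive inequalities \Cref{eq:AFCCE-incentive constraint} hold because $\CorrPlan$ is an \AFCCE), and then invoke \Cref{prop:Cara} with $H=\RelevantAFCCE$ to obtain a plan $\CorrPlan'$ with $|\supp(\CorrPlan')|\le|\RelevantAFCCE|+1$ agreeing with $\CorrPlan$ on all of $\RelevantAFCCE$; since every equation of $\System{S}$ besides \eqrefCA[S]{eq:corr1}, \eqrefCB[S]{eq:corr2}, \eqrefCC[S,\RelevantAFCCE]{eq:corr3} depends only on the history variables, the assignment $\theta_{\CorrPlan'}$ still satisfies $\System{S}$.

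I expect the only delicate point to be the indicator identity $\conf{\bsigma}{\replaceHAC{I}{v}}\,\confIn{I,a'}{v} = \conf{\bsigma[\beta]_i}{v}\,\confIn{I}{v}$ for the "commit-and-then-follow" deviation of \AFCCE, i.e. verifying that the deviation history $\replaceHAC{I}{v}$, which keeps player $i$'s prefix up to $I$ and splices in the recommended continuation $h'$ leading to $v$, exactly encodes the profile $\bsigma[\{I\mapsto a'\}]$ on the subgame below $I$; once that is in place, everything else is bookkeeping identical to the \AFCE and \EFCCE cases.
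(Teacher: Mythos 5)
Your proposal is correct and follows essentially the same route as the paper: the paper likewise reduces everything to the single indicator identity $\conf{\bsigma}{\replaceHAC{I}{v}}\,\confIn{I,a'}{v} = \conf{\bsigma[\beta]_i}{v}\,\confIn{I}{v}$ (its Claim on the best-deviation payoff), reuses \Cref{prop:EFCCE-best payoff} for the honest payoff, notes that the deviation constraint is definitional so no appeal to \Cref{prop:generic best deviation} is needed, and closes with Carath\'eodory via \Cref{prop:Cara} applied to $\RelevantAFCCE$ exactly as you describe.
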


\begin{proof}
We fix $\theta$ to be an assignment of all variables of $\System{S}$. We start the proof by proving the following claim.


\begin{claim}
\label{prop:AFCCE-best deviation payoff}
Let $\CorrPlan$ be a correlation plan, and suppose that $\theta$ satisfies \eqreftag[\RelevantH]{eq:proof-correlation-variables}. Then $\theta$ also fulfills \Cref{eq:AFCCE-best deviation payoff} if and only if, for all information sets $I \in \InfSet{i}$ with $i \in \{1, \ldots, n\}$, for all $\forall a' \in \actions{I}$, 
\[
\theta(\VACvar{I}{a'}) = \DenomEFCCE \cdot \ExpEFCCEDeviation{i}{\CorrPlan}{I}{\beta}
\]
with $\beta = \{ I \mapsto a'\}$.
\end{claim}

\begin{proofclaim}
We have
\begin{align*}
\theta(\VACvar{I}{a'}) &= \sum_{v \in \Leaves\, \mid\, \histP{I}{i}\cdot(I,a') \preceq \histP{v}{i}} \payoffP{v}{i}\,\Reach{v}\,\theta(\Xvar{\replaceHAC{I}{v}})\qquad \text{by \Cref{eq:AFCCE-best deviation payoff}} \\
&= \sum_{v \in \Leaves}\, \sum_{\bsigma \in \Sigma}\, \payoffP{v}{i}\,\Reach{v}\,\conf{\bsigma}{\replaceHAC{I}{v}}\,\confIn{I,a'}{v}\qquad \text{by \eqreftag[\RelevantH]{eq:proof-correlation-variables}} 
\end{align*}
Observe that by definition, for all $v \in \Leaves$, $\conf{\bsigma}{\replaceHAC{I}{v}} = 1$ if and only if
\begin{itemize}
\item for all $j \neq i$, $\confP{j}{\sigma_j}{v} = 1$; and
\item for all $(J,b) \in \histP{v}{i}$, $J \neq I$ implies $\sigma_i(J) = b$.
\end{itemize}
Since $\beta = \{ I \rightarrow a'\}$, we also know that 
\[
\bsigma[\beta]_i = (\sigma_1,\ldots,\sigma_{i-1},\sigma_i[\beta],\sigma_{i+1},\ldots, \sigma_n)
\]
with $\sigma_i[\beta](I) = a'$ and for all $J \neq I$, $\sigma_i[\beta](J) = \sigma_i(J)$. Hence 
\[
\conf{\bsigma[\beta]_i}{v}\, \confIn{I}{v} = 1
\]
is equivalent to 
\begin{itemize}
\item $\confIn{I,a'}{v} = 1$; and
\item for all $j \neq i$, $\confP{j}{\bsigma}{v} = 1$; and
\item for all $(J,b) \in \histP{i}{v}$, $J \neq I$ implies $\sigma_i[\beta](J) = \sigma_i(J) = b$.
\end{itemize}
This allows us to deduce that 
\[
\conf{\bsigma}{\replaceHAC{I}{v}}\,\confIn{I,a'}{v} = \conf{\bsigma[\beta]_i}{v}\,\confIn{I}{v}\,.
\]
We can therefore obtain that 
\begin{align*}
\theta(\VACvar{I}{a'}) &= \sum_{v \in \Leaves}\, \sum_{\bsigma \in \Sigma}\, \payoffP{v}{i}\,\Reach{v}\,\conf{\bsigma[\beta]_i}{v}\, \confIn{I}{v}\\
&= \sum_{v \in \Leaves}\, \sum_{\bsigma \in \Sigma}\, \payoffP{v}{i}\,\Reach{v}\,\CondEFCCEDeviation{\CorrPlan}{I}{\beta}{\bsigma}{v} \qquad \text{by \Cref{eq:deviation corr I}}\\
&= \DenomEFCCE \cdot \ExpEFCCEDeviation{i}{\CorrPlan}{I}{\beta}
\end{align*}
This concludes the proof of the claim.
\end{proofclaim}

We can finally prove the main result which follows the same pattern as the proof of \Cref{th:EFCE} in a simpler manner. We will rely of course on the relevant histories $\RelevantAFCCE$ instead of $\Relevant$. Similarly, we rely on \Cref{prop:AFCCE-best deviation payoff,prop:EFCCE-best payoff} instead of \Cref{prop:EFCE-best deviation payoff,prop:EFCE-best payoff}. Note that the system of linear constraints in \AFCCE is simpler (as it was the case for \AFCE). In particular, from \Cref{prop:AFCCE-best deviation payoff}, we directly obtain that 
\[
\theta(\VACvar{I}{a'}) = \DenomEFCCE \cdot \ExpEFCCEDeviation{i}{\CorrPlan}{I}{\beta}
\]
for all $i \in \{1, \ldots, n\}$, for all $I \in \InfSet{i}$ and all $a' \in \actions{I}$ with $\beta = \{ I \mapsto a'\}$. Thus an assignment $\theta_\CorrPlan$ that satisfies \eqreftag[\RelevantAFCCE]{eq:proof-correlation} can be uniquely extended to an assignment that satisfies the linear constraints \Cref{eq:EFCCE-best payoff,eq:AFCCE-best deviation payoff,eq:AFCCE-incentive constraint,eq:threshold}.
\end{proof}


\section{Technical Lemmas}
\label{sec:technical lemmas}

The next lemma focuses on the computation of the best deviation payoff in \EFCE, and in particular on the first sum in \Cref{eq:EFCE-best deviation payoff}. As previously mentioned in \Cref{sec:EFCE linear constraints}, this sum corresponds to the payoff obtained from the leaves where the player $i$ has nothing more to play after reaching $I'$ and playing $a'$. This is thus valid for all  deviation strategies $\beta \in \Sigma_i^{I\prec}$ such that \[\confP{i}{\beta}{\histP{I'}{i} \cdot (I',a')} = 1\,.\]

\begin{restatable}{lemma}{propEFCEbestdeviationpayofflast}[EFCE - Best deviation payoff computation]
\label{prop:EFCE-best deviation payoff-last}
Let $\CorrPlan$ be a correlation plan, and suppose that $\theta$  satisfies\eqreftag[\Relevant]{eq:proof-correlation-variables}.
Let   $I,I' \in \InfSet{i}$  be information sets of the $i$-th player such that $I \preceq I'$. Let  $a \in \actions{I}$ and $a' \in \actions{I'}$ be actions, and 
  $\beta \in \Sigma_i^{I\preceq}$ be a partial strategy. Write
 $h = \histP{I'}{i} \cdot (I',a')$.  If   $\confP{i}{\beta}{h} = 1$, then
\[
\sum_{\substack{\bsigma \in \Sigma\\v \in \Leaves \mid \histP{v}{i} = h}}\, \payoffP{v}{i}\,\Reach{v}\,\CorrPlan(\bsigma)\,\conf{\bsigma}{\replaceH{I}{a}{v}} = 
\sum_{\substack{\bsigma \in \Sigma\\v \in \Leaves \mid \histP{v}{i} = h}}
\DenomEFCE\,\payoffP{v}{i}\,\CondEFCEDeviation{\CorrPlan}{I}{a}{\beta}{\bsigma}{v}
\]
\end{restatable}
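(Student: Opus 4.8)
\textbf{Proof plan for \Cref{prop:EFCE-best deviation payoff-last}.}

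The plan is to unfold the definitions on both sides and show that the summands agree term-by-term, using the hypothesis $\confP{i}{\beta}{h} = 1$ together with $\eqreftag[\Relevant]{eq:proof-correlation-variables}$ to rewrite $\theta(\Xvar{\replaceH{I}{a}{v}})$ as $\sum_{\bsigma} \CorrPlan(\bsigma)\conf{\bsigma}{\replaceH{I}{a}{v}}$. The left-hand side is then a sum over pairs $(\bsigma, v)$ with $\histP{v}{i} = h$ of $\payoffP{v}{i}\,\Reach{v}\,\CorrPlan(\bsigma)\,\conf{\bsigma}{\replaceH{I}{a}{v}}$; the right-hand side, after substituting the definition of $\CondEFCEDeviation{\CorrPlan}{I}{a}{\beta}{\bsigma}{v}$ from \Cref{eq:deviation corr I a}, is a sum of $\payoffP{v}{i}\,\Reach{v}\,\CorrPlan(\bsigma)\,\conf{\bsigma[\beta]}{v}\,\conf{\bsigma}{I\rightarrow a}\,\confIn{I}{v}$ (the denominator $\DenomEFCE$ cancels with the normalising factor). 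So it suffices to prove, for every $\bsigma \in \Sigma$ and every leaf $v$ with $\histP{v}{i} = h$, the pointwise identity
\[
\conf{\bsigma}{\replaceH{I}{a}{v}} \;=\; \conf{\bsigma[\beta]}{v}\,\conf{\bsigma}{I\rightarrow a}\,\confIn{I}{v}\,.
\]

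The core of the argument is unwinding what $\conf{\bsigma}{\replaceH{I}{a}{v}} = 1$ means. By definition $\replaceH{I}{a}{v} = (\histP{v}{1},\ldots,\histP{v}{i-1},\histP{I}{i}\cdot(I,a),\histP{v}{i+1},\ldots,\histP{v}{n})$, so $\conf{\bsigma}{\replaceH{I}{a}{v}} = 1$ iff (a) $\confP{j}{\sigma_j}{\histP{v}{j}} = 1$ for all $j \neq i$, and (b) $\confP{i}{\sigma_i}{\histP{I}{i}\cdot(I,a)} = 1$, i.e. $\sigma_i$ is consistent with the player-$i$ history leading to $I$ and with the recommendation $a$ at $I$ — the latter being exactly $\conf{\bsigma}{I\rightarrow a} = 1$, and the former part being implied once we know $v$ is a descendant of $I$. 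On the right, $\conf{\bsigma[\beta]}{v} = 1$ iff all players follow paths to $v$, where player $i$ uses $\sigma_i[\beta]$; since $\histP{v}{i} = h = \histP{I'}{i}\cdot(I',a')$ passes through $I'$ which is reachable from $I$, and since $\beta \in \Sigma_i^{I\preceq}$ is defined exactly on the information sets weakly below $I$ with $\confP{i}{\beta}{h} = 1$ by hypothesis, the strategy $\sigma_i[\beta]$ is consistent with $h$; the prefix of $h$ up to $I$ (namely $\histP{I}{i}$) is governed by $\sigma_i$ and must equal the history of $v$ restricted to player $i$ up to $I$, which is where $\conf{\bsigma}{I\rightarrow a}$ and $\confIn{I}{v}$ re-enter. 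The verification is a careful case split on the information sets appearing in $\histP{v}{i}$: those strictly before $I$ (controlled by $\sigma_i$, same on both sides), the pair $(I,a)$ itself (captured by $\conf{\bsigma}{I\rightarrow a}$ on the right since $\replaceH{I}{a}{v}$ forces action $a$ at $I$ whereas $v$'s true history might have a different action there, but we only sum over $v$ with $\histP{v}{i} = h$ which already forces the right continuation after $I$), and those from $I$ onward (controlled by $\beta$ on the right, and forced to match $h$ by the hypothesis $\confP{i}{\beta}{h}=1$, while on the left $\replaceH{I}{a}{v}$ truncates player $i$'s history at $(I,a)$ so there is no further constraint — this is consistent because $\confP{i}{\beta}{h}=1$ makes $\sigma_i[\beta]$ automatically agree with $h$). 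The indicator $\confIn{I}{v}$ on the right is needed precisely to ensure $v$ lies below $I$, which is implicit on the left since $\replaceH{I}{a}{v}$ mentions $\histP{I}{i}\cdot(I,a)$ as player $i$'s history and perfect recall then pins down that $v$ descends from $I$.

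I expect the main obstacle to be bookkeeping rather than conceptual: making sure the pointwise identity is stated and proved with the right quantifier scope, in particular handling the subtlety that $\replaceH{I}{a}{v}$ replaces player $i$'s \emph{entire} history by a truncated one ending at $(I,a)$, whereas $\conf{\bsigma[\beta]}{v}$ keeps the full path to $v$ — the two agree only because we restrict the outer sum to leaves $v$ with $\histP{v}{i} = h$ and because $\confP{i}{\beta}{h} = 1$ forces $\sigma_i[\beta]$ to reproduce $h$ regardless of $\sigma_i$. Once the pointwise identity is in hand, summing over $\bsigma$ and over the relevant $v$, and invoking $\eqreftag[\Relevant]{eq:proof-correlation-variables}$ to turn $\sum_\bsigma \CorrPlan(\bsigma)\conf{\bsigma}{\replaceH{I}{a}{v}}$ into $\theta(\Xvar{\replaceH{I}{a}{v}})$, closes the proof. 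I would also double-check the cancellation of $\DenomEFCE$: on the right-hand side each term carries a factor $\DenomEFCE$ which, by \Cref{eq:deviation corr I a}, is exactly the denominator appearing inside $\CondEFCEDeviation{\CorrPlan}{I}{a}{\beta}{\bsigma}{v}$, so the product leaves the unnormalised quantity $\CorrPlan(\bsigma)\,\Reach{v}\,\conf{\bsigma[\beta]}{v}\,\conf{\bsigma}{I\rightarrow a}\,\confIn{I}{v}$, matching the left-hand side.
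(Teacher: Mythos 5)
Your proposal is correct and follows essentially the same route as the paper's proof: both reduce the claim, after cancelling $\DenomEFCE$ against the normalising denominator in \Cref{eq:deviation corr I a}, to the pointwise indicator identity $\conf{\bsigma}{\replaceH{I}{a}{v}} = \conf{\bsigma[\beta]}{v}\,\conf{\bsigma}{I\rightarrow a}\,\confIn{I}{v}$, and verify it by factoring player $i$'s consistency at the split point $I$ (the $\sigma_i$-governed prefix $\histP{I}{i}$ versus the $\beta$-governed suffix forced by $\confP{i}{\beta}{h}=1$), with $\confIn{I}{v}=1$ following from $I \preceq I'$. The only quibble is a wording slip where you say the consistency of $\sigma_i$ with $\histP{I}{i}$ is "implied once we know $v$ is a descendant of $I$" — it is not implied but rather appears as a common factor on both sides — though the rest of your paragraph shows you handle this correctly.
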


\begin{proof}
Take $\bsigma \in \Sigma$ and $v \in \Leaves$ such that $\histP{v}{i} = h$.
By definition, 
\[\DenomEFCE\,\CondEFCEDeviation{\CorrPlan}{I}{a}{\beta}{\bsigma}{v} = \CorrPlan(\bsigma)\,\Reach{v} \conf{\bsigma[\beta]}{v} \confIn{I}{v} \conf{\bsigma}{I \rightarrow a}\,.
\]
Moreover by definition of $\replaceH{I}{a}{v}$ and by \Cref{eq:proof-correlation}, we know that 
\[
\theta(\Xvar{\replaceH{I}{a}{v}}) = \sum_{\bsigma \in \Sigma} \CorrPlan(\bsigma)\,\confP{i}{\sigma_i}{\histP{I}{i} \cdot(I,a)}\, \prod_{j\neq i} \confP{j}{\sigma_j}{v}
\]

Thus to prove the equality, it suffices to show that
\begin{equation}
  \confP{i}{\sigma_i}{\histP{I}{i} \cdot(I,a)} \prod_{j\neq i} \confP{j}{\sigma_j}{v} = \conf{\bsigma[\beta]}{v} \confIn{I}{v} \conf{\bsigma}{I \rightarrow a}
  \label{eq:same indicator}
\end{equation}
Since we assumed $\histP{v}{i} = h = \histP{I'}{i}\cdot(I',a')$, we directly have that 
\[\confP{i}{\sigma_i[\beta]}{\histP{I'}{i}\cdot(I',a')} = \confP{i}{\sigma_i[\beta]_i}{v}\,.\]
Moreover, as $I \preceq I'$, we also have $\confIn{I}{v} = 1$. 
Recall that by definition, \[\conf{\bsigma[\beta]}{v} = \confP{i}{\sigma_i[\beta]}{v}\prod_{j\neq i} \confP{j}{\sigma_i}{v}\,.\] Therefore, \Cref{eq:same indicator} is equivalent to 
\[
  \confP{i}{\sigma_i}{\histP{I}{i}\cdot(I,a)} = \confP{i}{\sigma_i[\beta]}{v} \conf{\bsigma}{I \rightarrow a}
\]
By definition $\confP{i}{\sigma_i}{\histP{I}{i}\cdot(I,a)} = \confP{i}{\sigma_i}{\histP{I}{i}}\, \conf{\bsigma}{I \rightarrow a}$. Moreover, as $\beta \in \Sigma_i^{I \preceq}$, we deduce that \[\confP{i}{\sigma_i[\beta]}{v} = \confP{i}{\sigma_i}{\histP{I}{i}}\, \confP{i}{\beta}{h}\,.\] As $\confP{i}{\beta}{h} = 1$ by hypothesis, we conclude that \Cref{eq:same indicator} holds.
\end{proof}


\begin{lemma}[EFCCE - Best deviation payoff computation]
\label{prop:EFCCE-best deviation payoff-last}
Let $\CorrPlan$ be a correlation plan and let $\theta$ be an assignment such that \eqreftag[\RelevantEFCCE]{eq:proof-correlation-variables} holds. For all $i \in \{1, \ldots, n\}$, for all $I,I' \in \InfSet{i}$, for all $a' \in \actions{I'}$, for all $\beta \in \Sigma_i^{I\preceq}$, denoting $h = \histP{I'}{i} \cdot (I',a')$, if $I \preceq I'$ and $\confP{i}{\beta}{h} = 1$ then
\[
\sum_{v \in \Leaves \mid \histP{v}{i} = h}\, \sum_{\bsigma \in \Sigma}\,\payoffP{v}{i}\,\Reach{v}\,\CorrPlan(\bsigma)\,\conf{\bsigma}{\replaceHI{I}{v}} = \sum_{\bsigma \in \Sigma}\,\sum_{v \in \Leaves \mid \histP{v}{i} = h} \DenomEFCCE\,\payoffP{v}{i}\,\CondEFCCEDeviation{\CorrPlan}{I}{\beta}{\bsigma}{v}
\]
\end{lemma}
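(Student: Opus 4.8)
The statement is the \EFCCE-analogue of \Cref{prop:EFCE-best deviation payoff-last}, and the proof follows the same pattern, with two simplifications coming from the fact that in \EFCCE the player commits to deviating \emph{before} receiving a recommendation, so there is no ``$\conf{\bsigma}{I\rightarrow a}$'' factor to carry around. First I would fix a pure strategy profile $\bsigma\in\Sigma$ and a leaf $v\in\Leaves$ with $\histP{v}{i} = h$, and unfold both sides for this single pair. On the right-hand side, by the definition of $\CondEFCCEDeviation{\CorrPlan}{I}{\beta}{\bsigma}{v}$ in \Cref{eq:deviation corr I} we have
\[
\DenomEFCCE\,\CondEFCCEDeviation{\CorrPlan}{I}{\beta}{\bsigma}{v} = \CorrPlan(\bsigma)\,\Reach{v}\,\conf{\bsigma[\beta]}{v}\,\confIn{I}{v}\,.
\]
On the left-hand side, by the definition of $\replaceHI{I}{v} = (\histP{v}{1},\ldots,\histP{v}{i-1},\histP{I}{i},\histP{v}{i+1},\ldots,\histP{v}{n})$ together with \eqref{eq:proof-correlation} (instantiated with $H = \RelevantEFCCE$), the variable $\theta(\Xvar{\replaceHI{I}{v}})$ expands as $\sum_{\bsigma}\CorrPlan(\bsigma)\,\confP{i}{\sigma_i}{\histP{I}{i}}\,\prod_{j\neq i}\confP{j}{\sigma_j}{v}$. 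Hence it suffices to prove the pointwise identity
\[
\confP{i}{\sigma_i}{\histP{I}{i}}\,\prod_{j\neq i}\confP{j}{\sigma_j}{v} \;=\; \conf{\bsigma[\beta]}{v}\,\confIn{I}{v}\,,
\]
after which summing over $\bsigma$ and over the relevant leaves $v$ yields the claimed equality, since $\payoffP{v}{i}$ and $\Reach{v}$ are common scalar factors.

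\textbf{Key steps for the pointwise identity.} Since $I\preceq I'$ and $\histP{v}{i} = h = \histP{I'}{i}\cdot(I',a')$ passes through $I'$, the leaf $v$ is a descendant of a node in $I$, so $\confIn{I}{v} = 1$; this removes that factor. Expanding $\conf{\bsigma[\beta]}{v} = \confP{i}{\sigma_i[\beta]}{v}\,\prod_{j\neq i}\confP{j}{\sigma_j}{v}$, the product-over-$j\neq i$ factors match on both sides, so the identity reduces to $\confP{i}{\sigma_i}{\histP{I}{i}} = \confP{i}{\sigma_i[\beta]}{v}$. Now decompose the player-$i$ history of $v$ as $\histP{v}{i} = \histP{I}{i}\cdot h'$ where $h' = (I',a')$-suffixed tail, with $\mathit{dom}(\beta)$ contained in the information sets reachable from $I$; by perfect recall and the definition of $\sigma_i[\beta]$, we get $\confP{i}{\sigma_i[\beta]}{v} = \confP{i}{\sigma_i}{\histP{I}{i}}\cdot\confP{i}{\beta}{h'}$ (the prefix $\histP{I}{i}$ lies strictly before any information set in $\mathit{dom}(\beta)$, so $\beta$ does not constrain it, while the suffix is entirely governed by $\beta$ once we are past $I$). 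Finally, $h'$ is a prefix of $h = \histP{I'}{i}\cdot(I',a')$, and the hypothesis $\confP{i}{\beta}{h} = 1$ forces $\confP{i}{\beta}{h'} = 1$ as well, giving exactly $\confP{i}{\sigma_i[\beta]}{v} = \confP{i}{\sigma_i}{\histP{I}{i}}$.

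\textbf{Main obstacle.} The genuinely delicate point is the bookkeeping in the step $\confP{i}{\sigma_i[\beta]}{v} = \confP{i}{\sigma_i}{\histP{I}{i}}\cdot\confP{i}{\beta}{h'}$: one must carefully argue, using perfect recall, that every information set appearing in $\histP{v}{i}$ after the prefix $\histP{I}{i}$ is $\succeq I$ (hence a legitimate element of the domain on which $\beta$ is allowed to be defined), and that conversely $\histP{I}{i}$ contains no information set in $\mathit{dom}(\beta)\subseteq\Sigma_i^{I\preceq}$, so $\sigma_i[\beta]$ and $\sigma_i$ agree on that prefix. Everything else — unfolding the conditional-probability definitions, cancelling the common $\prod_{j\neq i}$ factors, and invoking $\confIn{I}{v}=1$ — is routine and mirrors \Cref{prop:EFCE-best deviation payoff-last} almost verbatim, simply with the $\conf{\bsigma}{I\rightarrow a}$ factor absent throughout.
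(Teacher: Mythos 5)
Your proof is correct and follows essentially the same route as the paper's: unfold both sides for a fixed pair $(\bsigma,v)$ with $\histP{v}{i}=h$, reduce to the pointwise indicator identity $\confP{i}{\sigma_i}{\histP{I}{i}}\prod_{j\neq i}\confP{j}{\sigma_j}{v}=\conf{\bsigma[\beta]}{v}\,\confIn{I}{v}$, use $I\preceq I'$ to get $\confIn{I}{v}=1$, and then use $\beta\in\Sigma_i^{I\preceq}$ together with $\confP{i}{\beta}{h}=1$ to conclude $\confP{i}{\sigma_i[\beta]}{v}=\confP{i}{\sigma_i}{\histP{I}{i}}$. The only quibble is the phrase ``$h'$ is a prefix of $h$'' (it is a suffix), but since $\confP{i}{\beta}{\cdot}$ is a conjunction over the pairs of the history, consistency with $h$ still implies consistency with $h'$, so the argument stands.
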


\begin{proof}
Take $\bsigma \in \Sigma$ and $v \in \Leaves$ such that $\histP{v}{i} = h$.
By definition, 
\[\DenomEFCCE\,\CondEFCCEDeviation{\CorrPlan}{I}{\beta}{\bsigma}{v} = \CorrPlan(\bsigma)\,\Reach{v} \conf{\bsigma[\beta]}{v} \confIn{I}{v}\,.
\]
Moreover by definition of $\replaceHI{I}{v}$ and by \eqreftag[\RelevantEFCCE]{eq:proof-correlation-variables}, we know that 
\[
\theta(\Xvar{\replaceHI{I}{v}}) = \sum_{\bsigma \in \Sigma} \CorrPlan(\bsigma)\,\confP{i}{\sigma_i}{\histP{I}{i}}\, \prod_{j\neq i} \confP{j}{\sigma_j}{v}
\]

Thus to prove the equality, it suffices to show that
\begin{equation}
  \confP{i}{\sigma_i}{\histP{I}{i}} \prod_{j\neq i} \confP{j}{\sigma_j}{v} = \conf{\bsigma[\beta]}{v} \confIn{I}{v}
  \label{eq:same indicator EFCCE}
\end{equation}
Since we assumed $\histP{v}{i} = h = \histP{I'}{i}\cdot(I',a')$, we directly have that $\confP{i}{\sigma_i[\beta]}{\histP{I'}{i}\cdot(I',a')} = \confP{i}{\sigma_i[\beta]}{v}$. Moreover, as $I \preceq I'$, we also have $\confIn{I}{v} = 1$. 
Recall that by definition, $\conf{\bsigma[\beta]}{v} = \confP{i}{\sigma_i[\beta]}{v}\prod_{j\neq i} \confP{j}{\sigma_i}{v}$. Therefore, \Cref{eq:same indicator EFCCE} is implied by
\[
  \confP{i}{\sigma_i}{\histP{I}{i}} = \confP{i}{\sigma_i[\beta]}{v}
\]
As $\beta \in \Sigma_i^{I \preceq}$, we deduce that $\confP{i}{\sigma_i[\beta]}{v} = \confP{i}{\sigma_i}{\histP{I}{i}}\, \confP{i}{\beta}{h}$. As $\confP{i}{\beta}{h} = 1$ by hypothesis, we conclude that \Cref{eq:same indicator EFCCE} holds.
\end{proof}


\begin{lemma}[NFCCE - Best deviation payoff computation]
\label{prop:NFCCE-best deviation payoff-last}
Let $\CorrPlan$ be a correlation plan and let $\theta$ be an assignment such that \eqreftag[\RelevantNFCCE]{eq:proof-correlation-variables} holds. For all $i \in \{1, \ldots, n\}$, for all $I' \in \InfSet{i}$, for all $a' \in \actions{I'}$, for all $\beta \in \Sigma_i$, denoting $h = \histP{I'}{i} \cdot (I',a')$, if $\confP{i}{\beta}{h} = 1$ then
\[
\sum_{v \in \Leaves \mid \histP{v}{i} = h}\, \sum_{\bsigma \in \Sigma}\, \payoffP{v}{i}\,\Reach{v}\,\CorrPlan(\bsigma)\,\conf{\bsigma}{\replaceHEpsilon{i}{v}} = \sum_{\bsigma \in \Sigma}\,\sum_{v \in \Leaves \mid \histP{v}{i} = h} \payoffP{v}{i}\,\CondNFCCEDeviation{\CorrPlan}{\beta}{\bsigma}{v}
\]
\end{lemma}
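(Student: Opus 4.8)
The final statement is \Cref{prop:NFCCE-best deviation payoff-last}, the ``best deviation payoff computation'' lemma for \NFCCE. The plan is to mirror exactly the arguments already carried out for the analogous lemmas \Cref{prop:EFCE-best deviation payoff-last} and \Cref{prop:EFCCE-best deviation payoff-last}, simplifying wherever the \NFCCE deviation-history structure is simpler (no recommendation event, deviation history replaces player $i$'s component by $\varepsilon$). Fix a player $i$, an information set $I' \in \InfSet{i}$, an action $a' \in \actions{I'}$, a deviation strategy $\beta \in \Sigma_i$ with $h = \histP{I'}{i}\cdot(I',a')$ and $\confP{i}{\beta}{h}=1$. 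First I would fix an arbitrary $\bsigma \in \Sigma$ and a leaf $v \in \Leaves$ with $\histP{v}{i} = h$, and reduce the claimed summand-by-summand equality to a single indicator identity. Expanding the right-hand side via the definition of $\CondNFCCEDeviation{\CorrPlan}{\beta}{\bsigma}{v} = \CorrPlan(\bsigma)\,\Reach{v}\,\conf{\bsigma[\beta]}{v}$ (recall that in \NFCCE there is no conditioning denominator, unlike in \EFCCE), and expanding the left-hand side via the definition of $\replaceHEpsilon{i}{v}$ together with the hypothesis \eqreftag[\RelevantNFCCE]{eq:proof-correlation-variables}, which gives
\[
\theta(\Xvar{\replaceHEpsilon{i}{v}}) = \sum_{\bsigma \in \Sigma} \CorrPlan(\bsigma)\, \prod_{j \neq i} \confP{j}{\sigma_j}{v}\,,
\]
the equality reduces to showing
\[
\prod_{j \neq i} \confP{j}{\sigma_j}{v} = \conf{\bsigma[\beta]}{v}\,.
\]

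Next I would establish this indicator identity. Using $\conf{\bsigma[\beta]}{v} = \confP{i}{\sigma_i[\beta]}{v} \cdot \prod_{j \neq i} \confP{j}{\sigma_j}{v}$, it suffices to prove $\confP{i}{\sigma_i[\beta]}{v} = 1$. Since $\histP{v}{i} = h$ and $\confP{i}{\sigma_i[\beta]}{\node} = \confP{i}{\sigma_i[\beta]}{\histP{v}{i}}$, this equals $\confP{i}{\sigma_i[\beta]}{h}$. Now $\beta \in \Sigma_i$ is a \emph{total} strategy (for \NFCCE, the deviation replaces the entire strategy of player $i$), so $\sigma_i[\beta] = \beta$, and hence $\confP{i}{\sigma_i[\beta]}{h} = \confP{i}{\beta}{h} = 1$ by hypothesis. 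This closes the argument. (I should double-check the convention in the paper's \NFCCE definition: there $\beta$ ranges over $\Sigma_i$ and $\sigma_i[\beta]$ is taken to be $\beta$ when $\beta$ is a total strategy, so the identity $\sigma_i[\beta] = \beta$ is immediate from the definition of $\sigma_i[\beta]$ with $\mathit{dom}(\beta) = \InfSet{i}$.) Summing over all $\bsigma \in \Sigma$ and all leaves $v$ with $\histP{v}{i} = h$ then yields the stated equality.

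\textbf{Main obstacle.} The calculation itself is routine — a careful unrolling of indicator definitions exactly as in the \EFCE and \EFCCE versions. The only genuine point requiring care is bookkeeping around the two roles of $\beta$: in \NFCCE, $\beta$ is a full strategy in $\Sigma_i$ rather than a partial strategy in $\Sigma_i^{I\preceq}$, so the step $\sigma_i[\beta] = \beta$ replaces the more delicate factorization $\confP{i}{\sigma_i[\beta]}{v} = \confP{i}{\sigma_i}{\histP{I}{i}}\cdot\confP{i}{\beta}{h}$ used in the earlier lemmas. One must make sure the hypothesis $\confP{i}{\beta}{h}=1$ is used in precisely the right place and that no conditioning-denominator term (present in the \EFCCE analogue as $\DenomEFCCE$) is erroneously inserted, since the \NFCCE expected-deviation payoff is an unnormalized expectation over $\mu$. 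Beyond that, the proof is a direct transcription with simplifications, and I would present it in essentially the same three-line style as \Cref{prop:NFCCE-best deviation payoff-last}'s siblings.
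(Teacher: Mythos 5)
Your proposal is correct and follows essentially the same route as the paper's proof: reduce the claim summand-by-summand to the indicator identity $\prod_{j\neq i}\confP{j}{\sigma_j}{v} = \conf{\bsigma[\beta]}{v}$, then conclude from $\histP{v}{i}=h$, the fact that $\beta\in\Sigma_i$ is total so $\sigma_i[\beta]=\beta$, and the hypothesis $\confP{i}{\beta}{h}=1$. Your observations about the absent conditioning denominator and the simplification relative to the \EFCE/\EFCCE analogues match the paper exactly.
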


\begin{proof}
Take $\bsigma \in \Sigma$ and $v \in \Leaves$ such that $\histP{v}{i} = h$.
By definition, 
\[\CondNFCCEDeviation{\CorrPlan}{\beta}{\bsigma}{v} = \CorrPlan(\bsigma)\,\Reach{v}\,\conf{\bsigma[\beta]}{v}\,.
\]
Moreover by definition of $\replaceHEpsilon{i}{v}$ and by \eqreftag[\RelevantNFCCE]{eq:proof-correlation-variables}, we know that 
\[
\theta(\Xvar{\replaceHEpsilon{i}{v}}) = \sum_{\bsigma \in \Sigma} \CorrPlan(\bsigma)\, \prod_{j\neq i} \confP{j}{\sigma_j}{v}
\]

Thus to prove the equality, it suffices to show that
\begin{equation}
  \prod_{j\neq i} \confP{j}{\sigma_j}{v} = \conf{\bsigma[\beta]}{v}
  \label{eq:same indicator NFCCE}
\end{equation}
Since we assumed $\histP{v}{i} = h$, we directly have that $\confP{i}{\sigma_i[\beta]}{h} = \confP{i}{\sigma_i[\beta]}{v}$. Moreover, as $\beta \in \Sigma$, we have $\sigma_i[\beta] = \beta$. By hypothesis, $\confP{i}{\beta}{h} = 1$ hence $\confP{i}{\sigma_i[\beta]}{v} = 1$. Finally, 
recall that by definition, $\conf{\bsigma[\beta]}{v} = \confP{i}{\sigma_i[\beta]}{v}\prod_{j\neq i} \confP{j}{\sigma_i}{v}$. Thus, we deduce that \Cref{eq:same indicator NFCCE} holds.
\end{proof}

The next proposition is the main proposition that will allow us to handle, at the same time, the systems of linear constraints for computing the best deviation payoff for the three equilibria \NFCCE, \EFCE and \EFCCE. 
Denote $I_{root}$ to be a virtual information belonging to all players, basically representing the beginning of the game. Hence, for all $i \in \{1, \ldots, n\}$ and $\histP{I_{root}}{i} = \varepsilon$ and all $I \in \InfSet{i}$, we will assume that $I_{root} \prec I$. 

\begin{lemma}[Generic best deviation payoff computation]
\label{prop:generic best deviation}
Consider a player $i \in \{1, \ldots, n\}$, an information set $I \in \InfSet{i} \cup \{ I_{root} \}$. Let 
\[H = \begin{cases}
    \{ \histP{v}{i} \mid v \in \Node, \histP{I}{i} \prec \histP{v}{i}\} & \text{if } I \in \InfSet{i}\\
    \{ \histP{v}{i} \mid v \in \Node\} & \text{otherwise } (\text{when } I = I_{root} )
\end{cases}\]
Let  functions $f : \Node \rightarrow \mathbb{R}$ and $g : \Sigma_i^{I\preceq} \times \Node \rightarrow \mathbb{R}$ be such that:
\begin{itemize}
\item for all $I' \in \InfSet{i}$,  all $a' \in \actions{I'}$ and $v\in \Leaves$, if $I \preceq I'$,  moreover if $\confP{i}{\beta}{\histP{I'}{i}\cdot(I',a')} \cdot \conf{I'}{v} = 1$ and $g(\beta,v) \neq 0$, then \[\histP{I'}{i}\cdot(I',a') \preceq \histP{v}{i}\]
\item for all $\beta,\beta' \in \Sigma_i^{I\preceq}$ and all $v \in \Leaves$, if $\confP{i}{\beta}{v} = \confP{i}{\beta'}{v}$ the \[g(\beta,v) = g(\beta',v)\]
\item for all $h \in H$ and  all $\beta \in \Sigma_i^{I \preceq}$, if $\confP{i}{\beta}{h} = 1$ then
\begin{equation}
\label{eq:max F and G}
\sum_{v \in \Leaves \mid \histP{v}{i} = h} f(v) = \sum_{v \in L \mid \histP{v}{i} = h} g(\beta,v)
\end{equation}
\end{itemize} 
Moreover, consider the follow linear constraints:
\begin{align}
\bestpayoff{i}{h} &= \sum_{v \in \Leaves \mid \histP{v}{i} = h} f(v) 
+ \sum_{J \in \InfSet{i} \mid \histP{J}{i} = h} \bestpayoff{i}{J} & \forall h \in H\label{eq:generic-bestpayoff-h}\\
\bestpayoff{i}{I'} &\geq  \bestpayoff{i}{\histP{I'}{i}\cdot(I',a')} & \forall I' \in \InfSet{i} \text{ s.t. } I \preceq I', \forall a' \in \actions{I'}\label{eq:generic-bestpayoff-I}
\end{align}
Let $\theta^*$ the assignment defined as:
\begin{align*}
\theta^*(\bestpayoff{i}{h}) &=  \max_{\beta \in \Sigma_i^{I\preceq} \mid \confP{i}{\beta}{h}=1} \, \sum_{v \in L\,\mid\,h \preceq \histP{v}{i}} g(\beta,v) &\forall h \in H
\\
\theta^*(\bestpayoff{i}{I'}) &=  \max_{\beta \in \Sigma_i^{I\preceq}\mid \confP{i}{\beta}{\histP{I'}{i}}=1} \, \sum_{v \in L\,\mid\,\conf{I'}{v}=1} g(\beta,v)& \forall I' \in \InfSet{i} \text{ s.t. } I \preceq I'
\end{align*}

The following two propositions hold:
\begin{enumerate}
\item $\theta^*$ is a solution of \Cref{eq:generic-bestpayoff-h,eq:generic-bestpayoff-I};
\item $\theta \geq \theta^*$ for all solutions $\theta$ of \Cref{eq:generic-bestpayoff-h,eq:generic-bestpayoff-I}, where~$\geq$ is component-wise comparison. 
\end{enumerate}

\end{lemma}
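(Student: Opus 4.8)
The plan is to read $\bestpayoff{i}{h}$ and $\bestpayoff{i}{I'}$ as the value of a dynamic program solved bottom-up over the tree of player-$i$ histories: $\bestpayoff{i}{h}$ collects the contribution $f(v)$ of all leaves $v$ at which player $i$ stops with history exactly $h$, plus the best values $\bestpayoff{i}{J}$ at the ``next'' player-$i$ information sets $J$ after $h$, while $\bestpayoff{i}{I'}$ is the best over the actions available at $I'$. Accordingly I would prove the two assertions separately: (1) the assignment $\theta^*$, which by definition takes exactly these optimal values, satisfies \eqref{eq:generic-bestpayoff-h} and \eqref{eq:generic-bestpayoff-I}; and (2) any solution $\theta$ of \eqref{eq:generic-bestpayoff-h}--\eqref{eq:generic-bestpayoff-I} dominates $\theta^*$ coordinatewise, because \eqref{eq:generic-bestpayoff-h} is an equality that propagates values upward and \eqref{eq:generic-bestpayoff-I} forces $\bestpayoff{i}{I'}$ to be at least the maximum over the actions at $I'$.

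First I would record the structural facts used throughout, all consequences of perfect recall: for $h\in H$, the leaves $v$ with $h\preceq\histP{v}{i}$ are partitioned into those with $\histP{v}{i}=h$ and, for each $J\in\InfSet{i}$ with $\histP{J}{i}=h$, those with $\confIn{J}{v}=1$; the information sets $J$ appearing here are pairwise $\preceq$-incomparable, hence the domains $\{K\in\InfSet{i}\mid J\preceq K\}$ are pairwise disjoint; and each such $J$ satisfies $I\preceq J$, so that $\theta^*(\bestpayoff{i}{J})$ is defined. I would also note that, by the first hypothesis on $g$, whenever $\confP{i}{\beta}{\histP{I'}{i}\cdot(I',a')}\cdot\confIn{I'}{v}=1$ and $g(\beta,v)\neq 0$ one has $\histP{I'}{i}\cdot(I',a')\preceq\histP{v}{i}$, so that in a sum $\sum_{v:\confIn{I'}{v}=1}g(\beta,v)$ only the leaves consistent with the action $\beta$ plays at $I'$ contribute. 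For \eqref{eq:generic-bestpayoff-I} this lets me rewrite $\theta^*(\bestpayoff{i}{I'})=\max_{a'\in\actions{I'}}\theta^*(\bestpayoff{i}{\histP{I'}{i}\cdot(I',a')})$, which is $\geq\theta^*(\bestpayoff{i}{\histP{I'}{i}\cdot(I',a')})$ for every $a'$; I would keep this exact identity for later. For \eqref{eq:generic-bestpayoff-h} I would use the leaf partition to split $\theta^*(\bestpayoff{i}{h})=\max_\beta\big(\sum_{v:\histP{v}{i}=h}g(\beta,v)+\sum_{J:\histP{J}{i}=h}\sum_{v:\confIn{J}{v}=1}g(\beta,v)\big)$; the first summand does not depend on the (admissible) choice of $\beta$ and equals $\sum_{v:\histP{v}{i}=h}f(v)$ by \eqref{eq:max F and G}, and the remaining maximum decomposes as $\sum_{J:\histP{J}{i}=h}\theta^*(\bestpayoff{i}{J})$ since the $J$-contributions depend on disjoint blocks of coordinates of $\beta$ (second hypothesis on $g$) and can therefore be optimised independently and glued.

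For (2) I would induct on the depth of the player-$i$ subtree below $h$ (equivalently on the length of the longest player-$i$ history extending $h$), simultaneously proving $\theta(\bestpayoff{i}{h})\geq\theta^*(\bestpayoff{i}{h})$ for all $h\in H$ and $\theta(\bestpayoff{i}{I'})\geq\theta^*(\bestpayoff{i}{I'})$ for all $I'\succeq I$. For an information set $I'$, \eqref{eq:generic-bestpayoff-I} gives $\theta(\bestpayoff{i}{I'})\geq\theta(\bestpayoff{i}{\histP{I'}{i}\cdot(I',a')})\geq\theta^*(\bestpayoff{i}{\histP{I'}{i}\cdot(I',a')})$ for each $a'$ by the inductive hypothesis; maximising over $a'$ and using the exact identity from part (1) yields $\theta(\bestpayoff{i}{I'})\geq\theta^*(\bestpayoff{i}{I'})$. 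For a history $h$, \eqref{eq:generic-bestpayoff-h} is an equality, so $\theta(\bestpayoff{i}{h})=\sum_{v:\histP{v}{i}=h}f(v)+\sum_{J:\histP{J}{i}=h}\theta(\bestpayoff{i}{J})\geq\sum_{v:\histP{v}{i}=h}f(v)+\sum_{J:\histP{J}{i}=h}\theta^*(\bestpayoff{i}{J})=\theta^*(\bestpayoff{i}{h})$, the last equality being part (1) and the inequality the inductive hypothesis applied to the strictly smaller subtrees below each $J$.

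The main obstacle is the decomposition of the maximum over partial deviation strategies into a sum of per-subtree maxima in the verification of \eqref{eq:generic-bestpayoff-h}. It requires identifying precisely which coordinates of $\beta$ each summand $\sum_{v:\confIn{J}{v}=1}g(\beta,v)$ depends on --- using the second hypothesis on $g$ together with the observation that, once the first hypothesis kills the leaves inconsistent with $\beta$, the surviving leaves below $J$ only react to $\beta$ restricted to $\{K\in\InfSet{i}\mid J\preceq K\}$ --- and then gluing the separately optimal sub-strategies in the disjoint blocks $\{K\mid J\preceq K\}$ (and setting $\beta$ arbitrarily elsewhere) into a single partial strategy realising all the per-$J$ maxima at once. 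The surrounding perfect-recall bookkeeping (pairwise incomparability of the next information sets, the partition of the relevant leaves, and $I\preceq J$) is routine but must be spelled out to keep $\theta^*$ well-defined and the sums disjoint.
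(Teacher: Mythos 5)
Your proposal is correct and follows essentially the same route as the paper's proof: the same leaf partition of $\{v : h \preceq \histP{v}{i}\}$, the same use of perfect recall to get disjoint coordinate blocks $\{K : J \preceq K\}$ so that per-$J$ optimal sub-strategies can be glued into a single $\beta_{\max}$ realising the sum of maxima, the same identity $\theta^*(\bestpayoff{i}{I'}) = \max_{a'}\theta^*(\bestpayoff{i}{\histP{I'}{i}\cdot(I',a')})$ derived from the first hypothesis on $g$, and the same bottom-up induction for the minimality claim. No gaps.
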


\begin{proof}
Let $h \in H$. Let $\beta \in \Sigma_i^{\preceq I}$. Let $J_1, \ldots, J_m$ be the information sets in $\InfSet{i}$ such that $\histP{J_k}{i} = h$ for all $k \in \{1, \ldots, m\}$. We have:
\begin{equation}
\label{eq:proof-split g beta}
\sum_{v \in \Leaves\, \mid\, h \preceq \histP{v}{i}} g(\beta,v)
= \sum_{v \in \Leaves\,|\, \histP{v}{i} = h} g(\beta,v) + \sum_{k=1}^m \,\sum_{v \in \Leaves\,| \confIn{J_k}{v} = 1}  g(\beta,v)
\end{equation}
By definition of $\theta^*$, take $\beta_1, \ldots, \beta_m \in \Sigma_i^{I\preceq}$ such that for all $k \in \{1, \ldots, m\}$, $ \confP{i}{\beta}{\histP{J_k}{i}}=1$ and
\[
\theta^*(\bestpayoff{i}{J_k}) = \sum_{v \in \Leaves\,| \confIn{J_k}{v} = 1} g(\beta_k,v)
\]
Due to the perfect recall assumption, we know that for all $j \neq k$, 
\[\{ J \in \InfSet{i} \mid J_k \preceq J\} \cap\{ J \in \InfSet{i} \mid J_j \preceq J\} = \emptyset \,.\] Hence, from the strategy $\beta_1, \ldots, \beta_m$, we build the strategy $\beta_{max}$ such that for all $J \in \InfSet{i}$, 
\begin{itemize}
\item for all $k \in \{1, \ldots, m\}$, if $J_k  \preceq J$ then $\beta_{max}(J) = \beta_k(J)$
\item if $I \preceq J$ and $J_k \not\preceq J$  for all $k \in \{1, \ldots, m\}$, then $\beta_{max}(J) = \beta_1(J)$
\end{itemize}
As all $\beta_k$, and in particular $\beta_1$, satisfy $\confP{i}{\beta_1}{h}=1$, we have by construction that $\confP{i}{\beta_{max}}{h}=1$.

Moreover, by hypothesis, we know that for all $\beta',\beta'' \in \Sigma_i^{I\preceq}$, for all $v \in \Node$,
if the condition 
$\confP{i}{\beta'}{v} = \confP{i}{\beta''}{v}$ holds we will have $g(\beta',v) = g(\beta'',v)$. Hence, for all $k \in \{1, \ldots, m\}$, for all $v \in \Leaves$ such that $\confIn{J_k}{v} = 1$, we have $g(\beta_k,v) = g(\beta_{max},v)$. Therefore, for all $k \in \{1, \ldots, m\}$,
\[
\sum_{v \in \Leaves \mid \confIn{J_k}{v} = 1} g(\beta_{max},v)
= \theta^*(\bestpayoff{i}{J_k})
\]
Therefore, with \Cref{eq:proof-split g beta,eq:max F and G}, we obtain:
\[
\sum_{v \in \Leaves\, \mid\, h \preceq \histP{v}{i}} g(\beta_{\max},v) = \sum_{v \in \Leaves\, \mid\, \histP{v}{i} = h} f(v) + \sum_{k=1}^m \, \theta^*(\bestpayoff{i}{J_k})
\]
Now, by definition of $\theta^*$, take $\beta_{max}' \in \Sigma_i^{I\preceq}$ such that $\confP{i}{\beta_{max}'}{h}=1$ and 
\[
\theta^*(\bestpayoff{i}{h}) = \sum_{v \in \Leaves \mid h \preceq \histP{v}{i}} g(\beta_{max}',v)\,.
\]
We directly have that $\theta^*(\bestpayoff{i}{h}) \geq \sum_{v \in \Leaves \mid h \preceq \histP{v}{i}} g(\beta_{\max},v)$. Moreover, as $h = \histP{J_k}{i}$ for all $k \in \{1, \ldots, m\}$ and $\confP{i}{\beta_{max}'}{h}=1$, we deduce that
\begin{align*}
 \theta^*(\bestpayoff{i}{h}) & = \sum_{v \in \Leaves \mid \histP{v}{i} = h} f(v) + \sum_{k=1}^m \,\sum_{v \in \Leaves\,| \confIn{J_k}{v} = 1}  g(\beta_{max}',v)\\
 &\leq \sum_{v \in \Leaves \mid \histP{v}{i} = h} f(v) + \sum_{k=1}^m \,\max_{\beta \in \Sigma_i^{I\preceq}\,\mid\,\confP{i}{\beta}{\histP{J_k}{i}}=1}\, \sum_{v \in \Leaves\,| \confIn{J_k}{v} = 1}  g(\beta,v)\\
  &\leq \sum_{v \in \Leaves \mid \histP{v}{i} = h} f(v) + \sum_{k=1}^m \,\theta^*(\bestpayoff{i}{J_k})\\
 &\leq \sum_{v \in \Leaves \mid h \preceq \histP{v}{i}} g(\beta_{\max},v)
\end{align*}
This allows us to deduce that:
\begin{equation}
\theta^*(\bestpayoff{i}{h}) = \sum_{v \in \Leaves \mid \histP{v}{i} = h} f(v) + \sum_{k=1}^m \,\theta^*(\bestpayoff{i}{J_k})
\label{eq:proof-best deviation payoff-sum}
\end{equation}
This equation allow us to show that $\theta^*$ satisfies the linear constraints \eqref{eq:generic-bestpayoff-h}.

\medskip

Let us now focus on the linear constraint \eqref{eq:generic-bestpayoff-I}. Let $I' \in \InfSet{i}$ such that $I \preceq I'$ and let $a' \in \actions{I'}$.
 We start by noticing that for all $\beta \in \Sigma_i^{I \preceq}$, if $\confP{i}{\beta}{\histP{I'}{i}}=1$ then there exists $a' \in \actions{I'}$ such that $\confP{i}{\beta}{\histP{I'}{i} \cdot (I',a')}=1$. Therefore, 
\[
\theta^*(\bestpayoff{i}{I'})
=
\max_{a' \in \actions{I'}} \max_{\beta \in \Sigma_i^{I\preceq}\,\mid\,\confP{i}{\beta}{\histP{I'}{i} \cdot (I',a')}=1}\, \sum_{v \in \Leaves \mid \confIn{I'}{v}=1} g(\beta,v)
\]
However, when $\confP{i}{\beta}{\histP{I'}{i} \cdot (I',a')}=1$, by hypothesis on the function $g$, amongst the leaves $v \in \Leaves$ such that $\confIn{I'}{v}=1$, only the ones where $\histP{I'}{i} \cdot (I',a')$ is a prefix of $\histP{v}{i}$ actually participate in the sum. Therefore, we have
\begin{align}
\theta^*(\bestpayoff{i}{I'}) &
=
\max_{a' \in \actions{I'}} \max_{\beta \in \Sigma_i^{I\preceq}\,\mid\,\confP{i}{\beta}{\histP{I'}{i} \cdot (I',a')}=1}\, \sum_{v \in \Leaves \mid \histP{I'}{i} \cdot (I',a') \preceq \histP{v}{i}} g(\beta,v)\notag\\
&= \max_{a' \in \actions{I'}} \theta^*(\bestpayoff{i}{\histP{I'}{i} \cdot (I',a')}) \label{eq:proof-best deviation payoff-relaxation}
\end{align}
This equation allow us to show that $\theta^*$ satisfies the linear constraints \eqref{eq:generic-bestpayoff-I}. We can therefore conclude that $\theta^*$ is solution of \Cref{eq:generic-bestpayoff-I,eq:generic-bestpayoff-h}.

\medskip

The proof of the second proposition is simply done by backwards induction on the tree using \Cref{eq:proof-best deviation payoff-sum,eq:proof-best deviation payoff-relaxation}. In particular, in the base case, we will have $\theta^*(\bestpayoff{i}{h}) = \theta(\bestpayoff{i}{h})$. Then, as $\theta(\bestpayoff{i}{I'}) \geq  \theta(\bestpayoff{i}{\histP{I'}{i}\cdot(I',a')})$ for all $a' \in \actions{I'}$, we deduce from \Cref{eq:proof-best deviation payoff-relaxation} and our inductive hypothesis that:
\[
\theta(\bestpayoff{i}{I'}) \geq \max_{a' \in \actions{I'}} \theta(\bestpayoff{i}{\histP{I'}{i}\cdot(I',a')}) \geq \max_{a' \in \actions{I'}} \theta^*(\bestpayoff{i}{\histP{I'}{i}\cdot(I',a')}) = \theta^*(\bestpayoff{i}{I'})
\]
Finally, as $\theta(\bestpayoff{i}{h}) = \sum_{v \in \Leaves \mid \histP{v}{i} = h} f(v) 
+ \sum_{J \in \InfSet{i} \mid \histP{J}{i} = h} \theta(\bestpayoff{i}{J})$, we deduce from \Cref{eq:proof-best deviation payoff-sum} and our inductive hypothesis that:
\[
\theta(\bestpayoff{i}{I'}) \geq \sum_{v \in \Leaves \mid \histP{v}{i} = h} f(v) 
+ \sum_{J \in \InfSet{i} \mid \histP{J}{i} = h} \theta^*(\bestpayoff{i}{J})\geq \theta^*(\bestpayoff{i}{I'})\qedhere
\]
\end{proof}


\section{The \textsc{Threshold} Problem for \AFCE and \AFCCE is \NP-hard even for 2-Player Games without Chance Node}
\label{app-AFCENPhard}

\begin{figure}[ht]
\begin{center}
\newcommand{\gadgetAFCE}[3]{
    \node[term,label=below:{\scriptsize $#2$},anchor=center,below left = 0.7cm and 0.4cm of #1.center] (v1) {};
\node[term,label=below:{\scriptsize $#3$},anchor=center,below right = 0.7cm and 0.4cm of #1.center] (v2) {};
\draw[-latex] (#1) edge node[auto,swap] {$\top$} (v1);
\draw[-latex] (#1) edge node[auto] {$\bot$} (v2);
}
\scalebox{0.8}{
\begin{tikzpicture}
    \tikzset{clause/.style={rectangle, draw,fill=red!20}}
    \tikzset{var/.style={circle, draw,fill=cyan!15}}
    \tikzset{term/.style = {circle,draw,inner sep = 1.5,fill=black}}

\node[clause,anchor=center] (R) at (0,0) {$R$};

\coordinate[left = 4.55cm of R.center,anchor=center] (Tmid);

\node[term,label=below:{\scriptsize $\textcolor{red}{-1}/\textcolor{blue}{2}$},anchor=center,below left = 1cm and 4.5cm of R.center] (T) {};

\node[clause,below = 1cm of R.center,anchor=center] (C2) {$C_2$};
\node[clause,anchor=center,left = 2cm of C2.center] (C1) {$C_1$};
\node[clause,anchor=center,right = 2cm of C2.center] (C3) {$C_3$};

\node[right = 3cm of C3.center,anchor=west] (Clause1) {clause $C_1$: $x$};
\node[below = 0.5cm of Clause1.west,anchor=west] (Clause2) {clause $C_2$: $\bar{x} \vee y$};
\node[below = 0.5cm of Clause2.west,anchor=west] (Clause3) {clause $C_3$: $\bar{x} \vee \bar{y}$};
\node[below = 0.5cm of Clause3.west,anchor=west] (Formula) {formula $\varphi = C_1 \wedge C_2 \wedge C_3$};

\draw (R) -- (Tmid);
\path (Tmid) edge[-latex]  node[auto,swap] {$\mathtt{end}$} (T);
\draw[-latex] (R) edge node[auto,swap] {$c_1$} (C1) ;
\draw[-latex] (R) edge node[auto,swap] {$c_2$} (C2) ;
\draw[-latex] (R) edge node[auto]  {$c_3$} (C3) ;

\node[var,below left = 1.5cm and 2cm of C1.center,anchor=center,inner sep=2pt] (X1) {$N_{x,1}$};
\node[var,below left = 1.5cm and 0cm of C1.center,anchor=center,inner sep=2pt] (X2) {$N_{\bar{x},2}$};
\node[var,below left = 1.5cm and -2cm of C1.center,anchor=center,inner sep=2pt] (X3) {$N_{\bar{x},3}$};

\node[var,below right = 1.5cm and 2cm of C2.center,anchor=center,inner sep=2pt] (Y1) {$N_{y,1}$};
\node[var,below right = 1.5cm and 4cm of C2.center,anchor=center,inner sep=2pt] (Y2) {$N_{\bar{y},2}$};

\draw[dashed,thick, blue] (X1) -- (X2);
\draw[dashed,thick, blue] (X2) -- (X3);
\draw[dashed,thick, blue] (Y1) -- (Y2);

\draw[-latex] (C1) edge node[auto,swap,pos=0.3,inner sep=1pt] {$x\vphantom{\overline{y}}$} (X1);
\draw[-latex] (C2) edge node[auto,swap,pos=0.3,inner sep=1pt] {$\bar{x}\vphantom{\overline{y}}$} (X2);
\draw[-latex] (C3) edge node[auto,swap,pos=0.3,inner sep=1pt] {$\bar{x}\vphantom{\overline{y}}$} (X3);

\draw[-latex] (C2) edge node[auto,pos=0.3,inner sep=1pt] {$y\vphantom{\overline{y}}$} (Y1);
\draw[-latex] (C3) edge node[auto,pos=0.3,inner sep=1pt] {$\overline{y}$} (Y2);

\gadgetAFCE{X1}{\textcolor{red}{-1},\textcolor{blue}{2}}{\textcolor{red}{0},\textcolor{blue}{0}}

\gadgetAFCE{X2}{\textcolor{red}{0},\textcolor{blue}{0}}{\textcolor{red}{-1},\textcolor{blue}{2}}

\gadgetAFCE{X3}{\textcolor{red}{0},\textcolor{blue}{0}}{\textcolor{red}{-1},\textcolor{blue}{2}}

\gadgetAFCE{Y1}{\textcolor{red}{-1},\textcolor{blue}{2}}{\textcolor{red}{0},\textcolor{blue}{0}}

\gadgetAFCE{Y2}{\textcolor{red}{0},\textcolor{blue}{0}}{\textcolor{red}{-1},\textcolor{blue}{2}}

\end{tikzpicture}
}
\end{center}
\caption{Schematic view of the game~$G_\varphi$ constructed in the {\NP}-hardness reduction from the formula~$\varphi =  \protect\underbrace{x}_{\text{clause } c_1} \bigwedge \protect\underbrace{(\, \bar{x} \, \vee \, y )}_{\text{clause } c_2} \bigwedge \protect\underbrace{(\, \bar{x} \, \vee \, \bar{y} )}_{\text{clause } c_3}$
}
\label{fig:AFCE hardness2}
\end{figure}

In this section, we present the details of our reduction from 3-SAT to the \textsc{Threshold} problem for \AFCE. 
Consider $\phi$, a 3-SAT formula over set of variables $X = \{x_1,\dots,x_n\}$ with $m$ clauses $c_1,\dots,c_m$ where $c_k =  (\ell_{k,1} \vee \ell_{k,2} \vee \ell_{k,3})$, i.e. 
\[
\phi = \bigwedge_{k=1}^m (\ell_{k,1} \vee \ell_{k,2} \vee \ell_{k,3})
\]

The game involves two players: the $\varphi$-player (shown in blue in \Cref{fig:AFCE hardness2}), who aims to satisfy the formula~$\varphi$, and a \emph{spoiler} player $S$ (shown in red), whose goal is to falsify it. We identify the spoiler as player $1$ and the $\phi$-player as player 2.

The game has a root node $R$ that belongs to spoiler player $S$. At node $R$, there are $m+1$ actions, $\actions{R} = \{\mathtt{end}\} \cup \{c_i\}_{i=1}^m$. Playing $\mathtt{end}$ at $R$ leads to a terminal node with payoffs $(-1,2)$ where $-1$ is the payoff for the spoiler and $2$ is the payoff for the $\varphi$-player. Otherwise for all $i \in \{1, \ldots, m\}$,  playing action $c_i$ leads to a node $C_i$ of player $S$, i.e. $R \xrightarrow{c_i} C_i$. Player $S$ has perfect information. At node $C_i$, player $S$ can choose a literal $\ell$ from the clause $c_i$, i.e. $\actions{C_i} = \{\ell \mid \ell \in c_i\}$. Playing $\ell$ goes to a node $N_{\ell,i}$ of $\varphi$-player. All nodes of $\phi$ with literals corresponding to the same variable $x \in X$ are in one information set $I_x$. Finally, at information set $I_x$, $\phi$ player has actions $\top$ or $\bot$, i.e. $\forall x \in X, \actions{I_x} = \{\top,\bot\}$.
From $N_{\ell,i}$, $\top$ or $\bot$ leads to a terminal node with payoff $(-1,2)$ if the literal is satisfied by the action; otherwise the payoff is $(0,0)$.

Any pure strategy profile $\bsigma$ gives a unique assignment $\theta_{\bsigma}$ given by $\theta_{\bsigma}(x) = \bsigma(I_x)$. Note that, as there are no chance nodes, each strategy profile $\bsigma$ leads to a unique terminal node from the root node $R$. 

\begin{lemma}
\label{lem:pure AFCE}
    If $\phi$ is satisfiable, there is a pure \AFCE(\AFCCE) with expected social value $1$ in $G_\phi$.
\end{lemma}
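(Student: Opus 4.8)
The plan is to take a satisfying assignment $\theta$ of $\phi$ and exhibit the explicit correlation plan $\CorrPlan_\theta$ described right before the statement (in the proof sketch of \Cref{prop:AFCE}), then verify that it is a pure \AFCE (hence also a pure \AFCCE) whose expected social welfare equals $1$. First I would fix a satisfying assignment $\theta : X \to \{\top,\bot\}$, and for each clause $c_k$ fix a literal $\ell^{(k)} \in c_k$ that is made true by $\theta$ (such a literal exists by satisfiability). Then I would define the single pure strategy profile $\bsigma_\theta = (\sigma_1, \sigma_2)$ by setting: $\sigma_1(R) = \mathtt{end}$, $\sigma_1(C_k) = \ell^{(k)}$ for each $k \in \{1,\dots,m\}$, and $\sigma_2(I_x) = \theta(x)$ for each $x \in X$; the plan is to let $\CorrPlan_\theta = \delta_{\bsigma_\theta}$ be the Dirac distribution on this profile. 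Since $\sigma_1(R) = \mathtt{end}$, the play reaches the terminal node with payoff $(-1,2)$, so the expected social welfare of $\CorrPlan_\theta$ is $-1 + 2 = 1$, as required.

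Next I would check the incentive constraints for \AFCE. The key observation is that under an \AFCE the deviating player may only change the action at a \emph{single} information set and must follow the recommendation everywhere else. For the $\varphi$-player (player $2$): under $\bsigma_\theta$ no node $N_{\ell,i}$ is ever reached (the spoiler plays $\mathtt{end}$), so every information set $I_x$ of player $2$ is off-path; after any single-information-set deviation the game still terminates at the $(-1,2)$ leaf because the spoiler still plays $\mathtt{end}$ at $R$, so player $2$'s payoff stays at $2$, its maximal value. Hence player $2$ has no profitable deviation. For the spoiler (player $1$): the only reachable information set under $\bsigma_\theta$ is $R$, so the only relevant single-information-set deviations are $\sigma_1[\{R \mapsto c_k\}]$ for some clause $c_k$; after this deviation the spoiler must still follow the recommendation at $C_k$, i.e. play $\ell^{(k)}$, reaching $N_{\ell^{(k)},k}$; then player $2$ follows its recommendation $\sigma_2(I_{x})=\theta(x)$ at the information set containing $N_{\ell^{(k)},k}$, where $x$ is the variable of $\ell^{(k)}$. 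Since $\ell^{(k)}$ is satisfied by $\theta$ by construction, the chosen action makes the literal true, so the game reaches a $(-1,2)$ leaf and the spoiler's payoff is again $-1$. Thus no deviation $\{R \mapsto c_k\}$ improves the spoiler's payoff, and by the \AFCE definition (and \Cref{eq:AFCE}, \Cref{eq:conditional corr I a} and the surrounding formalism) $\CorrPlan_\theta$ is an \AFCE.

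Finally, since $\bsigma_\theta$ is a single pure profile, $\CorrPlan_\theta$ is a pure equilibrium. I would also note that the \AFCCE incentive constraints are weaker than the \AFCE ones in the sense that any \AFCE on a support of pure profiles is in particular an \AFCCE here: the coarse condition only requires that \emph{unconditional} deviations are unprofitable, which is a special case of the verification above (the spoiler's only on-path information set is $R$, so the conditional and unconditional computations coincide), so the same $\CorrPlan_\theta$ witnesses a pure \AFCCE with social welfare $1$. This would complete the proof. The main obstacle I anticipate is purely bookkeeping: carefully unwinding the conditional-probability definitions $\CondEFCE{\CorrPlan}{I}{a}{\bsigma}{v}$ and $\CondEFCEDeviation{\CorrPlan}{I}{a}{\beta}{\bsigma}{v}$ for the Dirac plan to confirm that all off-path information sets impose no constraint, and double-checking the delicate point that after a deviation at $R$ the spoiler is \emph{forced} to follow the recommended literal $\ell^{(k)}$ (this is exactly where satisfiability of $\phi$ is used, and it is the crux of why the reduction works).
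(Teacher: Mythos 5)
Your proposal is correct and follows essentially the same route as the paper's proof: the same Dirac plan on the profile $(\mathtt{end},\,\ell^{(k)}\text{ at }C_k,\,\theta(x)\text{ at }I_x)$, the same observation that the only deviation that matters is the spoiler's at $R$, and the same use of satisfiability to force every post-deviation path to a $(-1,2)$ leaf. The extra bookkeeping you flag (off-path information sets imposing vacuous constraints) is handled implicitly in the paper and your treatment of it is consistent with the formal definitions.
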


\begin{proof}
    Let $\theta$ be any satisfying assignment of $\phi$. The following pure strategy profile $\bsigma$, is an AFCE: $\bsigma(R) = \mathtt{end}$; for all $i \in \{1, \ldots, m\}$, $\bsigma(C_i) = \ell \in c_i$ such that $\theta \models \ell$; and for all $x \in X$, $\bsigma(I_x) = \theta(x)$. The only deviation that matters is at node $R$ by definition of \AFCE. But path conforming to $\bsigma$ from each node $C_i$ leads to node with payoff $(-1,2)$ which is not profitable for $S$ player. Hence this is an AFCE. Since an AFCE is also an AFCCE, this is also an AFCCE.
\end{proof}

\begin{lemma}
\label{lem:no satisfiable AFCE}
    If $\phi$ is not satisfiable, then there are no \AFCE(\AFCCE) with social welfare $1$ in $G_\phi$.
\end{lemma}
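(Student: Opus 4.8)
The plan is to argue by contradiction, exactly mirroring the conversational sketch given just before the statement of \Cref{prop:AFCE}. Suppose $\phi$ is not satisfiable and suppose, for contradiction, that there exists an \AFCE (resp.\ \AFCCE) correlation plan $\correp$ in $G_\phi$ with social welfare $\ExpOmega{\correp}{\omega} = 1$. First I would observe, from the description of $G_\phi$, that the social welfare at every leaf is at most $1$: the leaf reached by $\mathtt{end}$ has payoff $(-1,2)$ with sum $1$; a leaf $N_{\ell,i}$ reached by a matching action has payoff $(-1,2)$ with sum $1$; and a leaf reached by a non-matching action has payoff $(0,0)$ with sum $0$. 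Hence $\ExpOmega{\correp}{\omega}=1$ forces every pure strategy profile $\bsigma$ in $\supp(\correp)$ to reach a leaf of social welfare exactly $1$, i.e.\ a leaf with payoff $(-1,2)$. In particular the spoiler's expected payoff under $\correp$ is exactly $-1$.

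Next I would extract the contradiction from unsatisfiability. Each $\bsigma \in \supp(\correp)$ induces a complete assignment $\theta_\bsigma$ via $\theta_\bsigma(x) = \sigma_2(I_x)$. Since $\phi$ is not satisfiable, $\theta_\bsigma \not\models \phi$, so there is some clause $c_{i(\bsigma)}$ with $\theta_\bsigma \not\models c_{i(\bsigma)}$; that is, for every literal $\ell \in c_{i(\bsigma)}$, the action $\theta_\bsigma$ picks at the corresponding information set $I_x$ does \emph{not} satisfy $\ell$. I would then consider the spoiler deviation $\beta = \{ R \mapsto c_{i(\bsigma)} \}$ performed unconditionally at the root — this is a legal deviation both under \AFCE (the spoiler may deviate at any single information set, here $R$) and under \AFCCE (the spoiler may commit before receiving any recommendation). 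Under $\bsigma[\beta]$ the play goes from $R$ to $C_{i(\bsigma)}$, then follows $\sigma_1(C_{i(\bsigma)})$ to some node $N_{\ell,i(\bsigma)}$ with $\ell \in c_{i(\bsigma)}$, and then the $\phi$-player plays $\theta_\bsigma$, which by choice of $c_{i(\bsigma)}$ fails to satisfy $\ell$; hence the leaf has payoff $(0,0)$ and the spoiler gets $0 > -1$ there.

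The one technical point to handle carefully — and the main (if mild) obstacle — is that a single deviating action $\beta = \{R \mapsto c_{i}\}$ at the root is recommendation-independent only if it profits the spoiler \emph{on average over $\correp$}, not merely profile-by-profile. Here I would split the support according to the clause index: for each clause index $j$, the set of profiles $\bsigma \in \supp(\correp)$ with $\theta_\bsigma \not\models c_j$ is nonempty for \emph{some} $j$ (since the sets over all $j$ cover the whole support by unsatisfiability), and by pigeonhole there is a fixed clause $c_j$ whose "bad" set $B_j$ has positive probability. Then the unconditional deviation $\beta = \{R \mapsto c_j\}$ yields the spoiler a payoff of $0$ on every profile in $B_j$ and at worst $-1$ on the rest, so its deviation payoff is strictly greater than $-1$, contradicting the \AFCE (resp.\ \AFCCE) inequality $\ExpEFCCE{1}{\correp}{\{R\}} \geq \ExpEFCCEDeviation{1}{\correp}{\{R\}}{\beta}$ at the root information set $\{R\}$ (note $\ExpEFCCE{1}{\correp}{\{R\}} = -1$ since $R$ is reached with probability $1$ and the honest payoff is $-1$). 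This contradiction shows no such $\correp$ exists. Finally, since every \AFCE is an \AFCCE, any other \AFCCE plan must have a profile reaching a $(0,0)$ leaf or else a profile with the spoiler deviation argument applies, so no \AFCCE can have social welfare $1$ either; combined with \Cref{lem:pure AFCE} this gives the equivalence used to prove \Cref{prop:AFCE}.
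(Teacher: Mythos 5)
Your proposal is correct and follows essentially the same argument as the paper: use unsatisfiability to find, for a support profile, a clause its induced assignment violates, and show that the spoiler's unconditional root deviation to that clause strictly improves its expected payoff above $-1$, handling \AFCE via the inclusion into \AFCCE. The only difference is cosmetic: your pigeonhole step over clauses is unnecessary, since (as the paper does) a single support profile $\bsigma$ with $\CorrPlan(\bsigma)>0$ already makes the deviation to its violated clause worth at least $-1+\CorrPlan(\bsigma)>-1$.
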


\begin{proof}
We show that there is no \AFCCE with social welfare $1$. Since any \AFCE is also an \AFCCE, this will also imply that there is no \AFCE with social welfare $1$.

Let $\CorrPlan{}$ be an \AFCCE and $\bsigma$ be a strategy profile in the support of $\correp$. 
As $\phi$ is not satisfiable, we have $\theta_\bsigma \not \models \phi$, meaning that there is a clause $c_i$ such that $\theta_\bsigma \not \models c_i$. Therefore, in $\bsigma$, the node $C_i$ leads to a leaf with payoff $(0,0)$. Hence, $\expect_{\mu[R \mapsto C_i]}(U_1) \geq -1 + \CorrPlan(\bsigma) > -1$. 
As $\mu$ is an \AFCCE, we have $\expect_{\mu}(U_1) \geq \expect_{\mu[R \mapsto C_i]}(U_1) > -1$, and as social welfare at any leaf is $-u_1$, it follows that $\expect_{\mu}(\omega) < 1$.

\end{proof}

\begin{corollary}
\label{cor:AFCE}
$G_\phi$ has a pure \AFCE (\AFCCE) if and only if $\phi$ is satisfiable. 
\end{corollary}

\begin{proof}
The left implication holds by applying \Cref{lem:pure AFCE}. For the right implication, consider a pure \AFCE or \AFCCE $\bsigma$. By construction, its social welfare is either $0$ or $1$. In the latter case, we conclude by \Cref{lem:no satisfiable AFCE}. The former case is in fact in contradiction with $\bsigma$ being an \AFCE or \AFCCE as the $\varphi$-player would deviate to obtain a payoff of $2$.
\end{proof}

\AFCEhard*

\begin{proof}
Direct by application of \Cref{cor:AFCE} and \Cref{lem:no satisfiable AFCE,lem:pure AFCE}.
\end{proof}
\section{Complexity of \textsc{Threshold} and \textsc{Any} for Nash}

\paragraph{Mixed and Behavioral Strategies}

A \emph{mixed strategy for player~$i$}  is a probability distribution over the pure strategies of player~$i$ with $\Delta(\Sigma_i)$ being the set of all such mixed strategies. 
A \emph{mixed strategy profile} is a tuple $\boldsymbol{\alpha} = (\alpha_1,\ldots,\alpha_n) \in \Delta(\Sigma_1) \times \ldots \times \Delta(\Sigma_n)$.
A \emph{behavioral strategy profile $\bprofile$} is a function $\tau : \InfSets \mapsto \Delta(\Actions)$ such that $\bprofile(I)(a) > 0 \Rightarrow a \in \actions{I}$.
A \emph{behavioral strategy for player~$i$} $\tau_i$ is the restriction of a behavioral strategy profile to $\InfSet{i}$.
We often write $\tau(a)$ instead of $\tau(I)(a)$.

A mixed strategy $\alpha_i \in \Delta(\Sigma_i)$ and a behavioral strategy $\tau_i$ of player $i$ of any kind (mixed or behavioral) are said to be \emph{realization equivalent} if for all leaves $v \in \Leaves$, the probability of reaching $v$ are same under the two strategies assuming all other actions are played on the path from root to $v$, that is
\[
\prod_{(I,a) \in \histP{v}{i}} \tau_i(I)(a) = \sum_{\sigma \in \Sigma_i} \alpha_i(\sigma) \, \confP{i}{\sigma}{\node}
\]

\begin{theorem}[Kuhn~{\cite{Kuhn+1953+193+216}}]
\label{thm:Kuhn}
In games with perfect recall, every mixed strategy of a player has a realization equivalent behavioral strategy (and vice versa).
\end{theorem}

A Nash equilibrium is a mixed  strategy profile $\boldsymbol{\alpha} = (\alpha_1,\ldots, \alpha_n)$ in which no player $i \in \{1, \ldots, n\}$ can achieve a higher expected payoff by deviating from their strategy $\alpha_i$, assuming all the other players play their respective strategies in the profile. Formally, $\boldsymbol{\alpha}$ is a Nash equilibrium when
\begin{equation}
\forall i \in \{1, \ldots, n\}, \forall \tau \in \Delta(\Sigma_i), \ExpBasic{i}{\boldsymbol{\alpha}} \geq \ExpBasic{i}{\boldsymbol{\alpha}[\tau]}
\tag{\NashE}\label{eq:NE}
\end{equation}
where $\boldsymbol{\alpha}[\tau] = (\alpha,\ldots, \alpha_{i-1},\tau,\alpha_{i+1},\ldots, \alpha_n)$.


As a result of \Cref{thm:Kuhn}, Nash equilibrium \eqref{eq:NE} can be redefined in term of behavioral strategy profile $(\tau_1,\ldots, \tau_n)$ without losing expressiveness as follows: for all players $i \in \{1, \ldots, n\}$ and all behavioral strategies $\tau'_i$ for player $i$, 
\[
\ExpP{i}{(\tau_1,\ldots, \tau_n)} \geq \ExpP{i}{(\tau_1,\ldots, \tau_{i-1},\tau'_i,\tau_{i+1},\ldots, \tau_n)} 
\]

\paragraph{Probability space for behavioral profiles}
\label{subsec:behavioralproba}
In the study of behavioral strategies and profiles, we consider the probability space on the leaves $\Leaves$. If $\bprofile$ is a behavioral strategy profile and $v_0$ the root node, we define the natural probability measure $\bproba$ as follows:
    \begin{align*}
        \proba_\bprofile(v_0)     & = 1                                 &  \\
        \proba_\bprofile(\node)      & = \proba_\bprofile(w) \cdot \bprofile(a)   & \text{if }w \xrightarrow{a} \node \text{ and } a\in\Actions\\
        \proba_\bprofile(\node)      & = \proba_\bprofile(w) \cdot x              & \text{if }w \xrightarrow{x} \node \text{ and } x \in \mathbb{Q}\\
    \end{align*}

We also define another measure, denoted $\proba_\bprofile^\node$, which represent the probability of reaching a node $w$ starting from the node $\node$: 

    \begin{align*}
        \proba_\bprofile^\node(\node)     & = 1                                 &  \\
        \proba_\bprofile^\node(u)      & = \proba_\bprofile^\node(w) \cdot \bprofile(a)   & \text{if }w \xrightarrow{a} u \text{ and } a\in\Actions\\
        \proba_\bprofile^\node(\node)      & = \proba_\bprofile^\node(w) \cdot x              & \text{if }w \xrightarrow{x} u \text{ and } x \in \mathbb{Q}\\
    \end{align*}

Note that for a leaf $\ell$, if $\proba_\bprofile(v) \neq 0, \proba_\bprofile^\node(\ell) = \proba_\bprofile(\ell \mid \node)$. However, $\proba_\bprofile^\node$ is well-defined even when $\proba_\bprofile(v) = 0$.

\subsection{The \textsc{Threshold} problem for Nash is in \texorpdfstring{$\exists\mathbb{R}$}{ETR}}
\label{sec:Nash ETR}

Given a threshold $\lambda \in \Rat$ and an objective function $\omega$, we translate the \textsc{Threshold} problem for Nash equilibria into a $\etr$ formula which will check the existence of a Nash equilibria with expected value of the objective function greater than $\lambda$.

Firstly, a behavioral strategy of each player $i$ can be expressed using variables $x_{I,a}$ for each action $a \in \actions{I}$ in an information set $I \in \InfSet{i}$ by satisfying the following constraints:
\begin{align}
\sum_{a \in \actions{I}} x_{I,a} = 1 &\qquad \forall i \in \{1, \ldots, n\}, \forall I \in \InfSet{i}\label{eq:Nash-behavioral-sum}\\
x_{I,a} \geq 0&\qquad \forall i \in \{1, \ldots, n\}, \forall I \in \InfSet{i},\forall a \in \actions{I}\label{eq:Nash-behavioral-positive}
\end{align}
The expected payoff of player $i$ is can be expressed using the variable $U_i$ by following the constraints:
\begin{align}
U_i = \sum_{\node \in \Leaves} \payoffP{\node}{i}\, \Reach{\node}\, \prod_{i=1}^n\, \prod_{(I,a) \in \histP{\node}{i}} x_{I,a}&\qquad \forall i \in \{1, \ldots, n\}\label{eq:Nash-expected payoff}
\end{align}

Given a behavioral strategy profile, the probability of reaching a leaf $v$ following the behavioral strategies of all player except $i \in \{1, \ldots, n\}$, denoted $p_i(v)$, is
\[
	p_i(v) =  \Reach{v}\, \prod_{j \in \{1, \ldots, n\}\, \mid\, j\neq i} \, \prod_{(I,a) \in \histP{v}{j}} x_{I,a}
\]
This allows us to characterize the best response for player $i$ to the strategy profile $\tau$ through a set of linear inequalities. To that purpose, we introduce some new variables $\bestpayoff{i}{I}$ (resp. $\bestpayoff{i}{(I,a)}$) representing the \emph{best payoff of player $i$ reaching $I \in \InfSet{i}$ (resp. and playing $a \in \actions{I}$)}. Note that in $\bestpayoff{i}{(I,a)}$, $(I,a)$ is the singleton history. We will also consider the variable $\bestpayoff{i}{\varepsilon}$ corresponding to the \emph{best payoff of player $i$ at the start of the game}.
As in our procedure for previous notion of equilibrium, theses variables can be computed bottom-up on the game tree via the following constraints for all players $i \in \{1, \ldots, n\}$, for all information sets $I \in \InfSet{i}$, for all histories $h \in \{ h' \preceq \histP{v}{i} \mid v \in \Leaves\}$,
\begin{align}
\bestpayoff{i}{h} &= \sum_{v \in \Leaves\, \mid\, \histP{v}{i} = h} \payoffP{v}{i}\, p_i(v) + \sum_{J \in \InfSet{i}\, \mid\,\histP{J}{i} = h} \bestpayoff{i}{J}\label{eq:Nash-best payoff I a}\\
\bestpayoff{i}{I} &\geq \bestpayoff{i}{\histP{I}{i}\cdot(I,a)}&\forall a \in \actions{I}\label{eq:Nash-best payoff I}
\end{align}
The incentive constraint naturally corresponds to $U_i$ being greater than $\bestpayoff{i}{\varepsilon}$, as given below.
\begin{equation}
  U_i \geq \bestpayoff{i}{\varepsilon} \qquad \forall i \in \{1, \ldots, n\} \label{eq:Nash-incentive constraints}
\end{equation}
Finally ensuring that the expected value of the objective function is at least the threshold $\lambda$ is given by the following constraint:
\begin{align}
	\omega(U_1,\ldots,U_n) \geq \lambda\label{eq:Nash-threshold}
\end{align}

\NashExistsRCom*

\begin{proof}[Sketch proof]
Consider the system $S$ of constraints formed by \Cref{eq:Nash-behavioral-sum,eq:Nash-behavioral-positive,eq:Nash-expected payoff,eq:Nash-best payoff I a,eq:Nash-best payoff I,eq:Nash-incentive constraints,eq:Nash-threshold}. Notice that $S$ is an this system has polynomially many constraints and variables. The proof follows by proving that $S$ admits a solution if and only if there exists a Nash equilibrium with expected value of objective function at least $\lambda$. 
The contraints in \cref{eq:Nash-best payoff I,eq:Nash-best payoff I a} are local optimality constraints similar to \cref{eq:EFCE-relaxation,eq:EFCE-best deviation payoff} and the proof follows analogous arguments.
\end{proof}

\subsection{The {\sc Any} problem for Nash is in \fixp}
\label{sec:NashFIXP}

Here we will provide the proof for \Cref{thm:AnyNashisFixP-complete}.

\AnyNashisFixPcomplete*

We define few notations necessary for this proof. 
\subsubsection*{First and next information sets and leaves}

We define the \emph{initial information sets of player~$i$} and \emph{initial leaves of player~$i$} as follows:
   \begin{align*}
       \texttt{Fst}_i &= \{I \in \mathcal{I}_i \mid \histP{I}{i} = \epsilon\}\enspace;\\
       \texttt{fst}_i &= \{\ell \in \Leaves \mid \histP{\ell}{i} = \epsilon\}\enspace.
   \end{align*}

Note that the initial sets of player~$i$ (leaves of player~$i$)
form a partition in the probability space. 
Furthermore, the probability of reaching one of these information sets or leaves is independent of player~$i$'s strategy. Likewise, we define the \emph{next information sets of player~$i$ after an action $\action \in \ActionsP{i}$} and \emph{next leaves of player~$i$ after an action $\action \in \ActionsP{i}$}  as follows:
    \begin{align*}
        \texttt{Nxt}_i(a) &= \{I \in \mathcal{I}_i \mid \histP{I}{i} \text{ ends with }\action\}\enspace;\\
        \texttt{nxt}_i(a) &= \{\ell \in \Leaves \mid \histP{\ell}{i} \text{ ends with }\action\}\enspace.
    \end{align*}

Again, the next sets and leaves of player~$i$ form a partition of the event ``$a$'', and the conditionnal probability of reaching one of these information sets or leaves from $a$ is independent of player~$i$'s strategy.
We say that an information set $I$ of player $i$ is \emph{reachable} under a partial behavioral strategy profile $\bprofile_{|-i}$ if there exists a strategy $\alpha$ such that $\mathbb{P}_{\bprofile_{|-i}[\alpha]}(I) > 0$.

Our next proposition states that the probability of reaching a specific vertex in an information set of a player $i$, conditional on reaching that information set, is independent of the strategy of player $i$. It is a direct consequence of the perfect recall hypothesis.

\begin{proposition}
    \label{prop:distribution}
    Let $\bprofile_{|-i}$ be a partial behavioral strategy profile excluding player~$i$ and $I$ be an information set of player~$i$ reachable under it. Then, for any two behavioral strategies $\alpha$ and $\beta$ for player~$i$, 
    $$\mathbb{P}_{\bprofile_{|-i}[\alpha]}(v \mid I) = \mathbb{P}_{\bprofile_{|-i}[\beta]}(v \mid I)$$

    \end{proposition}

\begin{proof}
    Let $\alpha$ be a strategy such that $\mathbb{P}_{\bprofile_{|-i}[\alpha]}(I) > 0$ and let $v$ be a vertex in $I$. We define $p_{\bprofile_{|-i}}(v)$ and $p_\alpha(I)$ as follows:
    \begin{align}
        p_{\bprofile_{|-i}}(v) & = \prod_{a \in \histP{v}{-i}}\bprofile_{|-i}(a)\notag\\
        p_{\alpha}(I) & = \prod_{a \in \histP{I}{i}}\alpha(a)\notag\\
    \intertext{By definition of $\proba_{\bprofile_{|-i}[\alpha]}$:}    
        \mathbb{P}_{\bprofile_{|-i}[\alpha]}(v) & =  \Reach{v} \cdot \prod_{a\in\hist{v}}\bprofile_{|-i}[\alpha](a)\notag\\
        \mathbb{P}_{\bprofile_{|-i}[\alpha]}(v) & =  \Reach{v} \cdot \prod_{a \in \histP{v}{-i}}\bprofile_{|-i}(a) \cdot \prod_{a \in \histP{v}{i}}\alpha(a)\notag\\
    \intertext{By the perfect recall hypothesis, we have $\histP{v}{i} = \histP{I}{i}$.}
        \mathbb{P}_{\bprofile_{|-i}[\alpha]}(v) & =  \Reach{v} \cdot p_{\bprofile_{|-i}}(v) \cdot \prod_{a \in \histP{I}{i}}\alpha(a)\notag\\
        \mathbb{P}_{\bprofile_{|-i}[\alpha]}(v) & =  \Reach{v} \cdot p_{\bprofile_{|-i}}(v) \cdot p_\alpha(I)\label{eqn:anynash1}\\
    \intertext{Summing over the vertices of $I$, we get:}
        \mathbb{P}_{\bprofile_{|-i}[\alpha]}(I) & = \sum_{w\in I}\big( \Reach{w} \cdot p_{\bprofile_{|-i}}(w) \cdot p_\alpha(I)\big)\notag\\
        \mathbb{P}_{\bprofile_{|-i}[\alpha]}(I) & = \bigg(\sum_{w\in I}\big( \Reach{w} \cdot p_{\bprofile_{|-i}}(w)\big)\bigg) \cdot  p_\alpha(I)\label{eqn:anynash2}\\
    \intertext{Since reaching $v$ implies reaching $I$, we have:}
        \mathbb{P}_{\bprofile_{|-i}[\alpha]}(v \mid I) & = \frac{\mathbb{P}_{\bprofile_{|-i},\alpha}(v)}{\mathbb{P}_{\bprofile_{|-i},\alpha}(I)}\notag\\
    \intertext{From (\ref{eqn:anynash1}) and (\ref{eqn:anynash2}), we get:}
        \mathbb{P}_{\bprofile_{|-i}[\alpha]}(v \mid I) & = \frac{\Reach{v} \cdot p_{\bprofile_{|-i}}(v) \cdot p_\alpha(I)}{\bigg(\displaystyle\sum_{w\in I}\big( \Reach{w}  \cdot p_{\bprofile_{|-i}}(w)\big)\bigg) \cdot  p_\alpha(I)}\notag\\
        \mathbb{P}_{\bprofile_{|-i}[\alpha]}(v \mid I) & = \frac{\Reach{v} \cdot p_{\bprofile_{|-i}}(v)}{\displaystyle\sum_{w\in I}\big(\Reach{w} \cdot p_{\bprofile_{|-i}}(w)\big)}\notag
    \end{align}

    Therefore, $\mathbb{P}_{\bprofile_{|-i},\alpha}(v \mid I)$ is independent from $\alpha$ and \Cref{prop:distribution} follows.
\end{proof}

\Cref{prop:distribution} allows us to define a probability distribution $\mathbb{D}_{\bprofile_{|-i}}^I$ over the vertices of $I$ such that, for any strategy $\alpha$ of player $i$:
    \begin{align*}
        \mathbb{D}_{\bprofile_{|-i}}^I(v) & = \mathbb{P}_{\bprofile_{|-i}[\alpha]}(v \mid I)\\
    \intertext{In turn, we define the \emph{probability measure of a profile $\bprofile$ starting from an information set $I$}:}
        \proba_\bprofile^I(\leaf) & = \sum_{v\in I} \left(\Distrib^{I}(v) \cdot \proba_\bprofile^v(\leaf)\right)
    \intertext{As well as the \emph{probability measure of a profile $\bprofile$ starting from an action $a$}:}
        \proba_\bprofile^a(\leaf) & = \sum_{v\in I} \left(\Distrib^{I}(v) \cdot \proba_\bprofile^{a(v)}(\leaf)\right)
    \end{align*}
Note that these probability measures are defined only when the information set $I$, or the information set from which $a$ is playable, is reachable under $\bprofile_{|-i}$.

\begin{remark}
\label{rem:frommid}
    If $\mathbb{P}_\bprofile(I) > 0$, we have 
    \begin{align*}
        \mathbb{E}_\bprofile^I(U_i) & = \mathbb{E}_\bprofile(U_i \mid I)\enspace;\\
        \intertext{likewise, if $\mathbb{P}_\bprofile(a) > 0$, we have:}\\
        \mathbb{E}_\bprofile^a(U_i) & = \mathbb{E}_\bprofile(U_i\mid a) \enspace.
    \end{align*}
\end{remark}

\label{def:delta}
    Let $\bprofile$ be a behavioral strategy profile and let $a\in \actions{I}$ be an action of player $i$. We define $\delta(\bprofile,a)$ as follows:
    \begin{align*}
        \delta(\bprofile,a) & = 0 \text{ if $I$ is not reachable under $\bprofile_{|-i}$}\\
        \delta(\bprofile,a) & = \mathbb{E}_{\bprofile}^a(U_i) - \mathbb{E}_\bprofile^I (U_i)
    \end{align*}

Next we show that the mean of the value $\delta$ of the actions playable from a given information set is $0$.

\propSumDelta*
\begin{proof}
    If $I$ is not reachable under $\bprofile_{|-i}$, then for all $a \in \actions{I}$, $\delta(\bprofile,a) = 0$ and Proposition~\ref{prop:sumdelta} ensues.
    Otherwise, we have:
        \begin{align*}
        \sum_{a \in \actions{I}} \bprofile(a) \cdot \delta (\bprofile,a)     & = \sum_{a \in \actions{I}} \bprofile(a) \cdot \delta (\bprofile,a)\\
                                                                    & = \sum_{a \in \actions{I}} \left(\bprofile(a) \cdot \big(\mathbb{E}_{\bprofile[a]}^I(U_i) - \mathbb{E}_\bprofile^I (U_i)\big)\right) \\
                                                                    & = \sum_{a \in \actions{I}} \left(\bprofile(a) \cdot \mathbb{E}_{\bprofile[a]}^I(U_i)\right) - \sum_{b \in \actions{I}} \left(\bprofile(b) \cdot \mathbb{E}_\bprofile^I (U_i)\right)\\
                                                                    & = \mathbb{E}_\bprofile^I(U_i) - 1 \cdot \mathbb{E}_\bprofile^I (U_i)\\
                                                                    & = 0
    \end{align*}
\end{proof}

Now we define the function, whose fixed point will determine Nash equilibria. For actions $a,b \in \Actions$, let $a \sim b$ denote, for some $I$, $a \in \actions{I}$ and $b \in \actions{I}$.
    We define the function $f$, from the set of behavioral strategy profiles $\bprofileSet$ to itself as:
    \begin{align*}
    f(\bprofile)(a)& = \frac{\bprofile(a)+\max\big(0, \delta(\bprofile,a)\big)}{1+\displaystyle\sum_{b|b \sim a}\max\big(0,\delta(\bprofile,b)\big)}\enspace,\\
\end{align*}

As the set of behavioral strategy profiles $\bprofileSet$ is compact and convex, it follows from Brouwer's fixed-point theorem~\cite{Brouwer1912} that $f$ has at least one fixed point.

\lemmaNegativeDelta*
\begin{proof}
    Assume, by contradiction, that $\delta(\bprofile,a) > 0$.
    \paragraph{Case 1: $\bprofile(a) = 0$}
    We have:
    \begin{align*}
        f(\bprofile)(a)   &= \frac{\bprofile(a)+\max\big(0, \delta(\bprofile,a)\big)}{1+\displaystyle\sum_{b \sim a}\max\big(0,\delta(\bprofile,b)\big)}\\
                    &= \frac{\delta(\bprofile,a)}{1+\displaystyle\sum_{b \sim a}\max\big(0,\delta(\bprofile,b)\big)}\\
                    &> 0\\
                    &\neq \bprofile(a)
    \end{align*}
    Therefore, $\bprofile$ is not a fixed-point of $f$.
    \paragraph{Case 2: $\bprofile(a) > 0$}
    By Proposition~\ref{prop:sumdelta}, we have:
    \begin{align*}
        \sum_{b \sim a} \bprofile(b) \cdot \delta (\bprofile,b) &= 0\enspace.
    \end{align*}
    Since $\bprofile(a) > 0$ and $\delta(\bprofile,a)>0$, there must exist $b \sim a$ such that $\bprofile(b) > 0$ and $\delta(\bprofile,b) < 0$. We have:
    \begin{align*}
     f(\bprofile)(b)    &= \frac{\bprofile(b)+\max\big(0, \delta(\bprofile,b)\big)}{1+\displaystyle\sum_{c \sim b}\max\big(0,\delta(\bprofile,c)\big)}\\
                        &= \frac{\bprofile(b)}{1+\displaystyle\sum_{c \sim b}\max\big(0,\delta(X,I,c)\big)}\\
                        &\leq \frac{\bprofile(b)}{1+\delta(\bprofile,a)}\\
                        &< \bprofile(b)\\
                        &\neq \bprofile(b)
    \end{align*}
    Therefore, $\bprofile$ is not a fixed-point of $f$.
\end{proof}

\begin{wrapfigure}[5]{r}{0.25\textwidth}
\begin{tikzpicture}
\tikzset{p0/.style={circle, draw,minimum size=15}}
\tikzset{payoff/.style={}}

\node[p0] (n0) at (2,2.6){};
\node[p0] (n1) at (3,1.8){};
\node[payoff] (n2) at (1,1.8){$1$};
\node[payoff] (n3) at (2,1){$1$};
\node[payoff] (n4) at (4,1){$0$};

\draw [-latex,dashed](n0) -- (n1);
\draw [-latex,thick](n0) -- (n2);
\draw [-latex,dashed](n1) -- (n3);
\draw [-latex,thick](n1) -- (n4);
\end{tikzpicture}
\end{wrapfigure}
We remark that, it is not true that every Nash equilibrium is necessarily a fix-point of $f$. In the single-player game depicted on the right, the thick lines constitute a Nash equilibrium, as the player has no reason to change their action in either vertex, but the only fixed-point of $f$ is the strategy where they play left on both vertices. But it is sufficient to show that every fixed-point is a Nash equilibrium.

\begin{proposition}
\label{prop:FixedPointIsNash}
    Let $\bprofile$ be a fixed-point of $f$. Then $\bprofile$ is a Nash equilibria.
\end{proposition}

\begin{proof}
    Let $\bprofile$ be a behavioral strategy profile which is not a Nash equilibrium, and $\alpha$ be a pure behavioural strategy for some player $i$ such that:
    \begin{align*}
        \mathbb{E}_{\bprofile[\alpha]} (U_i) & > \mathbb{E}_{\bprofile}(U_i)\\
    \intertext{As the Fst$_i$ and fst$_i$ form a partition of the plays, we get:} 
        \sum_{I\in\texttt{Fst}_i} \mathbb{P}_{\bprofile[\alpha]}(I)\cdot\mathbb{E}_{\bprofile[\alpha]}(U_i\mid I) + \sum_{\ell\in\texttt{fst}_i} \mathbb{P}_{\bprofile[\alpha]}(\ell)\cdot u_i(\ell) & > \sum_{I\in\texttt{Fst}_i} \mathbb{P}_\bprofile(I)\cdot\mathbb{E}_\bprofile(U_i\mid I) + \sum_{\ell\in\texttt{fst}_i} \mathbb{P}_\bprofile(\ell)\cdot u_i(\ell)\\
    \intertext{As $I$'s and $\ell$'s in the sums are initial for player $i$, we have $\mathbb{P}_{\bprofile[\alpha]}(I) = \mathbb{P}_{\bprofile}(I)$ and $\mathbb{P}_{\bprofile[\alpha]}(\ell) = \mathbb{P}_{\bprofile}(\ell)$.}
        \sum_{I\in\texttt{Fst}_i} \mathbb{P}_{\bprofile}(I)\cdot\mathbb{E}_{\bprofile[\alpha]}(U_i\mid I) + \sum_{\ell\in\texttt{fst}_i} \mathbb{P}_{\bprofile}(\ell)\cdot u_i(\ell) & > \sum_{I\in\texttt{Fst}_i} \mathbb{P}_\bprofile(I)\cdot\mathbb{E}_\bprofile(U_i\mid I) + \sum_{\ell\in\texttt{fst}_i} \mathbb{P}_\bprofile(\ell)\cdot u_i(\ell)\\
        \sum_{I\in\texttt{Fst}_i} \mathbb{P}_{\bprofile}(I)\cdot\mathbb{E}_{\bprofile[\alpha]}(X\mid I) & > \sum_{I\in\texttt{Fst}_i} \mathbb{P}_\bprofile(I)\cdot\mathbb{E}_\bprofile(U_i\mid I)\\
    \end{align*}

    Therefore, there must exist an information set $I$ for player $i$ such that:
    \begin{align*}
        \mathbb{P}_\bprofile(I) & > 0\\
        \mathbb{E}_{\bprofile[\alpha]}(U_i\mid I) & > \mathbb{E}_\bprofile(U_i\mid I)
    \end{align*}

    By Remark~\ref{rem:frommid}, we have :
    \begin{align*}
        \mathbb{E}^I_{\bprofile[\alpha]}(U_i) & > \mathbb{E}_\bprofile^I(U_i)
    \end{align*}

Let $J$ be an information set of player $i$ such that:
\begin{align}
    \mathbb{E}^J_{\bprofile[\alpha]}(U_i) & > \mathbb{E}_\bprofile^J(U_i)\label{defJ1}\\
    \intertext{and, for any action $a \in \actions{J}$, for any information set $K$ in $\texttt{Nxt}_i(a)$:}
    \mathbb{E}^K_{\bprofile[\alpha]}(U_i) & \leq \mathbb{E}_\bprofile^K(U_i)\label{defJ2}\\
    \intertext{By (\ref{defJ1}), we have:}
    \mathbb{E}^J_{\bprofile[\alpha]}(U_i)   & > \mathbb{E}_\bprofile^J(U_i)\notag\\
    \intertext{Extending the expectation over the possible actions playable in $J$, we get:}
    \sum_{a \in \actions{J}} \bprofile[\alpha](a) \cdot \mathbb{E}^{a}_{\bprofile[\alpha]} (U_i) & > \mathbb{E}_\bprofile^J(U_i)\notag\\
    \intertext{As $J$ belongs to player~$i$:}
    \sum_{a \in \actions{J}} \alpha(a) \cdot \mathbb{E}^{a}_{\bprofile[\alpha]} (U_i) & > \mathbb{E}_\bprofile^J(U_i)\notag\\
    \intertext{Extending the expectations over the next information set of player~$i$ reached after $a$, we get:}
    \sum_{a \in \actions{J}} \alpha(a) \cdot \Bigg(\sum_{K\in\texttt{Nxt}_i^a(J)}\proba^a_{\tau[\alpha]} (K) \cdot \mathbb{E}^{K}_{\bprofile[\alpha]} (U_i)\Bigg) & > \mathbb{E}_\bprofile^J(U_i)\notag\\
    \intertext{By (\ref{defJ2}), we have:}
    \sum_{a \in \actions{J}} \alpha(a) \cdot \Bigg(\sum_{K\in\texttt{Nxt}_i^a(J)}\proba^a_{\tau} (K) \cdot \mathbb{E}^{K}_{\bprofile} (U_i)\Bigg) & > \mathbb{E}_\bprofile^J(U_i)\notag\\
    \intertext{We can contract the inner term as an expectation under $\bprofile$ starting from $a$:}
    \sum_{a \in \actions{J}} \alpha(a) \cdot \mathbb{E}^{a}_{\bprofile} (U_i) & > \mathbb{E}_\bprofile^J(U_i)\notag\\
    \intertext{As the $\alpha(a)$ sum up to 1:}
    \sum_{a \in \actions{J}} \alpha(a) \cdot \mathbb{E}^{a}_{\bprofile} (U_i) & > \sum_{a \in \actions{J}} \alpha(a) \cdot \mathbb{E}_\bprofile^J(U_i)\notag\\
    \sum_{a \in \actions{J}} \alpha(a) \cdot \left( \mathbb{E}^{a}_{\bprofile} (U_i) - \mathbb{E}_\bprofile^J(U_i)\right)& > 0\notag\\
    \intertext{By definition of $\delta$:}
            \sum_{a \in \actions{J}} \alpha(a) \cdot \delta(\bprofile, a) & > 0\notag\\
    \end{align}

    Therefore, there is an action $a \in \actions{J}$ such that $\delta(\bprofile, a) > 0$. By \Cref{prop:negative delta}, $\bprofile$ is not a fix-point of $f$, and Proposition~\ref{prop:FixedPointIsNash} ensues.
    \end{proof}

\begin{corollary}
\label{cor:AnyNFCEisFixP}
    The \Any problem for \NFCE in extensive-form games is in \fixp.
\end{corollary}

\end{document}